\setlist{nolistsep}
\newcommand*{\rom}
[1]{\expandafter\@slowromancap\romannumeral #1@}
\DeclareMathOperator*{\argmin}{arg\min}
\DeclareMathOperator*{\defeq}{ \overset{\text{def}}{=}}
\newtheorem{thm}{Theorem}
\newtheorem{lemma}{Lemma}
\newcommand{\bS}{\bar{S}}
\newcommand{\Ex}{\mathbb{E}}
\newcommand{\fx}{\mathsf{f}}
\begin{document}

\title{\textbf{Nonparametric Quantile Regressions for Panel Data Models with Large $T$} }
\author{Liang Chen\thanks{Email: chen.liang@mail.shufe.edu.cn. Financial support from the National Natural Science Foundation
of China (Grant No. 71703089) is gratefully acknowledged.}}

\affil{School of Economics, Shanghai University of Finance and Economics}

\maketitle

\begin{abstract}
This paper considers panel data models where the conditional quantiles of the dependent variables are additively separable as unknown functions of the regressors and the individual effects. We propose two estimators of the quantile partial effects while controlling for the individual heterogeneity. The first estimator is based on local linear quantile regressions, and the second is based on local linear smoothed quantile regressions, both of which are easy to compute in practice. Within the large $T$ framework, we provide sufficient conditions under which the two estimators are shown to be asymptotically normally distributed. In particular, for the first estimator, it is shown that $N\ll T^{\frac{2}{d+4}}$ is needed to ignore the incidental-parameter biases, where $d$ is the dimension of the regressors. For the second estimator, we are able to derive the analytical expression of the asymptotic biases under the assumption that $N\asymp Th^d$, where $h$ is the bandwidth parameter in local linear approximations. Our theoretical results provide the basis of using split-panel jackknife for bias corrections. A Monte Carlo simulation shows that the proposed estimators and the bias-correction method perform well in finite samples.

\vspace{0.2cm}
\noindent\textbf{Keywords}: Panel data, quantile regressions, incidental parameters, jackknife, bias correction.

\noindent\textbf{JEL codes}: C14, C31, C33.
\end{abstract}

\newpage
\section{Introduction}
This paper studies the estimation of nonparametric quantile panel data models. To facilitate the discussion, consider the following model:
\begin{equation}\label{model1}
Y_{it} =Q(X_{it}, \alpha_i, \epsilon_{it}), \text{ for } i=1,\ldots,N; t=1,\ldots,T,
\end{equation}
where $Y_{it}\in\mathbb{R}$ is the observed dependent variable, $X_{it}\in \mathcal{X} \subset \mathbb{R}^d$ is the observed regressors, $\alpha_i\in\mathbb{R}$ is the unobserved individual effect representing individual heterogeneity, and $\epsilon_{it}| (X_{it},\alpha_i)\sim \mathcal{U}(0,1) $. Similar models have also been studied by \citet{altonji2005cross} and \citet{chernozhukov2013average} under different assumptions. Assuming that the mapping $\tau \mapsto Q(x, a, \tau)$ is strictly increasing for almost all $(x,a)$ in the support of $(X_{it},\alpha_i)$, then almost surely,
\[ \mathsf{Q}_{Y_{it}}[\tau|X_{it}=x,\alpha_i=a] = Q(x, a, \tau) \overset{\text{def}}{=}  Q_{\tau}(x,a),\]
where $\mathsf{Q}_{Y_{it}}[\tau|\cdot]$ denotes the $\tau$-th conditional quantile of $Y_{it}$.
Our main object of interest is the \textit{quantile partial effects} (QPE, hereafter)  of $X_{it}$ on $Y_{it}$ while controlling for the individual effects, i.e., $\partial Q_{\tau}(x,a)/\partial x$ for $\tau\in(0,1)$.

Recent development in the literature of quantile panel data models with large $T$, including \citet{koenker2004quantile}, \citet{lamarche2010robust}, \citet{galvao2010penalized}, \citet{galvao2011quantile}, \citet{canay2011simple}, \citet{kato2012asymptotics} and \citet{galvao2016smoothed}, has mainly focused on the linear models where $Q_{\tau}(x,a) = \beta(\tau)'x + \lambda_\tau (a)$. This linearity specification for $Q_{\tau}(x,a)$ is convenient for constructing estimators of the QPE based on quantile regressions and analyzing their asymptotic properties, but it entails two possibly strong restrictions. First, in these models, $\partial Q_{\tau}(x,a)/\partial x =\beta(\tau)$, i.e., the QPE is homogeneous across $x$ and $a$. Second, the linearity assumption on $Q_{\tau}(x,a)$ usually impose strong restrictions on the regressors. For example, consider location-scale shifting models: $Y_{it} =\beta' X_{it} + \alpha_i +  g(X_{it}) \cdot \epsilon_{it}$, where $\epsilon_{it}$ is independent of $(X_{it},\alpha_i)$. In order to have $Q_{\tau}(x,a)$ linear in $x$ for all $\tau$, we need $g(x) = \gamma' x >0 $ for some $\gamma \in \mathbb{R}^d$ and almost all $x$ in the support of $X_{it}$. Thus, for $d=1$, $X_{it}$ must be positive almost surely if $\gamma>0$.

To overcome the limitations of the linearity assumption, in this paper, we consider the following more general specification:
\begin{equation}\label{model2}
Q_{\tau}(x,a) = q_{\tau}(x) + \lambda_\tau (a),
\end{equation}
which is a separable \textit{nonparametric} model in the sense that $q_{\tau}$ and $\lambda_\tau$ are both unknown functions. In this case,
\[ \partial Q_{\tau}(x,a)/\partial x = \partial q_{\tau}(x)/\partial x  \overset{\text{def}}{=}  \beta_\tau(x).\]
Thus, the QPE is allowed to be heterogeneous across $x$. Two estimators of $\beta_\tau(x)$ are proposed. The first one is based on \textit{local linear quantile regressions} (LLQR, hereafter) and the second one is based on \textit{local linear smoothed quantile regressions} (LLSQR, hereafter). The main advantage of the proposed estimators is that computationally, they are as efficient as the estimators of \citet{kato2012asymptotics} and \citet{galvao2016smoothed} for linear quantile panel models.

Despite being computationally simple, analyzing the asymptotic properties of the LLQR estimator and the LLSQR estimator in the large $T$ framework is a nontrivial task, mainly due to the well-known problem of ``incidental parameters"  --- see and \citet{lancaster2000incidental}, \citet{hahn2004jackknife} and \citet{fernandez2018fixed}. Another major contribution of this paper is that it provides a set of regularity conditions under which the proposed estimators are shown to be asymptotically normally distributed. In particular, for the LLQR estimator, the incidental-parameter biases are hard to characterize (see the discussions of \citealt{kato2012asymptotics}) and we need $N\ll T^{\frac{2}{d+4}}$ to ignore the asymptotic biases. On the other hand, under the assumption that $N\asymp Th^d$ ($h$ is the bandwidth parameter in the local linear regression), we are able to derive the asymptotic bias of the LLSQR estimator for the boundary points of $\mathcal{X}$. Interestingly, the LLSQR estimator for the interior points of $\mathcal{X}$ are shown to be free of asymptotic bias. Moreover, our asymptotic analysis provides the theoretical basis of using split-panel jackknife (see \citealt{dhaene2015split}) for bias corrections.

\vspace{0.3cm}
\noindent \textbf{Other Related Literature}

As pointed out by \citet{arellano2011nonlinear}, the identification of nonlinear panel data models with fixed $T$ is a nontrivial problem. Similarly, in the ``small $T$'' framework, the identification of $\partial Q_{\tau}(x,a)/\partial x$ is not straightforward. Invoking the result of \citet{hu2008instrumental}, one can show that for $T=3$, if $\epsilon_{i1}, \epsilon_{i2}$ and $\epsilon_{i3}$ are mutually independent conditional on $X_i\overset{\text{def}}{=}  (X_{i1},\ldots,X_{iT})'$ and some other high level conditions are satisfied, the general model \eqref{model1} is nonparametrically identified, i.e., all the conditional densities $\fx_{Y_{it}|X_i,\alpha_i}$ for $t=1,2,3$ and $\fx_{\alpha_i|X_i}$ are identified  (see Proposition 2.1 of \citealt{arellano2016nonlinear}). Given this result, the identification of $\partial Q_{\tau}(x,a)/\partial x$ follows easily. \citet{evdokimov2010identification} considers a separable model where $Q(X_{it}, \alpha_i, \epsilon_{it}) =m(X_{it}, \alpha_i) +U_{it}$ and $U_{it} \overset{\text{def}}{=}U(X_{it},\epsilon_{it} )$. In this model, $Q_{\tau}(x,a) = m(x,a)+  \mathsf{Q}_{U_{it}}[\tau|X_{it}=x] $. For $T=2$, \citet{evdokimov2010identification} provides sufficient conditions for the identification of $m(x,a)$ and $\fx_{U_{it} |X_{it}}$, which implies the identification of $\partial Q_{\tau}(x,a)/\partial x$. \citet{yan2018nonparametric} considers a similar model with $Q(X_{it}, \alpha_i, \epsilon_{it}) =m(X_{it})+\alpha_i +\sigma(X_{it}) \epsilon_{it} $. They propose a multiple-step estimator of the conditional quantile function: $m(x)+\sigma(x) \mathsf{Q}_{\epsilon}(\tau) $, but no asymptotic theory was provided for this estimator. Moreover, varying-coefficients quantile panel models where $Q_{\tau}(x,a) = \beta_\tau(x_2)'x_1+a$ and $x=(x_1',x_2')'$ is studied by \citet{su2016sieve} and \citet{cai2018semiparametric}.

The identification of the \textit{quantile treatment effects} (QTE) in nonseparable panels with fixed $T$ is considered by \citet{chernozhukov2013average} and \citet{chernozhukov2015nonparametric}. Note that the QTE considered in these papers is the derivative of the \textit{quantile structural function}: $Q^{\ast}_{\tau}(x)$, which is defined by $P[Y_{it} \leq Q^{\ast}_{\tau}(x)|X_{it}=x] =\tau $. Therefore, it is obvious that $Q^{\ast}_{\tau}(x)\neq Q_{\tau}(x,a)$, and the QTE is different from the QPE. More recently, \citet{graham2018quantile} considers the case where $ Q^{\ast}_{\tau}(x) = \beta_{\tau}(x)'x$ and focuses on the identification and estimation of the average conditional quantile effects (ACQEs) defined as $\Ex[ \beta_\tau(X_{it})]$.

Last but not least, this paper extends a large literature on nonparametric quantile regressions (see \citealt{chaudhuri1991nonparametric}, \citealt{fan1994robust}, \citealt{yu1998local}, \citealt{honda2000nonparametric}, \citealt{su2009nonparametric}, \citealt{qu2015nonparametric}, etc.) to panel data models with fixed effects.

\vspace{0.3cm}
\noindent \textbf{Structure of the Paper}

The rest of the paper is organized as follows: Section 2 introduces the models and provides some illustrative examples. Section 3 defines the estimators, whose asymptotic properties are established in Section 4. In Section 5, A Monte Carlo simulation is used to evaluate the performance of the proposed estimators and the bias-correction method. Finally, Section 6 concludes. All the proofs are collected in the appendix.

\section{The Model and Some Examples}
Specification \eqref{model2} implies the following panel data model:
\begin{equation} Y_{it} = q_{\tau}(X_{it}) + \lambda_{\tau}(\alpha_i)+ u_{it}(\tau),\end{equation}
where
the error terms satisfy the following quantile restrictions:
\[ P[ u_{it}(\tau)\leq 0|X_{it},\alpha_i] = \tau.\]
It follows that the conditional quantile of the outcomes $Y_{it}$ given the observed covariates $X_{it}$ and the individual effect $\alpha_i$ can be written as
\[ \mathsf{Q}_{Y_{it}}[\tau|X_{it}=x,\alpha_i=a] =Q_{\tau}(x,a)= q_{\tau}(x) + \lambda_{\tau}(a),\]
and as discussed in the introduction, our main object of interest is the QPE: $\beta_{\tau}(x)  = \dot{q}_{\tau}(x)$\footnote{To simply the notations we use $\dot{q}_{\tau}(x)$ and $\ddot{q}_{\tau}(x)$ to denote the first and second order derivatives of $q_\tau(\cdot)$ respectively.} for all $x\in\mathcal{X}\subset\mathbb{R}^d$ and all $\tau\in\mathcal{T}$, where $\mathcal{T}$ is a compact subset of $[0,1]$.

Consider the following 3 examples:
\begin{eqnarray*}
 \text{Example 1: }&&Y_{it} = \beta X_{it} + \alpha_i + \sqrt{1+\gamma X_{it}^2} \cdot \epsilon_{it}   \\
 \text{Example 2: }&&Y_{it} = \beta X_{it} +  \alpha_i + \Big( \sqrt{1+\gamma X_{it}^2} + \sqrt{1+\theta  \alpha_i^2} \Big)\cdot \epsilon_{it}  \\
 \text{Example 3: }&&Y_{it} = \beta X_{it} + \alpha_i +X_{it} \alpha_i+ \sqrt{1+\gamma X_{it}^2} \cdot \epsilon_{it}
\end{eqnarray*}
where $\epsilon_{it}$ is independent of $X_{it},\alpha_i $ with quantile function $\mathsf{Q}_{\epsilon}$.
It follows that
\begin{eqnarray*}
 \text{Example 1: }&&Q_{\tau}(x,a) = \underbrace{ \beta x + \sqrt{1+\gamma x^2} \cdot \mathsf{Q}_{\epsilon}(\tau)}_{q_{\tau}(x)}+ \underbrace{a}_{\lambda_{\tau}(a)}.   \\
 \text{Example 2: }&&Q_{\tau}(x,a) = \underbrace{ \beta x + \sqrt{1+\gamma x^2} \cdot \mathsf{Q}_{\epsilon}(\tau) }_{q_{\tau}(x)}+ \underbrace{ a  +\Big( \sqrt{1+\theta a ^2} \Big)\cdot \mathsf{Q}_{\epsilon}(\tau)}_{ \lambda_{\tau}(a) } . \\
 \text{Example 3: }&&Q_{\tau}(x,a)  = \beta x + a +xa+ \sqrt{1+\gamma x^2} \cdot  \mathsf{Q}_{\epsilon}(\tau).
\end{eqnarray*}
 Note that both Example 1 and Example 2 are nested by our model. In particular, the function $\lambda_{\tau}(\cdot)$ in Example 1 are invariant across $\tau \in (0,1)$. Example 3 is not nested by our model, since the conditional quantile function is not additive separable as functions of $x$ and $a$. Our model implies that the QPE is a function of $X_{it}$ only, while in Example 3 the QPE depends on both $X_{it}$ and $\alpha_i$.


\section{The Estimators}
Suppose that we have a random sample of $(Y_{it},X_{it})$ for $i=1,\ldots, N$ and $t=1,\ldots,T$, where the realized values of the individual effects are $(\alpha_{01},\ldots,\alpha_{0N})$. We follow a fixed effects approach, treating $(\lambda_{01,\tau},\ldots,\lambda_{0N,\tau}) \defeq (\lambda_\tau(\alpha_{01}),\ldots,\lambda_\tau(\alpha_{0N}))$ as fixed parameters, and consider the asymptotic framework where both dimensions of the panel data go to infinity, i.e., $N,T\rightarrow \infty$.

Focus on a single point $x \in \mathcal{X}$. Expanding $q_\tau(X_{it})$ around $x$, we have
\[ q_{\tau}(X_{it}) = q_{\tau}(x) +\dot{q}_{\tau}(x)' (X_{it}-x)  + 0.5 (X_{it}-x)' \ddot{q}_{\tau}(x)(X_{it}-x)  + R_{\tau}(x,X_{it})\]
where $R_{\tau}(x,X_{it})$ is the remainder term. It follows that
\[ Y_{it} = \lambda_{0i,\tau} + q_{\tau}(x) +\dot{q}_{\tau}(x)' (X_{it}-x)  + 0.5 (X_{it}-x)' \ddot{q}_{\tau}(x)(X_{it}-x) + R_{\tau}(x,X_{it}) +u_{it}(\tau),  \]
The above representation motivates that following LLQR estimator for $\eta_{0i,\tau}(x) = \lambda_{0i,\tau} + q_\tau(x)$ and $\beta_\tau(x)$:
\[  (\hat{\eta}_{1,\tau}(x),\ldots,\hat{\eta}_{N,\tau}(x), \hat{\beta}_{\tau}(x))=\argmin_{\eta_1,\ldots,\eta_N,\beta} \sum_{i=1}^{N}\sum_{t=1}^{T}\rho_{\tau}(Y_{it} - \eta_i - (X_{it}-x)' \beta)\cdot K\bigg(\frac{X_{it}-x}{h}\bigg)   \]
where $\rho_\tau(u)=(\tau-\mathbf{1}(u\leq 0))u$ is the check function, $K(\cdot)$ is a multivariate kernel function, and $h$ is a bandwidth parameter.
Note that
\begin{eqnarray*}
&&\rho_{\tau}(Y_{it} - \eta_i - (X_{it}-x)' \beta)\cdot K(( X_{it}-x)/h)\\
 &=& [ \tau - \mathbf{1}(Y_{it} \leq \eta_i + (X_{it}-x)' \beta)]\cdot(Y_{it} - \eta_i - (X_{it}-x)' \beta)\cdot K(( X_{it}-x)/h)\\
 &=& [ \tau - \mathbf{1}(\tilde{Y}_{it} \leq \eta_i K_{it} + \tilde{X}_{it}' \beta)] \cdot (\tilde{Y}_{it} - \eta_i K_{it} - \tilde{X}_{it}' \beta) \\
 & = &\rho_{\tau} (\tilde{Y}_{it} - \eta_i K_{it} - \tilde{X}_{it}' \beta ),
\end{eqnarray*}
where $K_{it} =K(( X_{it}-x)/h)$, $\tilde{Y}_{it} = Y_{it}  K_{it}$, and $\tilde{X}_{it} =(X_{it}-x)K_{it} $.
Thus, the LLQR estimator can be easily calculated by running a standard quantile regression of $\tilde{Y}_{it}$ on $\tilde{X}_{it}$ and $N$ additional regressors: $\mathbf{1}(i=1)K_{it}, \ldots, \mathbf{1}(i=N)K_{it}$, therefore it is very computationally efficient.

Inspired by \citet{galvao2016smoothed}, we also consider the following LLSQR estimator:
\begin{multline*}
 (\check{\eta}_{1,\tau}(x),\ldots,\check{\eta}_{N,\tau}(x), \check{\beta}_{\tau}(x))\\ =\argmin_{\eta_1,\ldots,\eta_N ,\beta }  \sum_{i=1}^{N}\sum_{t=1}^{T}\Bigg(\tau - G\Bigg(\frac{Y_{it} - \eta_i - (X_{it}-x)' \beta }{b} \Bigg) \Bigg)(Y_{it} - \eta_i - (X_{it}-x)' \beta)\cdot K\bigg(\frac{X_{it}-x}{h}\bigg),
\end{multline*}
where $G(z) = 1-\int_{-\infty}^{z}g(u)du$, $g(\cdot)$ is a continuously differentiable function with support $[-1,1]$, and $b$ is a bandwidth parameter. The idea of smoothed quantile regression (see \citealt{amemiya1982two} and \citealt{horowitz1998bootstrap}) is to approximate the non-smooth indicator function with a smooth cumulative distribution function.

\section{Asymptotic Results}
Before presenting the asymptotic results, it is useful to define some new notations. Let $x_{\partial}$ be on the boundary of $\mathcal{X}$. The \textit{boundary points} are defined as
\[ x=x_{\partial}+c h  \text{ for some } c\in \text{supp}(K),\]
and the domain for integration is defined as
\[ \mathcal{B} = \{ v\in\mathbb{R}^d: (c+v)h\in \mathcal{X} \} \cap \text{supp}(K).\] Define
\[c_0= \int_{\mathcal{B}} K(u)du, \text{ }\mathcal{C}_1= \int_{\mathcal{B}} uK(u)du, \text{ } \mathcal{C}_2= \int_{\mathcal{B}} uu'K(u)du ,\text{ }
 \mathcal{C}= \mathcal{C}_2-\mathcal{C}_1 \mathcal{C}_1'/c_0,\]
 \[  \bar{\mathcal{C}}_2 =\begin{pmatrix} c_0 &  \mathcal{C}_1' \\ \mathcal{C}_1 & \mathcal{C}_2   \end{pmatrix}, \text{ }  d_0= \int_{\mathcal{B}} K^2(u)du, \text{ }\mathcal{D}_1= \int_{\mathcal{B}} uK^2(u)du, \text{ } \mathcal{K}_1=\int uu'K(u)du , \text{ } \mathcal{K}_2=\int uu'K^2(u)du. \]

\subsection{Asymptotic Distribution of the LLQR Estimator}
Write $u_{it}$ instead of $u_{it}(\tau)$ to simply the notations. Let $B_{\epsilon}$ be a neighbourhood of $0$. We first impose the following assumptions:
\begin{enumerate}
\item[(A1)] $\mathcal{X}$ is compact.
\item[(A2)] $(Y_{it},X_{it})$ are independent of $(Y_{js},X_{js})$ for any $i\neq j$ or $t\neq s$. $(Y_{i1},X_{i2}),\ldots, (Y_{iT},X_{iT})$ are identically distributed for each $i$.
\item[(A3)] Let $f_{u,i}(\cdot|x)$ denote the conditional density of $u_{it}$ given $X_{it}=x$ and let $f_{X,i}(\cdot)$ denote the density of $X_{it}$. There exists $c_2>c_1>0$ such that $c_1<f_{u,i}(0|x)<c_2$ and $c_1<f_{X,i}(x)<c_2$ for all $i$ and all $x \in \mathcal{X}$.
\item[(A4)] Define $f_{u,i}^{(1)} (c|x) =\partial f_{u,i}(c|x)/\partial c$ and $f_{X,i}^{(1)} (x) =\partial f_{X,i}(x)/\partial x$. There exists a $M>0$ such that $| f_{u,i}^{(1)} (c|x)|<M$ for all $c \in B_{\epsilon}$ and $|f_{X,i}^{(1)} (x)|, | \partial f_{u,i}(c|x)/\partial x|<M$ for all $x\in\mathcal{X}$. Moreover, $| \partial^2 q_{\tau}(x)/ \partial x_j \partial x_p| <M$, $|\partial^3 q_{\tau}(x)/ \partial x_j \partial x_p \partial x_h|<M$ for all $j,p,h \leq d$.
\item[(A5)] The kernel function $K$ has bounded support and $\int uK(u)du=0$, $\int u_j u_p u_h K(u)du=0$ for all $j,p,h \leq d$.
\item[(A6)] $c_0>0$, $\mathcal{K}_1>0$, $\bar{\mathcal{C}}_2>0$ and $\mathcal{C}>0$, and that
\[0< \sigma(x)  \defeq \frac{ \lim_{N\rightarrow\infty} N^{-1}\sum_{i=1}^{N}f_{X,i}(x) }{ \lim_{N\rightarrow\infty} [N^{-1}\sum_{i=1}^{N} f_{X,i}(x)  f_{u,i}(0|x)]^2 }  <\infty  \]
for all $x\in\mathcal{X}$.
\item[(A7)] Let $N \asymp T^{c_N}$ and $h \asymp T^{-c_h}$ for some $c_N,c_h>0$. Then $c_N< \frac{2}{d+4}$ and $\frac{1}{d+4}<c_h<\frac{1-2c_N}{d}$.
\end{enumerate}

\vspace{0.3cm}
\noindent{\textbf{Remark 1.1:}} The above assumptions, except (A2) and (A7), are standard in the literature of local linear quantile regressions and quantile regressions. Note that we only need the existence and smoothness of the conditional density of $u_{it}$ given $X_{it}$, thus the estimator is robust to heavy tails and outliers in $u_{it}$.

\vspace{0.3cm}
\noindent{\textbf{Remark 1.2:}} The independence assumption (A2) is also adopted by \citet{kato2012asymptotics} and it excludes time-invariant regressors. This independence assumption can be relaxed to allow for $\beta$-mixing on the time dimension along the line of \citet{galvao2016smoothed} at the cost of much lengthier proofs. Thus, to keep the proofs tractable, (A2) is maintained throughout the paper.

\vspace{0.3cm}
\noindent{\textbf{Remark 1.3:}} Assumption (A7) ensures that $\log N \ll Th^{d+4}$, $N\ll Th^{d+2}$, $NTh^{d+6}\rightarrow 0$, and $N^2 \ll Th^d$. These conditions are needed to prove Theorem 1 below. For example, for $d=2$ and $c_N=1/4$, we can choose $c_h=1/5$. Note that due to the nonparametric nature of our estimator, the condition $N^2 \ll Th^d$ imposed here is stronger than the condition $N^2 \ll T$ required by \citet{kato2012asymptotics}, since the order of the incidental-parameter bias is approximately $(Th^d)^{-3/4}$, while in \citet{kato2012asymptotics} the bias is approximately of order $T^{-3/4}$. Such conditions are hard to justify in practice, this is why we also consider the LLSQR estimator, whose asymptotic distribution can be established under more realistic assumptions about the relative sizes of $N$ and $T$.

\vspace{0.3cm}
The following theorem gives the asymptotic distribution of the LLQR estimator.
\begin{thm} Suppose that Assumptions (A1) to (A7) hold, then:

(i) For any interior point $x$ of $\mathcal{X}$, we have
\[ \sqrt{NTh^{d+2}} \big[ \hat{\beta}_{\tau}(x) - \beta_{\tau}(x)  \big] \overset{d}{\rightarrow} \mathcal{N}\Big(0, \tau(1-\tau) \sigma(x)  \mathcal{K}_1^{-1}\mathcal{K}_2 \mathcal{K}_1^{-1} \Big) .\]
(ii) For any boundary point $x$ of $\mathcal{X}$, i.e., $x=ch$ for some $c>0$ in the support of $K(\cdot)$, we have
\[ \sqrt{NTh^{d+2}} \big[ \hat{\beta}_{\tau}(x) - \beta_{\tau}(x)  - h B^{(1)}\big] \overset{d}{\rightarrow} \mathcal{N}\Big(0, \tau(1-\tau) \sigma(0) \Omega  \Big), \]
where
\[\Omega  =  \mathcal{C}^{-1} \left[ \int_{\mathcal{B}}(u-\mathcal{C}_1/c_0)(u-\mathcal{C}_1/c_0)' K^2(u)du \right] \mathcal{C}^{-1} \text{ and }B^{(1)}=  0.5\mathcal{C}^{-1}\int_{\mathcal{B}} u' \ddot{q}_{\tau}(0)u \left(u-\mathcal{C}_1/c_0\right)K(u)du.\]
\end{thm}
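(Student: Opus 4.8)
The plan is to treat this as a non-smooth convex M-estimation problem with a diverging number of nuisance parameters, following the route of \citet{kato2012asymptotics} but in the kernel-localized setting. First I would reparametrize around the truth. Writing $\eta_{0i,\tau}(x)=\lambda_{0i,\tau}+q_\tau(x)$, I set $w_i=\sqrt{Th^d}(\eta_i-\eta_{0i,\tau}(x))$ and $v=\sqrt{NTh^{d+2}}(\beta-\beta_\tau(x))$, and use the Taylor expansion to write $Y_{it}-\eta_i-(X_{it}-x)'\beta=u_{it}-\delta_{it}$ with
\[ \delta_{it}=\frac{w_i}{\sqrt{Th^d}}+(X_{it}-x)'\frac{v}{\sqrt{NTh^{d+2}}}-r_{it}, \qquad r_{it}=0.5(X_{it}-x)'\ddot{q}_\tau(x)(X_{it}-x)+R_\tau(x,X_{it}). \]
The centered objective $\mathbb{S}_{NT}(w,v)=\sum_{i,t}[\rho_\tau(u_{it}-\delta_{it})-\rho_\tau(u_{it})]K_{it}$ is convex in $(w,v)$, so by a convexity argument it suffices to find the limit of $\mathbb{S}_{NT}$ and read off its minimizer.

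Next I would apply Knight's identity $\rho_\tau(u-\delta)-\rho_\tau(u)=-\delta\psi_\tau(u)+\int_0^\delta[\mathbf{1}(u\leq s)-\mathbf{1}(u\leq0)]\,ds$ with $\psi_\tau(u)=\tau-\mathbf{1}(u\leq0)$, splitting $\mathbb{S}_{NT}$ into a linear score part and a nonlinear curvature part. The $v$-score $S_{NT}=(NTh^{d+2})^{-1/2}\sum_{i,t}(X_{it}-x)\psi_\tau(u_{it})K_{it}$ is a sum of conditionally mean-zero terms since $\Ex[\psi_\tau(u_{it})|X_{it}]=0$; under (A2) a Lindeberg CLT together with the kernel substitution $X_{it}-x=hu$ and $\Ex[\psi_\tau^2|X]=\tau(1-\tau)$ gives asymptotic normality with variance $\tau(1-\tau)(\lim N^{-1}\sum_i f_{X,i}(x))\mathcal{K}_2$. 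Taking conditional expectations given the regressors and using (A3)--(A4), the curvature part concentrates at the deterministic quadratic form $\tfrac12\sum_{i,t}f_{u,i}(0|X_{it})\delta_{it}^2K_{it}$, whose blocks I evaluate by the same kernel integration.

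The crux is the profiling and the bias bookkeeping, and this is where the interior and boundary cases diverge. Expanding the quadratic form, the $w_iv$ cross-block carries $\int uK(u)\,du$ and the $r_{it}$-by-$v$ cross-term carries $\int u_ju_pu_hK(u)\,du$; both vanish in the interior by (A5). Hence $w$ and $v$ decouple, the Taylor bias cancels, and minimizing over $v$ yields $\hat v\to H^{-1}S_{NT}$ with $H=(\lim N^{-1}\sum_i f_{u,i}(0|x)f_{X,i}(x))\mathcal{K}_1$, so the sandwich $H^{-1}\mathrm{Var}(S_{NT})H^{-1}$ equals $\tau(1-\tau)\sigma(x)\mathcal{K}_1^{-1}\mathcal{K}_2\mathcal{K}_1^{-1}$ with $\sigma(x)$ exactly the ratio in (A6), proving part (i). On the boundary, integration runs over $\mathcal{B}$, the symmetry cancellations fail, and profiling out the scalar $w_i$ from the block $\bar{\mathcal{C}}_2$ replaces $\mathcal{C}_2$ by its Schur complement $\mathcal{C}=\mathcal{C}_2-\mathcal{C}_1\mathcal{C}_1'/c_0$ and the score weight $(X_{it}-x)$ by its residual $(u-\mathcal{C}_1/c_0)$; this produces the sandwich $\Omega$ and the surviving Taylor bias $hB^{(1)}$ of part (ii).

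The step I expect to be hardest is controlling the incidental-parameter contribution uniformly over the $N$ growing nuisance parameters. Because the objective is non-smooth I cannot simply Taylor-expand the first-order conditions; instead I would first establish a uniform-in-$i$ rate $\hat\eta_{i,\tau}(x)-\eta_{0i,\tau}(x)=O_p((Th^d)^{-1/2})$ via a maximal inequality applied to the per-individual kernel-weighted scores, and then bound the higher-order interaction terms in the resulting Bahadur representation for $\hat v$. Mirroring \citet{kato2012asymptotics} with the effective sample size $Th^d$ playing the role of their $T$, the leading incidental-parameter bias is of order $(Th^d)^{-3/4}$, so its contribution after rescaling is of order $N^{1/2}(Th^d)^{-1/4}$, which vanishes precisely when $N^2\ll Th^d$; the condition $NTh^{d+6}\to0$ is what drives the $O(h^2)$ smoothing bias below the $\sqrt{NTh^{d+2}}$ scale (at boundary points after subtracting $hB^{(1)}$), while $\log N\ll Th^{d+4}$ and $N\ll Th^{d+2}$ make the uniform Knight remainder $o_p(1)$. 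Verifying these bounds term by term, as collected in Remark~1.3, is the technical heart of the argument.
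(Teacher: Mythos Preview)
Your proposal is correct and reaches the same Bahadur representation as the paper, but the route is genuinely different in its organization. You open with Knight's identity and the convexity-lemma program, decomposing the recentered objective $\mathbb{S}_{NT}(w,v)$ into a linear score and a curvature integral and reading off the minimizer from its quadratic limit. The paper instead never writes Knight's identity: it expands the \emph{population} scores $\bar{S}_i^{\phi}(\hat{\theta}_i)$ and $\bar{S}_i^{\eta_i}(\hat{\theta}_i)$ at the estimator (where everything is smooth), and then relates the population score at $\hat{\theta}$ to the sample score at $\theta_0$ through empirical-process stochastic equicontinuity terms (the $A_{NT}$, $B_{NT}^{(1)}$, $B_N^{(2)}$ pieces in Step~2 of Lemma~3), controlled via a VC-subgraph bound borrowed from \citet{kato2012asymptotics}. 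The profiling that you do via the Schur complement of $\bar{\mathcal{C}}_2$ is done in the paper by solving the expanded $\bar{S}_i^{\eta_i}$ equation for $\hat{\eta}_i-\eta_{0i}$ and substituting into the $\bar{S}_i^{\phi}$ equation, which of course produces the same $\mathcal{C}$ and the same residualized weight $(u-\mathcal{C}_1/c_0)$.

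What each route buys: your Knight-identity framing makes the sandwich variance and the boundary Schur-complement structure visible from the outset, and the interior cancellations drop out of the kernel-moment conditions in (A5) with no extra work. The paper's population-score route is arguably more robust to the growing nuisance dimension: because the Taylor expansion is taken on $\bar{S}_i$, which is smooth, the non-smoothness is pushed entirely into the stochastic-equicontinuity remainder, and one never needs the convexity lemma to hold uniformly over an $(N+d)$-dimensional parameter. Your last paragraph correctly anticipates exactly this difficulty and the fix you describe---uniform rate $\max_i|\hat{\eta}_i-\eta_{0i}|=O_p(\sqrt{\log N/(Th^d)})$ via a maximal inequality, then the $(Th^d)^{-3/4}$ bound on the incidental-parameter interaction---is precisely Steps~3 and~4 of the paper's Lemma~3. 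So the technical core is the same; only the scaffolding differs.
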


\vspace{0.3cm}
\noindent{\textbf{Remark 1.4:}} The LLQR estimator suffers from two types of biases: a bias due to the estimation of incidental parameters, and another one due to local linear approximations. As discussed in Remark 1.3, the first bias can be ignored at the expense of a very strong condition: $N\ll T^{\frac{2}{d+4}} $. The term $h B^{(1)}$ is the leading bias term in the local linear approximations (see \citealt{fan1994robust} for example). This bias can be further reduced by using local polynomial regressions. Note that $B^{(1)}=0$ for the interior points, thus the leading bias term for the estimators of the interior points is $O(h^2)$.

\vspace{0.3cm}
\noindent{\textbf{Remark 1.5:}} In general, it is straightforward to construct consistent estimators of the asymptotic variances, since $\mathcal{K}_1,\mathcal{K}_2$ and $\Omega$ only depend on the kernel function $K(\cdot)$, and $\sigma(x)$ can be consistently estimated using standard nonparametric methods. In particular, if the distribution of $(u_{it},X_{it})$ are identical across $i$, i.e., $ f_{u,i}(0|x) = f_{u}(0|x)$ and $f_{X,i}(x)=f_{X}(x)$ for all $i$, then $ \sigma(x)  = \left( f_{X}(x) \right)^{-1}\cdot  \left(   f_{u}(0|x) \right)^{-2}$. 

\subsection{Asymptotic Distribution of the LLSQR Estimator}

We impose the following assumptions:
\begin{enumerate}
\item[(B1)] Assumptions (A1) to (A6) hold.
\item[(B2)] $f_{u,i}(c|x)$ is $m+2$ times continuously differentiable in $c$ for all $x\in\mathcal{X}$. Let $f_{u,i}^{(j)}(c|x) = \partial^j f_{u,i}(c|x)/ \partial c^j$, then there exists some $M$ such that $|f_{u,i}^{(j)}(c|x)| <M$ for all $j\leq m+2$.
\item[(B3)] $g(v)$ is a symmetric function with support $[-1,1]$ and $\int g(v)dv=1$. For some positive integer $m\geq 4$, $\int v^j g(v)dv=0$ for $j=1,\ldots,m-1$, and $\int v^m g(v)dv<\infty$.
\item[(B4)] $N/(Th^d)\rightarrow \kappa^2$ for some $\kappa>0$. $ h \asymp T^{-c_h}$ and $ b \asymp T^{-c_b}$ for some $c_h,c_b>0$ that satisfy
\begin{equation}\label{B4_1} c_h \in \left( \frac{1}{d+3},\frac{1}{d+1}\right) \cup \left(\frac{c_b}{2},\frac{mc_b}{2} \right) \cup \left(\frac{1-mc_b}{d},\frac{1-3c_b}{m} \right) \end{equation}
\begin{equation}\label{B4_2} \frac{2}{m(2+d)}<c_b <\frac{2}{d+6}.\end{equation}
\end{enumerate}

\vspace{0.3cm}
\noindent{\textbf{Remark 2.1:}} Assumptions (B2) and (B3) are also imposed in \citet{galvao2016smoothed}. In particular, we need $g(\cdot)$ to be a fourth (or higher) order kernel function. Assumption (B4) is new. Condition \eqref{B4_1} implies that $Th^{d+1}\rightarrow\infty$, $Th^{d+3}\rightarrow0$, $Th^db^3\rightarrow\infty$, $Th^db^m\rightarrow0$ and $ b^m \ll h^2 \ll b$. These conditions will be used in the proof of Theorem 2. Moreover, $m\geq4$ and \eqref{B4_2} ensure that $c_h$ lies in a non-empty set. For example, for $m=4$ and $d=2$, one can choose $c_b=1/6$ and $c_h\in(1/5,1/4)$.

\vspace{0.3cm}
The following theorem gives the asymptotic distribution of the LLSQR estimator.
\begin{thm} Suppose that Assumptions (B1) to (B4) hold, then:

(i) For any interior point $x$ of $\mathcal{X}$, we have
\[ \sqrt{NTh^{d+2}} \big[ \check{\beta}_{\tau}(x) - \beta_{\tau}(x)  \big] \overset{d}{\rightarrow} \mathcal{N}\Big(0, \tau(1-\tau) \sigma(x)  \mathcal{K}_1^{-1}\mathcal{K}_2 \mathcal{K}_1^{-1} \Big) .\]
(ii) For any boundary point $x$ of $\mathcal{X}$, i.e., $x=ch$ for some $c>0$ in the support of $K(\cdot)$, we have
\[ \sqrt{NTh^{d+2}} \big[ \check{\beta}_{\tau}(x) - \beta_{\tau}(x)  - h B^{(1)}\big] \overset{d}{\rightarrow} \mathcal{N}\Big(\kappa B^{(2)}, \tau(1-\tau) \sigma(0) \Omega  \Big),\]
where
\[ B^{(2)}= - \frac{\tau-1/2}{\lim_{N\rightarrow\infty} N^{-1}\sum_{i=1}^{N} f_{X,i}(x)  f_{u,i}(0|x) } \cdot \mathcal{C}^{-1}(\mathcal{D}_1/c_0 -\mathcal{C}_1 d_0/c_0^2 ).\]
\end{thm}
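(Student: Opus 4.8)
The plan is to exploit the smoothness of the LLSQR criterion, which lets me replace the empirical-process and convexity arguments required for the non-smooth LLQR estimator by ordinary Taylor expansions of the estimating equations; the price is that, because (B4) forces $N\asymp Th^{d}$, the incidental-parameter bias no longer vanishes and must be computed exactly. Write $r_{it}(\eta_i,\beta)=Y_{it}-\eta_i-(X_{it}-x)'\beta$ and $\phi(r)=(\tau-G(r/b))\,r$, so that the first-order conditions are $\sum_{t}\phi'(r_{it})K_{it}=0$ for each $i$ and $\sum_{i,t}\phi'(r_{it})(X_{it}-x)K_{it}=0$, where $\phi'(r)=\tau-G(r/b)+(r/b)g(r/b)$. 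I would begin by recording the conditional moments of $\phi',\phi'',\phi'''$ at the true residual $u_{it}$. Using $G'=-g$, the symmetry of $g$, and the fact that $g$ is an $m$-th order kernel, $\Ex[\phi'(u_{it})\mid X_{it}]=O(b^{m})$, $\Ex[\phi''(u_{it})\mid X_{it}=x']\to f_{u,i}(0|x')$, and, via $\phi'\phi''=\tfrac12\big((\phi')^{2}\big)'$ and integration by parts, $\Ex[\phi'(u_{it})\phi''(u_{it})\mid X_{it}=x']\to(\tau-\tfrac12)f_{u,i}(0|x')$. This last identity is the origin of the $\tau-\tfrac12$ factor in $B^{(2)}$, and it is the cleanest way to see why the bias survives.

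Next I would establish consistency and a uniform-in-$i$ rate for the profiled fixed effects: solving the $i$-th FOC for $\check\eta_{i,\tau}(x)$ at a given $\beta$ and expanding shows $\check\eta_{i,\tau}(x)-\eta_{0i,\tau}(x)=O_p((\log N/(Th^{d}))^{1/2})$ uniformly over $i$, with leading term $(\sum_{t}\phi'(u_{it})K_{it})/(\sum_{t}\phi''(u_{it})K_{it})$. Substituting the profiled $\check\eta$ into the $\beta$-FOC and linearising gives a representation for $\check\beta_\tau(x)-\beta_\tau(x)$; because each $\eta_i$ enters only the terms of individual $i$, the Hessian is block-diagonal in the $\eta$-coordinates and the profiling is explicit. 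The implicit-function derivative $\partial\check\eta_i/\partial\beta$ replaces $(X_{it}-x)$ by its kernel-weighted within-individual deviation, which in the local coordinate $u=(X_{it}-x)/h$ becomes $u-\mathcal{C}_1/c_0$; this is what produces the Schur complement $\mathcal{C}=\mathcal{C}_2-\mathcal{C}_1\mathcal{C}_1'/c_0$ as the effective $\beta$-Hessian and the centred form $u-\mathcal{C}_1/c_0$ throughout. The rate conditions in (B4) --- in particular $Th^{d}b^{3}\to\infty$, $Th^{d}b^{m}\to0$ and $b^{m}\ll h^{2}\ll b$ --- are exactly what I would use to force the higher-order remainders of these expansions to be $o_p((NTh^{d+2})^{-1/2})$.

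The third and central step is to collect the biases entering $\check\beta_\tau(x)-\beta_\tau(x)$. The curvature term $\tfrac12(X_{it}-x)'\ddot{q}_{\tau}(x)(X_{it}-x)$ together with the remainder $R_\tau$ reproduces the local-linear bias $hB^{(1)}$ exactly as in Theorem 1, and the replacement of the indicator by $G$ adds a smoothing bias of order $b^{m}$ that is negligible under (B4). The incidental-parameter bias is obtained by substituting $\check\eta_{i,\tau}(x)-\eta_{0i,\tau}(x)\approx(\sum_{t}\phi'(u_{it})K_{it})/(T\,\Ex[\phi''(u)K])$ into the second-order expansion of the $\beta$-FOC. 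Three effects then contribute at the leading order: the self-covariance term $\Ex[\phi'(u)\phi''(u)(X-x)K^{2}]$, the bias of $\check\eta_i$ coming from $\mathrm{Cov}\big(\sum_t\phi'(u_{it})K_{it},\sum_t\phi''(u_{it})K_{it}\big)$, and the quadratic term in $\tfrac12\phi'''$. Using the moment identities above, the $\phi'''$ term and the $\phi'''$-part of the bias of $\check\eta_i$ both carry a factor $f_{u,i}^{(1)}(0|x)$ and cancel, leaving only the $(\tau-\tfrac12)$ contributions; the kernel integrals over $\mathcal{B}$ turn $(X-x)K^{2}$ and $K$ into $\mathcal{D}_1,d_0,\mathcal{C}_1,c_0$, and after normalising by $\mathcal{C}$ and by $\lim_{N}N^{-1}\sum_i f_{X,i}(x)f_{u,i}(0|x)$ one recovers $\mathcal{C}^{-1}(\mathcal{D}_1/c_0-\mathcal{C}_1 d_0/c_0^{2})$. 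The factor $\sqrt{NTh^{d+2}}/(Th^{d+1})\to\kappa$ then converts this $O((Th^{d+1})^{-1})$ bias into the stated mean shift $\kappa B^{(2)}$. For an interior point the symmetry of $K$ gives $\mathcal{C}_1=\mathcal{D}_1=0$, so $B^{(2)}=0$ and the estimator is bias-free, which is part (i).

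Finally, the leading stochastic term is the inverse limiting $\beta$-Hessian times a normalised sum of the partialled scores $\phi'(u_{it})\big((X_{it}-x)-h\mathcal{C}_1/c_0\big)K_{it}$, which are independent across $(i,t)$ by (A2) and mean-zero up to the $O(b^{m})$ smoothing bias. A triangular-array Lyapunov CLT then yields asymptotic normality, with variance assembled from $\Var(\phi'(u_{it})\mid X)\to\tau(1-\tau)$ and the kernel second-moment integrals, giving $\tau(1-\tau)\sigma(x)\mathcal{K}_1^{-1}\mathcal{K}_2\mathcal{K}_1^{-1}$ in the interior and $\tau(1-\tau)\sigma(0)\Omega$ on the boundary. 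I expect the main obstacle to be precisely the incidental-parameter step: since $N\asymp Th^{d}$ the bias does not disappear, so it must be pinned down to leading order, which requires controlling the $N$ profiled fixed effects uniformly and verifying the cancellation of the $f_{u,i}^{(1)}(0|x)$ terms so that only the clean $(\tau-\tfrac12)$ contribution remains --- with all cross-individual, cross-time, and $\phi'''$ remainders shown to be $o_p((NTh^{d+2})^{-1/2})$ under (B4).
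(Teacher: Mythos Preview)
Your proposal is correct and follows essentially the same route as the paper: Taylor-expand the smooth first-order conditions (the paper's \eqref{A24}--\eqref{A25}), profile out the $\eta_i$ to obtain the Schur-complement Hessian $\mathcal{C}$, establish the uniform rate $\max_i|\check\eta_i-\eta_{0i}|=O_P((\log N/(Th^d))^{1/2})$ (the paper's Lemma~8), isolate the incidental-parameter bias as the cross term between the partialled second derivatives and the score, and finish with Lyapunov's CLT. Your moment identity $\Ex[\phi'(u)\phi''(u)\mid X]\to(\tau-\tfrac12)f_{u,i}(0|x)$ is exactly the computation driving the paper's Lemma~10, and your observation that the $\phi'''$ contributions carry $f_{u,i}^{(1)}(0|x)$ and cancel through the partialling $S_{T,i}^{\ast\phi\eta_i\eta_i}-(\mathcal{C}_1/c_0)S_{T,i}^{\ast\eta_i\eta_i\eta_i}$ is precisely the cancellation in the paper's Lemma~9.

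The only organisational difference is bookkeeping: the paper packages the bias as $N^{-1}\sum_i[S_{T,i}^{\ast\phi\eta_i}-\mathcal{C}_1 S_{T,i}^{\ast\eta_i\eta_i}/c_0]\,S_{T,i}^{\ast\eta_i}/(c_0 f_i(0))$ and then computes its mean and shows its variance is $o((NTh^d)^{-1})$ in a separate lemma, whereas you describe the same object as a sum of ``three effects'' and argue the cancellation directly. Either slicing works; just be sure, when you write it out, to also verify that the \emph{variance} of this bias term is negligible (the paper's Step~2 of Lemma~10), not only its mean, since $N\asymp Th^d$ means the averaging over $i$ alone does not give enough decay.
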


\vspace{0.3cm}
\noindent{\textbf{Remark 2.2:}} It can be seen that the asymptotic distributions of the LLSQR estimators and the LLQR estimators are very similar, with one noticeable difference: the LLSQR estimator for the boundary points suffers from an asymptotic bias: $\kappa B^{(2)}$, which is the consequence of estimating incidental parameters. In the proof of Theorem 2, it is found that the incidental-parameter bias of the LLSQR estimator for the boundary points is of order $(Th^d)^{-1}$ rather than $T^{-1}$ --- this is why we need $N\asymp Th^d$ to derive the analytical expression of the asymptotic bias. Interestingly, the LLSQR estimators for the boundary points at $\tau=0.5$ and the LLSQR estimators for the interior points at all $\tau$s are all free of asymptotic biases. These findings are further confirmed by a Monte Carlo simulation in Section 5.

\vspace{0.3cm}
\noindent{\textbf{Remark 2.3:}} Theorem 2 provides the theoretical basis for bias corrections using the split-panel jackknife method proposed by  \citet{dhaene2015split}. In particular, divide the whole sample into two subsamples: $(Y_{it},X_{it})$ for $i=1,\ldots,N; t=1,\ldots, T/2$, and $(Y_{it},X_{it})$ for $i=1,\ldots,N; t=T/2+1,\ldots, T$, and let $\check{\beta}_{\tau,1}(x), \check{\beta}_{\tau,2}(x)$ denote the LLSQR estimators using the two subsamples respectively\footnote{The bandwidth parameter $h$ should be the same for $\check{\beta}_{\tau}(x)$, $\check{\beta}_{\tau,1}(x)$ and $\check{\beta}_{\tau,2}(x)$.}. The bias-corrected estimator is simply given by
\[ \check{\beta}_{\tau}^{bc}(x) = 2\check{\beta}_{\tau}(x) - 0.5 [\check{\beta}_{\tau,1}(x)+\check{\beta}_{\tau,2}(x)].  \]
Under Assumptions (B1) to (B4) we can show that for the boundary points,
\[  \sqrt{NTh^{d+2}} \big[ \check{\beta}_{\tau}^{bc}(x) - \beta_{\tau}(x)  - h B^{(1)}\big] \overset{d}{\rightarrow} \mathcal{N}\Big(0, \tau(1-\tau) \sigma(0) \Omega  \Big).   \]

\vspace{0.3cm}
\noindent{\textbf{Remark 2.4:}} Assumption (B4) requires that $N\asymp Th^d$, which is much less restrictive than Assumption (A6) which imposes $N \ll \sqrt{Th^d}$. As discussed in Remark 2.1, for $d=2$, Assumption (B4) admits the choice: $c_h=1/4.5$ and therefore $N\asymp T^{5/9}$. Thus, given the nonparametric nature of the problem, Assumption (B4) is still more stringent than the usual assumption $N\asymp T$ imposed for nonlinear fixed-effects estimators (see  \citealt{hahn2004jackknife} and \citealt{fernandez2018fixed}).

\section{A Monte Carlo Simulation}
In this section, we evaluate the performance of the proposed estimators in finite samples using the following data generating process (DGP):
\[ Y_{it} = \beta X_{it} + \alpha_i + \sqrt{ 1+X_{it}^2} \cdot \epsilon_{it}, \]
where $X_{it}\sim i.i.d \text{ } \mathcal{N}(0,1)\cdot \mathbf{1}\{|X_{it}|\leq 2\}$, $\alpha_i\sim i.i.d \text{ } \mathcal{N}(0,1)$. It is easy to see that $\beta_{\tau}(x) = 1+ \mathsf{Q}_{\epsilon}(\tau)\cdot x/\sqrt{1+x^2}$, where $\epsilon_{it}$ are i.i.d with quantile function $\mathsf{Q}_{\epsilon}$. We consider two different distributions of $\epsilon_{it} $: (i) $\mathcal{N}(0,1)$, and (ii) t distribution with 3 degrees of freedom, and compare the biases and mean-square errors (MSEs) of four different estimators: the LLQR estimator $\hat{\beta}_{\tau}$, the LLSQR estimator $\check{\beta}_{\tau}$, and the bias-corrected versions of these two estimators, denoted as $\hat{\beta}_{\tau}^{bc}$ and $\check{\beta}_{\tau}^{bc}$ respectively.

To same space, we only report the results for $N=T=100$, $\tau=0.25,0.5,0.75$ and $x=-2,-1.6,\ldots,1.6,2$. For all estimators, we choose $h=0.8$, and for the LLSQR estimators, we choose $b=0.5$ and consider the following fourth-order kernel function:
\[ k(u)=\frac{105}{64}\left(1-5 u^{2}+7 u^{4}-3 u^{6}\right) 1(|u| \leq 1). \]
We have also tried other bandwidth values and find that the results is more sensitive to the choice of $h$ than the choice of $b$.

Table 1 reports the results for $\tau=0.25$ and $\epsilon_{it}\sim\mathcal{N}(0,1)$ while Table 2 reports the results for $\tau=0.25$ and $\epsilon_{it}\sim T(3)$. The results for $\tau=0.5$ and $\tau=0.75$ are reported in Table 3 to Table 6.

For $\tau=0.25$, four conclusions can be drawn from the results in Tables 1 and 2. (i) The performance of $\hat{\beta}_{\tau}$ and $\check{\beta}_{\tau}$, in terms of biases and MSEs, are very close. (ii) As predicted by our Theorem 2, the bias-correction method significantly reduces the biases of the LLSQR estimators, especially at the boundary points (e.g., $|x|=2,1.6$). Interestingly, the bias-correction method can also effectively reduce the biases of the LLQR estimators. (iii) The bias-corrected estimators for the boundary points have much lower MSEs due to the large decrease in biases. (iv) The performance of the estimators are robust to heavy tails of $\epsilon_{it}$. Similar conclusions are supported by the results for $\tau=0.75$. However, for $\tau=0.5$, the bias correction at the boundary points is not very effective --- this is predicted by Theorem 2, which shows that the LLSQR estimator for the boundary points is free of asymptotic biases at $\tau=0.5$ (see Remark 2.2).

\section{Conclusion}
To the best of our knowledge, this is the first paper that considers nonparametric quantile regressions in the context of large $T$ panels. Our model is additively separable as unknown functions of the regressors and the individual effects, and it allows the QPE to be heterogeneous across individuals. We propose two estimators of the QPE based on local linear approximations, and establish their asymptotic distributions under a set of regularity assumptions. Our theoretical results highlight the importance of incidental-parameter biases and justify the use of convenient jackknife method to correct the asymptotic biases. The good performance of the bias-correction method in finite samples is confirmed using a Monte Carlo simulation.

Like any other nonparametric estimators, the choice of bandwidth is crucial in practice. In this paper we have focused on the theoretical conditions that the bandwidth parameters have to satisfy, but how to choose those bandwidths in practice is an important question that is left for further investigation.

\appendix

\numberwithin{equation}{section}
\newpage

\section{Proofs of The Main Results}
\subsection{Proof of Theorem 1}
To simply the notations, we suppress the dependence of the parameters on $\tau$ and $x$. For example, we write $\eta_{0i},\hat{\eta}_i,\hat{\beta},\beta_0$ instead of $\eta_{0i,\tau}(x),\hat{\eta}_{i,\tau}(x), \hat{\beta}_{\tau}(x),\beta_{\tau}(x)$. Moreover, we first define the following notations: $\phi = h \beta, \hat{\phi}=h\hat{\beta}, \phi_0=h\beta_{\tau}(x)$, $\theta_i = (\eta_i, \phi')'$, $\theta = (\eta_1,\ldots, \eta_N, \phi ' )'$, $\theta_{0i} = (\eta_{0i}, \phi_0')'$, $\theta_0 = (\eta_{01},\ldots, \eta_{0N}, \phi_0')'$, $W_{it} =(1, (X_{it}-x)'/h)'$, $\hat{\theta}_i = (\hat{\eta}_i, \hat{\phi}')'$, $\hat{\theta} = (\hat{\eta}_1,\ldots, \hat{\eta}_N, \hat{\phi}')'$, $\psi_{\tau}(u)=\mathbf{1}\{u\leq 0\}-\tau$,
\[ S_{T,i}(\theta_i) = \frac{1}{T h^d}\sum_{t=1}^{T}\rho_{\tau}(Y_{it} - \theta_i'W_{it})K_{it}, \quad  \bar{S}_{i}(\theta_i) = \Ex[ \rho_{\tau}( (\theta_{0i} - \theta_i)'W_{it}+u_{it})K_{it}/h^d]\]
and $S_{NT}(\theta)  = N^{-1}\sum_{i=1}^{N}S_{T,i}(\theta_i) $, $\bS_{N}(\theta) = N^{-1}\sum_{i=1}^{N} \bar{S}_{i}(\theta_i)$.

\vspace{0.5cm}

\begin{lemma} Under Assumptions A1 to A6, we have $ \|\hat{\phi}-\phi_0\|_1 =o_P(1)$ and $\max_{i\leq N}|\hat{\eta}_i -\eta_{0i}|=o_P(1)$.
\end{lemma}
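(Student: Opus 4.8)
The plan is to treat this as an M-estimation problem with a growing number of nuisance parameters $\{\eta_i\}$, following the strategy of \citet{kato2012asymptotics}: first establish a \emph{uniform quadratic minorant} for the population criterion around $\theta_0$, then show the sampling fluctuations are uniformly dominated by this curvature, and finally upgrade the averaged control to a $\max_i$ statement via a union bound over individuals.

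Step 1 (curvature). I would expand $\bar{S}_i(\theta_i) - \bar{S}_i(\theta_{0i}) = \Ex[(\rho_\tau(u_{it} - v_{it}) - \rho_\tau(u_{it}))K_{it}/h^d]$, where $v_{it} = (\theta_i - \theta_{0i})'W_{it}$, using Knight's identity. The term linear in $\psi_\tau(u_{it})$ has conditional mean zero because $\Ex[\psi_\tau(u_{it})\mid X_{it}] = \p[u_{it}\le 0\mid X_{it}] - \tau = 0$, so only the nonnegative integral remainder survives; writing $F_{u,i}(\cdot\mid x)$ for the conditional c.d.f.\ of $u_{it}$, this remainder equals $\Ex\big[\int_0^{v_{it}}(F_{u,i}(s\mid X_{it}) - F_{u,i}(0\mid X_{it}))\,ds\,K_{it}/h^d\big]$. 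A second-order Taylor expansion in $s$, legitimate by the density smoothness and bounds in (A3)--(A4), turns this into $\tfrac12\,(\theta_i-\theta_{0i})'\,\Ex[f_{u,i}(0\mid X_{it})W_{it}W_{it}'K_{it}/h^d]\,(\theta_i-\theta_{0i})$ plus a uniformly negligible remainder. The central matrix converges to a positive-definite limit --- block-diagonal with blocks proportional to $c_0$ and $\mathcal K_1$ at interior points, and to $\bar{\mathcal C}_2$ at boundary points --- by (A5) together with the positive-definiteness and boundedness conditions in (A3) and (A6). Since all these bounds are uniform in $i$, this yields a constant $c>0$ with $\bar{S}_i(\theta_i) - \bar{S}_i(\theta_{0i}) \ge c\|\theta_i - \theta_{0i}\|^2$ for $\theta_i$ in a fixed neighborhood of $\theta_{0i}$, and convexity of $\rho_\tau$ extends this to a global minorant that is linear far from $\theta_{0i}$.

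Step 2 (basic inequality and deduction). Since $\hat{\theta}$ minimizes $S_{NT}$, $S_{NT}(\hat{\theta})\le S_{NT}(\theta_0)$, which rearranges to
\[ \frac1N\sum_{i=1}^N\big[\bar{S}_i(\hat{\theta}_i) - \bar{S}_i(\theta_{0i})\big] \le \max_{i\le N}\,2\sup_{\theta_i}\big|(S_{T,i}-\bar{S}_i)(\theta_i) - (S_{T,i}-\bar{S}_i)(\theta_{0i})\big| =: \Delta_{NT}. \]
By Step 1 the left side dominates $\tfrac cN\sum_i \|\hat{\theta}_i-\theta_{0i}\|^2$ in the quadratic regime, and since $\hat{\phi}$ enters every $\theta_i$ this is at least $c\|\hat{\phi} - \phi_0\|^2$; hence $\Delta_{NT} = o_P(1)$ gives $\|\hat{\phi}-\phi_0\|_1 = o_P(1)$ directly. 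For the individual effects, once $\hat{\phi}$ is consistent each $\hat{\eta}_i$ solves the one-dimensional convex problem $\min_{\eta_i} S_{T,i}((\eta_i,\hat{\phi}))$, and the $\eta$-direction curvature from Step 1 bounds $|\hat{\eta}_i-\eta_{0i}|$ by a continuous increasing function of the per-$i$ deviation; taking the maximum over $i$ then yields $\max_i|\hat{\eta}_i-\eta_{0i}| = o_P(1)$, again provided $\Delta_{NT}=o_P(1)$.

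The crux is therefore showing $\Delta_{NT}=o_P(1)$, i.e., controlling an empirical process uniformly over a parameter vector whose dimension grows with $N$. For each fixed $i$, $S_{T,i}-\bar{S}_i$ is a $(Th^d)$-normalized average of i.i.d.\ (over $t$, by (A2)), uniformly bounded summands (the kernel has bounded support by (A5) and the densities are bounded by (A3)); a Bernstein/maximal inequality over the compact parameter set, combined with the Lipschitz structure of the integrated check function, bounds each per-$i$ deviation at order $(\log T/(Th^d))^{1/2}$. A union bound over $i=1,\dots,N$ then gives $\Delta_{NT}=O_P((\log(NT)/(Th^d))^{1/2})$, which is $o_P(1)$ under the maintained bandwidth and sample-size conditions (in particular $\log N = o(Th^d)$, implied by the rate restrictions invoked later). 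This growing-dimension uniform bound is the main obstacle; convexity is what makes it tractable, since it lets me replace the unbounded parameter space by a fixed compact neighborhood once preliminary consistency localizes the estimator.
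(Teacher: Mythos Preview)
Your overall strategy---curvature via Knight's identity, empirical-process control via Bernstein plus a covering, and a union bound over $i$---matches the paper's, but your Step~2 has a genuine gap in the localization. As written, $\Delta_{NT}$ takes a supremum over \emph{all} $\theta_i\in\mathbb{R}^{d+1}$, and since $\rho_\tau$ grows linearly this supremum is not $o_P(1)$; your Bernstein argument only works once $\theta_i$ is confined to a fixed compact set. You acknowledge this (``convexity\ldots lets me replace the unbounded parameter space by a fixed compact neighborhood once preliminary consistency localizes the estimator''), but that is circular: preliminary consistency is precisely what you are proving. Projecting the \emph{full} vector $\hat\theta$ onto a ball of radius $\delta$ does not help either, because the shared $\phi$ and the individual $\eta_i$'s have different scales and you cannot simultaneously localize all $\hat\eta_i$ this way.

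The paper sidesteps this with a per-individual trick you are missing: from $S_{NT}(\hat\theta)\le S_{NT}(\theta_0)$ one only gets $S_{T,i}(\hat\theta_i)\le S_{T,i}(\theta_{0i})$ for \emph{some} $i$, and for that single $i$ one projects $\hat\theta_i$ radially onto the boundary of $B_i(\delta)$ to obtain $\tilde\theta_i$; convexity of $S_{T,i}$ then gives $S_{T,i}(\tilde\theta_i)\le S_{T,i}(\theta_{0i})$ with $\tilde\theta_i$ now in the bounded ball, so the empirical-process supremum is taken only over $B_i(\delta)$. Since the identity of this $i$ is unknown, the union bound over $i$ is what forces the $\log N/(Th^{d+1})\to 0$ rate. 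A second, smaller gap: you treat $S_{T,i}-\bar S_i$ as a mean-zero i.i.d.\ average, but $\bar S_i(\theta_i)=\Ex[\rho_\tau((\theta_{0i}-\theta_i)'W_{it}+u_{it})K_{it}/h^d]$ is \emph{not} $\Ex[S_{T,i}(\theta_i)]$ because $Y_{it}-\theta_{0i}'W_{it}=u_{it}+m(X_{it})$ with $m(X_{it})$ the quadratic-plus-remainder approximation error. The paper splits $S_{T,i}-\bar S_i$ into a mean-zero piece $B_{T,i}$ (your Bernstein target) and a Lipschitz-controlled piece $A_{T,i}$ of order $O(h^2)$; you need both.
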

\begin{proof}
For any $\delta>0$, define $B_i(\delta) =\{ \theta_i: |\eta_i-\eta_{0i} |+ \|\phi-\phi_0\|_1\leq \delta\}$. For any $\bar{\theta}_i \in B_i^C(\delta)$, define $\tilde{\theta}_i =r_i \bar{\theta}_i +(1-r_i) \theta_0$, where $r_i= \delta/(|\bar{\eta}_i-\eta_{0i} |+\|\bar{\phi}-\phi_0\|_1)<1$. Note that $\tilde{\theta}_i$ is on the boundary of $B_i(\delta)$. By the convexity of $S_{T,i}(\theta_i)$ we have
\[   S_{T,i}(\tilde{\theta}_i)  \leq r_i S_{T,i}(\bar{\theta}_i) +(1-r_i)S_{T,i}(\theta_{0i}), \text{ or } S_{T,i}(\tilde{\theta}_i)  - S_{T,i}(\theta_{0i})  \leq r_i  (S_{T,i}(\bar{\theta}_i) -S_{T,i}(\theta_{0i}) ).  \]

Next, by the definition of the estimator, we have $S_{T,i}(\hat{\theta}_i) \leq S_{T,i}(\theta_{0i})$ for some $i\leq N$. Thus, if $\|\hat{\phi}-\phi_0\|_1 >\delta$, then $\hat{\eta}_i \in B_i^C(\delta)$, which (by the above inequality) implies that
\[  ( S_{T,i}(\tilde{\theta}_i)  - S_{T,i}(\theta_{0i}) )/r_i  \leq  S_{T,i}(\hat{\theta}_i) -S_{T,i}(\theta_{0i})\leq 0 . \]
Adding the subtracting terms, the above inequality can be written as
\[ \bar{S}_{i}(\tilde{\theta}_i) - \bar{S}_{i}(\theta_{0i}) \leq  S_{T,i}(\theta_{0i}) - \bar{S}_{i}(\theta_{0i}) - [ S_{T,i}(\tilde{\theta}_i) - \bar{S}_{i}(\tilde{\theta}_i)   ] .\]
For interior points, using Taylor expansion, we have for some $C>0$ that does not depend on $i$, and small enough $\delta$,
\begin{multline*} \bar{S}_{i}(\tilde{\theta}_i) - \bar{S}_{i}(\theta_{0i})  = \Ex[ \psi_{\tau} (u_{it})W_{it}K_{it}/h^d] (\tilde{\theta}_i -\theta_{0i} ) \\+0.5 f_{u,i}(0|x)f_{X,i}(x)  [ (\tilde{\eta}_i -\eta_{0i})^2 +(\tilde{\phi} - \phi_0 )' \mathcal{K}_1(\tilde{\phi} - \phi_0 )   ](1+o(1)) > C \delta^2 (1+o(1)),
\end{multline*}
where we have used the fact that $ \Ex[ \psi_{\tau} (u_{it})W_{it}K_{it}/h^d] =0$. Similarly, for the boundary points, we have
\[\bar{S}_{i}(\tilde{\theta}_i) - \bar{S}_{i}(\theta_{0i})  =0.5 f_{u,i}(0|x)f_{X,i}(x)  (\tilde{\theta}_i - \theta_{0i} )' \bar{\mathcal{C}}_{2}(\tilde{\theta}_i - \theta_{0i} )  (1+o(1)) > C \delta^2 (1+o(1)) .\]
Thus, if follows from the union bound that
\[ P[\|\hat{\phi}-\phi_0\|_1  >\delta ] \leq \sum_{i=1}^{N}P\big[S_{T,i}(\theta_{0i}) - \bar{S}_{i}(\theta_{0i}) -  ( S_{T,i}(\tilde{\theta}_i) - \bar{S}_{i}(\tilde{\theta}_i)) >C \delta^2 (1+o(1))  \big].\]
To prove $\|\hat{\phi}-\phi_0\|_1  =o_P(1)$, it then suffices to show that for any $\epsilon>0$
\begin{equation}\label{A1} \max_{i\leq N} P\Bigg[ \sup_{\theta_i \in B_i(\delta)}| S_{T,i}(\theta_{0i}) - \bar{S}_{i}(\theta_{0i}) -  ( S_{T,i}(\theta_i) - \bar{S}_{i}(\theta_i)) | >\epsilon  \Bigg] = o(N^{-1}). \end{equation}
Note that
\begin{eqnarray*}
&&S_{T,i}(\theta_{i}) - \bar{S}_{i}(\theta_{i}) \\
&=&\frac{1}{T h^d}\sum_{t=1}^{T}\rho_{\tau}(Y_{it} - \theta_i'W_{it})K_{it} -  \frac{1}{T h^d}\sum_{t=1}^{T}\Ex[ \rho_{\tau}( (\theta_{0i} - \theta_i)'W_{it}+u_{it})K_{it}]   \\
&=& \underbrace{ \frac{1}{T h^d}\sum_{t=1}^{T}\big[\rho_{\tau}(Y_{it} - \theta_i'W_{it})- \rho_{\tau}( (\theta_{0i} - \theta_i)'W_{it}+u_{it}) \big] K_{it}}_{A_{T,i}(\theta_i)} \\
&&+ \underbrace{\frac{1}{T h^d}\sum_{t=1}^{T} \Big[ \rho_{\tau}( (\theta_{0i} - \theta_i)'W_{it}+u_{it})K_{it} -  \Ex[ \rho_{\tau}( (\theta_{0i} - \theta_i)'W_{it}+u_{it})K_{it}] \Big] }_{B_{T,i}(\theta_i)},
\end{eqnarray*}
so
\[ \sup_{\theta_i \in B_i(\delta)} | S_{T,i}(\theta_{0i}) - \bar{S}_{i}(\theta_{0i}) -  ( S_{T,i}(\theta_i) - \bar{S}_{i}(\theta_i)) | \leq  \sup_{\theta_i \in B_i(\delta)} 2|A_{T,i}(\theta_i)| +\sup_{\theta_i \in B_i(\delta)} |B_{T,i}(\theta_i) -B_{T,i}(\theta_{0i})|. \]
Write $Y_{it} = \theta_{0i}'W_{it} + 0.5 (X_{it}-x)' \ddot{q}_{\tau}(x) (X_{it}-x)^2 + R(x,X_{it})+u_{it}$, so from $\rho_{\tau}(u_1)-\rho_{\tau}(u_2)\leq 2|u_1-u_2|$,
\begin{eqnarray*}
&& \sup_{\theta_i \in B_i(\delta)} |A_{T,i}(\theta_i)| \\
  &\leq&  \frac{C_1}{T h^d}\sum_{t=1}^{T} \left( \|X_{it}-x\|_1^2+\|X_{it}-x\|_1^3 \right)K_{it}  \\
 &= &  \frac{C_1}{T h^d}\sum_{t=1}^{T} \left\{ \left( \|X_{it}-x\|_1^2+\|X_{it}-x\|_1^3 \right)K_{it}  - \mathbb{E}\left[  \left( \|X_{it}-x\|_1^2+\|X_{it}-x\|_1^3 \right)K_{it} \right]  \right\}+\bar{O}(h^2).
 \end{eqnarray*}
Next, consider $\sup_{\theta_i \in B_i(\delta)} |B_{T,i}(\theta_i) -B_{T,i}(\theta_{0i})|$. By compactness of $B_i(\delta)$, there exist a finite $L$ and $ \theta_{i}^{(1)},\ldots, \theta_{i}^{(L)} \in B_i(\delta)$ such that $\| \theta_i^{(j)} - \theta_i^{(k)}\|_1 \leq \omega$ for all $j,k\leq L$. For some $C_2,C_3>0$
\begin{multline*}
\sup_{\theta_i \in B_i(\delta)} |B_{T,i}(\theta_i) -B_{T,i}(\theta_{0i})| \leq \max_{j\leq L} |B_{T,i}(\theta_i^{(j)}) -B_{T,i}(\theta_{0i})| \\
+  \omega C_2 \frac{1}{T h^d}\sum_{t=1}^{T}\Big[ \|W_{it}\|_1 K_{it} -\Ex[\|W_{it}\|_1 K_{it}  ] \Big] + \omega C_3  \Ex[\|W_{it}\|_1 K_{it} /h^d ]. \end{multline*}
Take $\omega = \epsilon/ (6C_3 M)$ where $M = \sup_{i}\Ex[\|W_{it}\|_1 K_{it} /h^d ]<\infty$, then for large enough $N,T $, we have
\begin{multline} \label{A2}
P\Bigg[ \sup_{\theta_i \in B_i(\delta)}| S_{T,i}(\theta_{0i}) - \bar{S}_{i}(\theta_{0i}) -  ( S_{T,i}(\theta_i) - \bar{S}_{i}(\theta_i)) | >\epsilon  \Bigg]  \leq \\
P\Bigg[ \frac{C_1}{T h^d}\sum_{t=1}^{T} \left\{ \left( \|X_{it}-x\|_1^2+\|X_{it}-x\|_1^3 \right)K_{it}  - \mathbb{E}\left[  \left( \|X_{it}-x\|_1^2+\|X_{it}-x\|_1^3 \right)K_{it} \right]  \right\}  > \epsilon/10 C_1\Bigg]
 \\+ \sum_{j=1}^L P\Bigg[|B_{T,i}(\theta_i^{(j)}) -B_{T,i}(\theta_{0i})|  >\epsilon/5 \Bigg]
+ P\Bigg[ \frac{1}{T h^d}\sum_{t=1}^{T}\Big[ \|W_{it}\|_1K_{it} -\Ex[\|W_{it}\|_1K_{it}  ] \Big]  > 6C_3M/5C_2\Bigg].
\end{multline}
Consider the last term on the RHS of \eqref{A2}, since $h  \|W_{it}\|_1K_{it}$ is uniformly bounded, and $\Ex\left[ \|W_{it}\|_1^2 K_{it}^2 \right] = \bar{O}(h^{d})$,  by Bernstein's inequality, we have
\[ P\Bigg[ \frac{1}{T h^d}\sum_{t=1}^{T}\Big[ \|W_{it}\|_1K_{it} -\Ex[\|W_{it}\|_1K_{it}  ] \Big]  > C\Bigg] \leq e^{-\frac{Th^{2d}}{C_4(h^{d} + h^{d-1})}}\]
for some $C_4>0$. We can establish the same bound for the other two terms on the RHS of \eqref{A2}. Thus, \eqref{A1} holds since our assumptions imply that $\log N /(Th^{d+1})\rightarrow 0  $. Thus, it follows that $\|\hat{\phi}-\phi_0\|_1  =o_P(1)$.


Now consider $\hat{\eta}_i$. By definition of the estimators, we have $S_{T,i}(\hat{\eta}_i,\hat{\phi}) \leq S_{T,i}(\eta_{0i},\hat{\phi})$ for all $i\leq N$. Therefore, if $|\hat{\eta}_i -\eta_{0i}|>\delta$ for some $i\leq N$, then we have $\hat{\theta}_i \in B_i^C(\delta)$, and
\[  ( S_{T,i}(\tilde{\theta}_i)  - S_{T,i}(\theta_{0i}) )/r_i  \leq  S_{T,i}(\hat{\theta}_i) -S_{T,i}(\theta_{0i})\leq  S_{T,i}(\eta_{0i},\hat{\phi}) -S_{T,i}(\theta_{0i})  . \]
where $r_i$ and $\tilde{\theta}$ is as define above. Adding the subtracting terms, the above inequality can be written as
\[ \bar{S}_{i}(\tilde{\theta}_i) - \bar{S}_{i}(\theta_{0i}) \leq  S_{T,i}(\theta_{0i}) - \bar{S}_{i}(\theta_{0i}) - [ S_{T,i}(\tilde{\theta}_i) - \bar{S}_{i}(\tilde{\theta}_i)   ] + r_i (S_{T,i}(\eta_{0i},\hat{\phi}) -S_{T,i}(\theta_{0i}) ).\]
Thus, for any $\delta>0$, we have
\begin{multline}\label{A3}
P \left[ \max_{i\leq N}|\hat{\eta}_i -\eta_{0i}| >\delta \right]
\leq \sum_{i=1}^{N} P\left[  \sup_{\theta_i \in B_i(\delta)}| S_{T,i}(\theta_{0i}) - \bar{S}_{i}(\theta_{0i}) -  ( S_{T,i}(\theta_i) - \bar{S}_{i}(\theta_i)) |>C_6 \delta^2 \right] \\
+ P\left[  \max_{1\leq i\leq N} \left| S_{T,i}(\eta_{0i},\hat{\phi}) -S_{T,i}(\theta_{0i}) \right|>C_7 \delta^2 \right]
\end{multline}
for some $C_6,C_7>0$. The first term on the RHS of \eqref{A3} was shown to be $o(1)$ above. To prove $\max_{i\leq N}|\hat{\eta}_i -\eta_{0i}|=o_P(1)$ it suffices to show that for any $\epsilon>0$,
\begin{equation}\label{A4}
 P\Bigg[\max_{i\leq N} |S_{T,i}(\eta_{0i},\hat{\phi}) -S_{T,i}(\theta_{0i} )|>\epsilon \Bigg]=o(1). \end{equation}
Note that
\[ |S_{T,i}(\eta_{0i},\hat{\phi}) -S_{T,i}(\theta_{0i} )| \lesssim \|\hat{\phi} -\phi_0\|_1  \cdot \frac{1}{T h^d}\sum_{t=1}^{T}\| (X_{it}-x)/h\|_1 K_{it}. \]
Moreover,
\begin{multline}\label{A5}  \max_{1\leq i\leq N }\frac{1}{T h^d}\sum_{t=1}^{T}\| (X_{it}-x)/h\|_1 K_{it} \leq \max_{1\leq i\leq N }\frac{1}{T h^d}\sum_{t=1}^{T}  \left\{ \| (X_{it}-x)/h\|_1 K_{it}  -\Ex[| (X_{it}-x)/h\|_1 K_{it} ]  \right\}\\
+\max_{1\leq i\leq N } \Ex[| (X_{it}-x)/h\|_1 K_{it}/h^d ].
\end{multline}
Similar to the proof above, we can show that the first term on the RHS of \eqref{A5} is $o_P(1)$, and the second term is $\max_{1\leq i \leq N}f_{X,i}(x) +o(1)<\infty$ by our assumptions.
Then \eqref{A4} follows by $  \|\hat{\phi} -\phi_0\|_1=o_P(1)$. This concludes the proof.
\end{proof}

\vspace{0.5cm}

Define
\[ \bar{S}_i^{\eta_i}(\theta_i) =\Ex\Bigg[     [\mathbf{1}\{ u_{it} \leq (\theta_i-\theta_{0i})'W_{it}\}-\tau] \frac{K_{it}}{h^d}       \Bigg], \quad \bar{S}_i^{\phi}(\theta_i) =\Ex\Bigg[     [\mathbf{1}\{ u_{it} \leq (\theta_i-\theta_{0i})'W_{it}\}-\tau]  \frac{X_{it}-x}{h} \frac{K_{it}}{h^d}       \Bigg].\]
Denote $\bar{S}_i^{\eta_i\eta_i}(\theta_i)=\partial \bar{S}_i^{\eta_i}(\theta_i)/\partial \eta_i$, and $\bar{S}_i^{\phi\phi}(\theta_i), \bar{S}_i^{\phi\eta_i}(\theta_i)$ etc. are defined in a similar fashion. The arguments of these functions are dropped when they are evaluated at $\theta_{0i}$. Let $f_i(x) = f_{u,i}(0|x)f_{X,i}(x)$, $\bar{f}_N(x)=N^{-1}\sum_{i=1}^{N}f_i(x)$, $\bar{f}(x)=\lim_{N\rightarrow\infty}\bar{f}_N(x)$.

\vspace{0.5cm}
\begin{lemma}
For a boundary point $x$, i.e., $x=ch$ for some $c$ in the support of $K(\cdot)$, we have
\[ \bar{S}_i^{\phi\phi} =\mathcal{C}_2  f_i(0)+\bar{O}(h), \quad \bar{S}_i^{\phi\eta_i} = \mathcal{C}_1 f_i(0)+\bar{O}(h) \quad  \bar{S}_i^{\eta_i\eta_i} = c_0 f_i(0)+\bar{O}(h). \]
For a interior point $x$, we have
\[ \bar{S}_i^{\phi\phi} = \mathcal{K}_1 f_i(x)+\bar{O}(h), \quad \bar{S}_i^{\phi\eta_i} = \bar{O}(h) \quad  \bar{S}_i^{\eta_i\eta_i} =f_i(x)+\bar{O}(h). \]
\end{lemma}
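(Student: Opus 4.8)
The plan is to compute all three second-order quantities by differentiating the first-order maps $\bar{S}_i^{\eta_i}(\theta_i)$ and $\bar{S}_i^{\phi}(\theta_i)$ directly under the expectation. The key observation is that conditioning on $X_{it}$ turns the indicator $\mathbf{1}\{u_{it}\leq(\theta_i-\theta_{0i})'W_{it}\}$ into the conditional CDF $F_{u,i}((\theta_i-\theta_{0i})'W_{it}\,|\,X_{it})$. Since (A3)--(A4) make $f_{u,i}$ and its first derivative uniformly bounded, differentiation under the integral sign is justified by dominated convergence, and each derivative simply brings down $f_{u,i}((\theta_i-\theta_{0i})'W_{it}\,|\,X_{it})$ times the relevant entry of $W_{it}=(1,(X_{it}-x)'/h)'$: the $\eta_i$-derivative contributes the factor $1$, the $\phi$-derivative the factor $(X_{it}-x)'/h$. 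Evaluating at $\theta_{0i}$, where the CDF argument is $0$, gives
\begin{align*}
\bar{S}_i^{\eta_i\eta_i} &= \Ex\big[f_{u,i}(0|X_{it})K_{it}/h^d\big], \qquad \bar{S}_i^{\phi\eta_i} = \Ex\big[f_{u,i}(0|X_{it})\tfrac{X_{it}-x}{h}K_{it}/h^d\big], \\
\bar{S}_i^{\phi\phi} &= \Ex\big[f_{u,i}(0|X_{it})\tfrac{(X_{it}-x)(X_{it}-x)'}{h^2}K_{it}/h^d\big].
\end{align*}

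Next I would pass to the kernel variable via $u=(X_{it}-x)/h$, which converts each expectation into an integral of $f_{u,i}(0|x+hu)f_{X,i}(x+hu)$ against $K(u)$, $uK(u)$, and $uu'K(u)$ respectively, the Jacobian $h^d$ cancelling the $h^{-d}$. The effective domain is $\{u:x+hu\in\mathcal{X}\}\cap\text{supp}(K)$ because $f_{X,i}$ vanishes off $\mathcal{X}$. For a boundary point $x=ch$ this domain is exactly $\mathcal{B}$ and $x+hu=(c+u)h\to 0$, so I Taylor-expand the smooth product as $f_{u,i}(0|x+hu)f_{X,i}(x+hu)=f_i(0)+\bar{O}(h)$ and pull $f_i(0)$ out, leaving $\int_{\mathcal{B}}K=c_0$, $\int_{\mathcal{B}}uK=\mathcal{C}_1$, and $\int_{\mathcal{B}}uu'K=\mathcal{C}_2$, which are precisely the three boundary formulas.

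For an interior point the domain is the full support of $K$ and $x+hu\to x$, so the same expansion gives $f_i(x)+\bar{O}(h)$ multiplied by $\int K=1$, $\int uK$, and $\int uu'K=\mathcal{K}_1$. The one genuine simplification here is assumption (A5): since $\int uK(u)\,du=0$, the leading term of $\bar{S}_i^{\phi\eta_i}$ vanishes and only $\bar{O}(h)$ survives, while $\int K=1$ and $\mathcal{K}_1$ produce the stated $\bar{S}_i^{\eta_i\eta_i}$ and $\bar{S}_i^{\phi\phi}$.

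The step I would be most careful about is that every remainder is $\bar{O}(h)$ \emph{uniformly in} $i$, since this uniformity is what the subsequent asymptotic arguments rely on. This holds because the first-order Taylor coefficients are products and derivatives of $f_{u,i}(0|\cdot)$ and $f_{X,i}(\cdot)$, all bounded by the common constants $M$ and $c_2$ in (A3)--(A4) independently of $i$, and $K$ has bounded support; hence the $O(h)$ bound carries an $i$-free constant. The differentiation-under-the-integral step should be stated explicitly but is routine given the uniform boundedness of $f_{u,i}$ and $f_{u,i}^{(1)}$.
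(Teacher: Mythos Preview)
Your proposal is correct and is precisely the ``standard calculations for kernel density estimators'' that the paper alludes to but omits: condition on $X_{it}$ to replace the indicator by the conditional CDF, differentiate to bring down $f_{u,i}(0|X_{it})$ times the appropriate component of $W_{it}$, change variables $u=(X_{it}-x)/h$, and Taylor-expand $f_{u,i}(0|x+hu)f_{X,i}(x+hu)$ to first order. Your attention to the uniformity-in-$i$ of the $\bar{O}(h)$ remainder via the common bounds in (A3)--(A4) is exactly the detail that matters downstream and is well placed.
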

\begin{proof}
The proof follows from standard calculations for kernel density estimators. Therefore, it is omitted.
\end{proof}

\vspace{0.5cm}
\begin{lemma} The following representation holds under Assumptions (A1) to (A6):

For a boundary point $x$, i.e., $x=ch$ for some $c>0$ in the support of $K(\cdot)$, we have
\[   \hat{\phi}-\phi_0 -h^2 B^{(1)}=-  \frac{\mathcal{C}^{-1}}{\bar{f}_N(0)}\frac{1}{NT}\sum_{i=1}^{N}\sum_{t=1}^{T}[ \mathbf{1}\{ u_{it}\leq  0\} -\tau ] \Bigg[ \frac{X_{it}-x}{h}-\frac{\mathcal{C}_1}{c_0}\Bigg]     \frac{K_{it}}{h^d}  +o_P(1/\sqrt{NTh^d}). \]
For a interior point $x$, we have
\[ \hat{\phi}-\phi_0 = -  \frac{\mathcal{K}_1^{-1}}{\bar{f}_N(x)}\frac{1}{NT}\sum_{i=1}^{N}\sum_{t=1}^{T}[ \mathbf{1}\{ u_{it}\leq  0\} -\tau ] \frac{X_{it}-x}{h}    \frac{K_{it}}{h^d} +o_P(1/\sqrt{NTh^d}). \]

\end{lemma}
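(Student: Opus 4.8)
The plan is to derive the representation from the subgradient optimality conditions of the convex kernel-weighted check-function objective, profiling out the $N$ incidental parameters $\eta_i$ through a block (Schur-complement) elimination. First I would record that $\hat\theta$ satisfies, up to a term of order $O((Th^d)^{-1})$ arising from the at most $N+d$ observations with exactly zero residuals, the estimating equations $S_{T,i}^{\eta_i}(\hat\theta_i)\approx0$ for every $i$ and $N^{-1}\sum_{i=1}^{N}S_{T,i}^{\phi}(\hat\theta_i)\approx0$, where $S_{T,i}^{\eta_i}$ and $S_{T,i}^{\phi}$ are the empirical counterparts of $\bar S_i^{\eta_i}$ and $\bar S_i^{\phi}$ built from $\psi_\tau(Y_{it}-\theta_i'W_{it})$. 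Writing $Y_{it}=\theta_{0i}'W_{it}+r_{it}+u_{it}$ with $r_{it}=0.5(X_{it}-x)'\ddot q_\tau(x)(X_{it}-x)+R(x,X_{it})$ turns the event $\{Y_{it}\le\theta_i'W_{it}\}$ into $\{u_{it}\le(\theta_i-\theta_{0i})'W_{it}-r_{it}\}$, which is the form needed to evaluate the expected scores.

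Second, I would upgrade Lemma 1 from consistency to rates, showing $\max_{i\le N}|\hat\eta_i-\eta_{0i}|=O_P(\sqrt{\log N/(Th^d)})$ and $\|\hat\phi-\phi_0\|=O_P(1/\sqrt{NTh^d}+h^2)$, so that with high probability the estimates stay in a shrinking neighbourhood $B_i(\delta_T)$ of $\theta_{0i}$. Inside this neighbourhood I would prove a stochastic-equicontinuity (Bahadur) bound, namely that $[S_{T,i}-\bar S_i](\hat\theta_i)-[S_{T,i}-\bar S_i](\theta_{0i})$ is negligible uniformly in $i$, by a Bernstein-type maximal inequality of the same kind used to control $B_{T,i}$ in the proof of Lemma 1; the rate conditions of (A7), in particular $\log N\ll Th^{d+4}$, are exactly what make the union bound over the $N$ individuals vanish. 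I would then linearise the estimating equations: expanding $\bar S_i^{\eta_i}$ and $\bar S_i^{\phi}$ to first order produces the Hessian blocks furnished by Lemma 2, whereas the approximation remainder $r_{it}$ leaves a non-zero value of the true score at $\theta_{0i}$ which, after a one-term Taylor expansion of the conditional distribution function of $u_{it}$ about the origin, is of order $h^2$ and is the source of the local-linear bias.

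Third, I would carry out the block elimination. The $\eta_i$-equations give $\hat\eta_i-\eta_{0i}\approx-(\bar S_i^{\eta_i\eta_i})^{-1}[S_{T,i}^{\eta_i}(\theta_{0i})+\bar S_i^{\eta_i\phi}(\hat\phi-\phi_0)+\text{bias}_i]$, and substituting these into the averaged $\phi$-equation yields a Schur-complement system for $\hat\phi-\phi_0$ whose effective Hessian is $N^{-1}\sum_{i=1}^{N}[\bar S_i^{\phi\phi}-\bar S_i^{\phi\eta_i}(\bar S_i^{\eta_i\eta_i})^{-1}\bar S_i^{\eta_i\phi}]$. Inserting Lemma 2 collapses this to $\bar f_N(0)\,\mathcal{C}$ at a boundary point, since $\mathcal{C}=\mathcal{C}_2-\mathcal{C}_1\mathcal{C}_1'/c_0$, and to $\bar f_N(x)\,\mathcal{K}_1$ at an interior point, the cross term there being $\bar O(h^2)$ and hence negligible. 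The same elimination converts the profiled score into the centred kernel average carrying the factor $\frac{X_{it}-x}{h}-\frac{\mathcal{C}_1}{c_0}$ at the boundary (the correction $\bar S_i^{\phi\eta_i}(\bar S_i^{\eta_i\eta_i})^{-1}\to\mathcal{C}_1/c_0$) and simply $\frac{X_{it}-x}{h}$ in the interior (where that correction is $\bar O(h)$); after multiplication by the inverse effective Hessian, the profiled remainder bias reproduces precisely $h^2B^{(1)}=0.5\,h^2\mathcal{C}^{-1}\int_{\mathcal{B}}(u'\ddot q_\tau(0)u)(u-\mathcal{C}_1/c_0)K(u)\,du$, while in the interior the symmetry and odd-moment conditions of (A5) annihilate the $h^2$ term, consistent with $B^{(1)}=0$ there.

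The step I expect to be the main obstacle is the control of the second-order terms generated by the incidental parameters. The deviations $\hat\eta_i-\eta_{0i}$ enter the $\phi$-equation nonlinearly, and because the score is non-smooth these quadratic-type contributions are hard to characterise in closed form (cf. the discussion in \citealt{kato2012asymptotics}); they generate an incidental-parameter bias of order $(Th^d)^{-3/4}$ that has to be shown to be absorbed into the $o_P(1/\sqrt{NTh^d})$ remainder rather than computed. This is exactly where the stringent requirement $N^2\ll Th^d$ in (A7) is used, because $(Th^d)^{-3/4}\ll(NTh^d)^{-1/2}$ is equivalent to $N^2\ll Th^d$, while the companion conditions $N\ll Th^{d+2}$ and $NTh^{d+6}\to0$ discard the remaining higher-order approximation biases. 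Tracking these cross terms uniformly over the growing number of nuisance parameters, rather than through a single clean index expansion, is the delicate part; the remaining manipulations reduce to the routine kernel-moment computations of the type deferred in Lemma 2.
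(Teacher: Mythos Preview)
Your proposal is correct and follows essentially the same route as the paper: expand the smooth population scores $\bar S_i^{\eta_i},\bar S_i^{\phi}$ using the Hessian blocks of Lemma~2, relate them to the non-smooth empirical scores via a stochastic-equicontinuity bound (the paper invokes the VC-class maximal inequality of \citet{kato2012asymptotics}), perform the Schur-complement elimination of the $\eta_i$'s to produce the profiled score with the $\mathcal{C}_1/c_0$ correction and the $h^2B^{(1)}$ bias, and absorb the $(Th^d)^{-3/4}$ incidental-parameter remainder via $N^2\ll Th^d$. The only organisational difference is that the paper does not establish the sharp rate $\|\hat\phi-\phi_0\|=O_P((NTh^d)^{-1/2}+h^2)$ up front as you propose---that is essentially the conclusion---but instead carries generic placeholder rates $\gamma_{NT},\delta_{NT}$ through Steps~1--2, derives the coarser rates $\max_i|\hat\eta_i-\eta_{0i}|,\|\hat\phi-\phi_0\|=O_P(\sqrt{\log N/(Th^d)})+O_P(h^2)$ in Step~3, and only then plugs back in; keeping this iterative structure avoids the mild circularity in your ordering.
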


\begin{proof}
We only provide the proof for the boundary point, which is more involved.
Let $\{\delta_{NT}\}$ be a non-increasing sequence such that $\max_{i\leq N}|\hat{\eta}_i-\eta_{0i}|=O_P(\delta_{NT})$, and let $\{\gamma_{NT}\}$ be a non-increasing sequence such that
$ \max_{i\leq N}|\hat{\eta}_i -\eta_{0i}| \vee \|\hat{\phi}-\phi_0\| =O_P(\gamma_{NT})$.

\noindent \textbf{Step 1 (Expansion): }

Expanding $\bar{S}_i^{\eta_i}(\hat{\theta}_i)$ and $\bar{S}_i^{\phi}(\hat{\theta}_i)$ around $\theta_{0i}$ gives:
\begin{eqnarray*}
\bar{S}_i^{\phi}(\hat{\theta}_i) &=& \bar{S}_i^{\phi\phi} (\hat{\phi}-\phi_0) +  \bar{S}_i^{\phi\eta_i} (\hat{\eta}_i-\eta_{0i})+\bar{O}_P(\gamma_{NT}) \cdot \|\hat{\phi}-\phi_0\| + \Bar{O}_P( \delta_{NT}^2 ) ,    \\
\bar{S}_i^{\eta_i}(\hat{\theta}_i) &=& ( \bar{S}_i^{\eta_i\phi})' (\hat{\phi}-\phi_0) +  \bar{S}_i^{\eta_i\eta_i} (\hat{\eta}_i-\eta_{0i}) +\bar{O}_P(\gamma_{NT})\cdot \|\hat{\phi}-\phi_0\| + \Bar{O}_P( \delta_{NT}^2 ).
\end{eqnarray*}
Plugging the second equation into the first one, and using the results of Lemma 1 and Lemma 2 gives
\begin{equation*}
[f_i(0)(\mathcal{C}_2-\mathcal{C}_1 \mathcal{C}_1'/c_0)+\bar{O}(h)] \cdot (\hat{\phi}-\phi_0) = \bar{S}_i^{\phi}(\hat{\theta}_i) - (\bar{S}_i^{\phi\eta_i}/\bar{S}_i^{\eta_i\eta_i}) \cdot \bar{S}_i^{\eta_i}(\hat{\theta}_i) +\bar{O}_P(\gamma_{NT})\|\hat{\phi}-\phi_0\| + \Bar{O}_P( \delta_{NT}^2 ).
\end{equation*}
It follows that
\begin{multline}\label{A6}
[  \bar{f}_N(0)(\mathcal{C}_2-\mathcal{C}_1 \mathcal{C}_1'/c_0)] \cdot (\hat{\phi}-\phi_0) = \frac{1}{N}\sum_{i=1}^N\bar{S}_i^{\phi}(\hat{\theta}_i) - \frac{1}{N}\sum_{i=1}^N(\bar{S}_i^{\phi\eta_i}/\bar{S}_i^{\eta_i\eta_i})\cdot \bar{S}_i^{\eta_i}(\hat{\theta}_i) \\+ O_P(h+\gamma_{NT})\|\hat{\phi}-\phi_0\| + O_P( \delta_{NT}^2 ).
\end{multline}
Similarly, for each $i$ we can obtain:
\begin{equation}\label{A7}
[f_i(0)c_0+\bar{O}(h)] \cdot (\hat{\eta}_i-\eta_{0i}) =  \bar{S}_i^{\eta_i}(\hat{\theta}_i) -  f_i(0)\mathcal{C}_1' (\hat{\phi}-\phi_0) +\bar{O}_P(h+\gamma_{NT})\|\hat{\phi}-\phi_0\| + \Bar{O}_P( \delta_{NT}^2 ).
\end{equation}

\noindent \textbf{Step 2 ($N^{-1}\sum_{i=1}^N\bar{S}_i^{\phi}(\hat{\theta}_i)$ and $N^{-1}\sum_{i=1}^N\bar{S}_i^{\eta_i}(\hat{\theta}_i)$):}

First consider $N^{-1}\sum_{i=1}^N\bar{S}_i^{\phi}(\hat{\theta}_i)$. Define $Z_{it}= (X_{it}-x)/h\cdot K_{it}/h^d$. We can write
\begin{multline*}
\frac{1}{N}\sum_{i=1}^N\bar{S}_i^{\phi}(\hat{\theta}_i) = \frac{1}{NT}\sum_{i=1}^{N}\sum_{t=1}^{T} [ \mathbf{1}\{ Y_{it}\leq \hat{\theta}_i ' W_{it}\} -\tau ]Z_{it}  \\
 - \frac{1}{NT}\sum_{i=1}^{N}\sum_{t=1}^{T} \Big\{[ \mathbf{1}\{ u_{it}\leq  (\hat{\theta}_i-\theta_{0i}) ' W_{it}\} -\tau ]Z_{it}
-\Ex\Big[[ \mathbf{1}\{ u_{it}\leq  ( \hat{\theta}_i-\theta_{0i}) ' W_{it}\} -\tau ]Z_{it}   \Big]\Big\} \\
+ \frac{1}{NT}\sum_{i=1}^{N}\sum_{t=1}^{T} \big[    \mathbf{1}\{ u_{it}\leq  (\hat{\theta}_i-\theta_{0i}) ' W_{it}\} -  \mathbf{1}\{ Y_{it}\leq \hat{\theta}_i ' W_{it}\}  \big]Z_{it}.
\end{multline*}
Define
\[
A_{NT}(\theta) =  \frac{1}{NT}\sum_{i=1}^{N}\sum_{t=1}^{T} \Big\{[ \mathbf{1}\{ u_{it}\leq  (\theta_i-\theta_{0i}) ' W_{it}\} -\tau ]Z_{it}
-\Ex\Big[[ \mathbf{1}\{ u_{it}\leq  (\theta_i-\theta_{0i}) ' W_{it}\} -\tau ]Z_{it}   \Big]\Big\}
\]
and
\[
B_{NT}(\theta)=\frac{1}{NT}\sum_{i=1}^{N}\sum_{t=1}^{T} \big[    \mathbf{1}\{ u_{it}\leq  (\theta_i-\theta_{0i}) ' W_{it}\} -  \mathbf{1}\{ Y_{it}\leq \theta_i ' W_{it}\}  \big]Z_{it},
\]
then adding and subtracting terms, we can write
\begin{multline}\label{A8}
\frac{1}{N}\sum_{i=1}^N\bar{S}_i^{\phi}(\hat{\theta}_i) = \frac{1}{NT}\sum_{i=1}^{N}\sum_{t=1}^{T} [ \mathbf{1}\{ Y_{it}\leq \hat{\theta}_i ' W_{it}\} -\tau ]Z_{it}
- A_{NT}(\theta_0) +( A_{NT}(\theta_0)  - A_{NT}(\hat{\theta}) ) +B_{NT}(\hat{\theta}).
\end{multline}
Moreover, define
\begin{multline*} B_{NT}^{(1)}(\theta)= \frac{1}{NT}\sum_{i=1}^{N}\sum_{t=1}^{T} \Big\{  \big[    \mathbf{1}\{ u_{it}\leq  (\theta_i-\theta_{0i}) ' W_{it}\} -  \mathbf{1}\{ Y_{it}\leq \theta_i ' W_{it}\}  \big]Z_{it}   \\-\Ex\Big[  \big[    \mathbf{1}\{ u_{it}\leq  (\theta_i-\theta_{0i}) ' W_{it}\} -  \mathbf{1}\{ Y_{it}\leq \theta_i ' W_{it}\}  \big]Z_{it} \Big] \Big\}
\end{multline*}
\[ B_N^{(2)}(\theta) = \frac{1}{N}\sum_{i=1}^{N}\Ex\Big[  \big[    \mathbf{1}\{ u_{it}\leq  (\theta_i-\theta_{0i}) ' W_{it}\} -  \mathbf{1}\{ Y_{it}\leq \theta_i ' W_{it}\}  \big]Z_{it}, \Big]\]
%
we can write:
\begin{equation*}B_{NT}(\hat{\theta}) = B_{NT}^{(1)}(\theta_0) + ( B_{NT}^{(1)}(\hat{\theta}) -B_{NT}^{(1)}(\theta_0))  +B_{N}^{(2)}(\hat{\theta}). \end{equation*}

First, by the computational property of quantile regressions, the first term on the right-hand side of \eqref{A8} is $O_P(T^{-1}h^{-d-1})$.

Second, consider $ A_{NT}(\theta_0)  - A_{NT}(\hat{\theta})$. Following the proof of \citet{kato2012asymptotics}, we will show that
\begin{equation}\label{A9}
A_{NT}(\theta_0)  - A_{NT}(\hat{\theta}) =O_P(\xi_{NT}) \quad \text{where } \quad \xi_{NT} = \frac{|\log (h^{d+2}\gamma_{NT} ) |}{Th^{d+1}} \vee  \frac{\sqrt{\gamma_{NT}}\cdot \sqrt{|\log (h^{d+2}\gamma_{NT} ) |}}{\sqrt{Th^d}}.
\end{equation}
Note that
\[ | A_{NT}(\theta_0)  - A_{NT}(\hat{\theta})| \leq \frac{1}{Nh^{d+1}}\sum_{i=1}^{N}  \sup_{ |\eta_i-\eta_{0i}|, |\phi-\phi_0|\leq \gamma_{NT}}\Bigg|\frac{1}{T}\sum_{t=1}^{T} [l(\eta_i,\phi,V_{it}) - \Ex l(\eta_i,\phi,V_{it})  ]  \Bigg| ,\]
where $V_{it}=[Y_{it},X_{it}]$, and $l(\eta_i,\phi,V_{it}) =  [ \mathbf{1}\{ u_{it}\leq  (\theta_i-\theta_{0i}) ' W_{it}\} - \mathbf{1}\{ u_{it}\leq 0\} ]Z_{it}h^{d+1} $. Define $\mathcal{L}_{\gamma} = \{l(\eta_i,\phi,V_{it}): | \eta_i-\eta_{0i}| \vee \|\phi-\phi_0\|\leq \gamma\}$, then the above inequality can be written as
\[ | A_{NT}(\theta_0)  - A_{NT}(\hat{\theta})| \leq \frac{1}{Nh^{d+1}}\sum_{i=1}^{N} \Bigg\| \frac{1}{T}\sum_{t=1}^{T}  l(V_{it}) -\Ex l(V_{it})\Bigg\|_{\mathcal{L}_{\gamma_{NT}}}.\]
Thus, to prove \eqref{A9}, it suffices to show that
\begin{equation}\label{A10}
\max_{i\leq N} \Ex \Bigg\| \frac{1}{T}\sum_{t=1}^{T}  l(V_{it}) -\Ex l(V_{it})\Bigg\|_{\mathcal{L}_{\gamma_{NT}}} =O_P \Bigg(\frac{|\log (h^{d+2}\gamma_{NT} ) |}{T} \vee  \frac{\sqrt{h^{d+2}\gamma_{NT}}\cdot \sqrt{|\log (h^{d+2}\gamma_{NT} ) |}}{\sqrt{T}}\Bigg).
\end{equation}
Since $Z_{it}h^{d+1} =K_{it}\cdot(X_{it}-x)$ is uniformly bounded, the class of functions $\mathcal{L}_{\infty} = \{l(\eta_i,\phi,V_{it}): \phi,\eta_i\in\mathbb{R}\}$ is a VC subgraph class, and $\Ex[l(\eta_i,\phi,V_{it})^2 ]
=O(\gamma_{NT} h^{d+2})$ for $l(\eta_i,\phi,V_{it}) \in \mathcal{L}_{\gamma_{NT}} $, \eqref{A10} follows from Proposition B.1 of \citet{kato2012asymptotics}. Similarly, we can show that
\begin{equation}\label{A11}
 B_{NT}^{(1)}(\hat{\theta}) -B_{NT}^{(1)}(\theta_0) = O_P(\xi_{NT}).
\end{equation}

Third, consider $B_{NT}^{(1)}(\theta_0)$. Note that
\[B_{NT}^{(1)}(\theta_0) =  \frac{1}{NT}\sum_{i=1}^{N}\sum_{t=1}^{T} \Big\{  \big[    \mathbf{1}\{ u_{it}\leq  0\} -  \mathbf{1}\{ u_{it}\leq  -m(X_{it})\}  \big]Z_{it}   -\Ex\Big[  \big[    \mathbf{1}\{ u_{it}\leq  0\} -  \mathbf{1}\{ u_{it}\leq  -m(X_{it})\}  \big]Z_{it} \Big] \Big\} \]
where $Y_{it}=\theta_0'W_{it}+0.5 (X_{it}-x)' \ddot{q}_{\tau}(x)(X_{it}-x)  +R(X_{it},x)+u_{it}$ and $m(X_{it}) =  0.5 (X_{it}-x)' \ddot{q}_{\tau}(x)(X_{it}-x) + R(X_{it},x)$. Since
\[ \Ex\Big\| B_{NT}^{(1)}(\theta_0)\Big\|^2
\leq \frac{1}{N^2 T} \sum_{i=1}^{N} \Ex \Big[ | \mathbf{1}\{ u_{it}\leq  0\} -  \mathbf{1}\{ u_{it}\leq  -m(X_{it})\} | \cdot\|Z_{it}\|^2 \Big] =O(h^{2-d}/(NT)) ,
\]
it follows that
\begin{equation}\label{A12} B_{NT}^{(1)}(\theta_0) =o_P(1/\sqrt{NTh^d}). \end{equation}

Finally, consider $B_{N}^{(2)}(\hat{\theta})$. Write
\[ B_{N}^{(2)}(\theta) = \frac{1}{N}\sum_{i=1}^{N}\Ex\Big[  \big[    \mathbf{1}\{ u_{it}\leq  (\theta_i-\theta_{0i}) ' W_{it}\} -  \mathbf{1}\{  u_{it}\leq  (\theta_i-\theta_{0i}) ' W_{it} -m(X_{it})\}  \big]Z_{it} \Big],\]
since for some $c\in(0,1)$ we have
\begin{eqnarray*}
&& \Ex\Big[  \big[    \mathbf{1}\{ u_{it}\leq  (\theta_i-\theta_{0i}) ' W_{it}\} -  \mathbf{1}\{  u_{it}\leq  (\theta_i-\theta_{0i}) ' W_{it} -m(X_{it})\}  \big]Z_{it} \Big]  \\
&=& \Ex\Big[  f_{u,i}( (\theta_i-\theta_{0i}) ' W_{it} -c\cdot m(X_{it})  |X_{it})m(X_{it})               Z_{it} \Big] \\
&= &\Ex\Big[  f_{u,i}( 0  |X_{it})m(X_{it})               Z_{it} \Big]+ \Ex\Big[  [f_{u,i}( (\theta_i-\theta_{0i}) ' W_{it} -c\cdot m(X_{it})  |X_{it})- f_{u,i}( 0 |X_{it}) ]m(X_{it})               Z_{it} \Big]
\end{eqnarray*}
where the first term of the last expression is $0.5f_i(0)  h^2\int_{\mathcal{B}} u' \ddot{q}_{\tau}(0)u uK(u)du  +\bar{O}(h^3)$, and the second term of the last expression is bounded by
\[ C \|  \theta_i-\theta_{0i} \| \cdot \Ex[ \|  W_{it}\| \cdot |m(X_{it})|\cdot \|Z_{it}\|] +C \Ex[  | m(X_{it})|^2 \|Z_{it}\|] =\|  \theta_i-\theta_{0i} \|\cdot \bar{O}(h^2) +\bar{O}(h^4).\]
It then follows that
\begin{equation}\label{A13}
B_{N}^{(2)}(\hat{\theta}) =0.5  h^2  \bar{f}_N(0)\int_{\mathcal{B}} u' \ddot{q}_{\tau}(0)u uK(u)du +O_P(\gamma_{NT}\cdot h^2) +O_P(h^3). \end{equation}

Combining results \eqref{A8}, \eqref{A9}, \eqref{A11}, \eqref{A12} and \eqref{A13} gives
\begin{multline}\label{A14}
\frac{1}{N}\sum_{i=1}^N\bar{S}_i^{\phi}(\hat{\theta}_i) = - \frac{1}{NT}\sum_{i=1}^{N}\sum_{t=1}^{T} [ \mathbf{1}\{ u_{it}\leq  0\} -\tau ]Z_{it}  +0.5  h^2  \bar{f}_N(0)\int_{\mathcal{B}} u' \ddot{q}_{\tau}(0)u uK(u)du   \\+O_P(1/ (Th^{d+1})) +O_P(\xi_{NT})+o_P(1/\sqrt{NTh^d}) +O_P(\gamma_{NT}\cdot h^2)+O_P(h^3).
\end{multline}
Similarly, we can show that
\begin{multline}\label{A15}
\frac{1}{N}\sum_{i=1}^N(\bar{S}_i^{\phi\eta_i}/\bar{S}_i^{\eta_i\eta_i})  \cdot \bar{S}_i^{\eta_i}(\hat{\theta}_i) = - \frac{1}{NT}\sum_{i=1}^{N}\sum_{t=1}^{T} (\bar{S}_i^{\phi\eta_i}/\bar{S}_i^{\eta_i\eta_i})  \cdot [ \mathbf{1}\{ u_{it}\leq  0\} -\tau ]K_{it}/h^d \\ +0.5  h^2 \bar{f}_N(0) \mathcal{C}_1/c_0 \cdot \int_{\mathcal{B}} u' \ddot{q}_{\tau}(0)u K(u)du  +O_P(1/ (Th^{d+1})) \\+O_P(\xi_{NT})+o_P(1/\sqrt{NTh^d}) +O_P(\gamma_{NT}h^2)+O_P(h^3).
\end{multline}

\noindent \textbf{Step 3 (Rate of convergence):}

Plugging \eqref{A14} and \eqref{A15} into \eqref{A6} and using $\bar{S}_i^{\phi\eta_i}/\bar{S}_i^{\eta_i\eta_i} = \mathcal{C}_1/c_0+\bar{O}(h)$ give
\begin{multline}\label{A16}
 [ \bar{f}_N(0)(\mathcal{C}_2-\mathcal{C}_1 \mathcal{C}_1'/c_0)] \cdot (\hat{\phi}-\phi_0) \\=- \frac{1}{NT}\sum_{i=1}^{N}\sum_{t=1}^{T} [ \mathbf{1}\{ u_{it}\leq  0\} -\tau ] \Bigg[ \frac{X_{it}-x}{h}-\frac{\mathcal{C}_1}{c_0}\Bigg]     \frac{K_{it}}{h^d} + 0.5h^2 \bar{f}_N(0)\int_{\mathcal{B}} u' \ddot{q}_{\tau}(0)u \Big(u-\frac{\mathcal{C}_1}{c_0}\Big)K(u)du   \\    +o_P\left(\|\hat{\phi}-\phi_0\| \right)  +O_P( \delta_{NT}^2 )+O_P(1/ (Th^{d+1})) +O_P(\xi_{NT})+o_P(1/\sqrt{NTh^d}) +O_P(h^3)+O_P(\gamma_{NT}h^2).
\end{multline}
It then follows from our assumptions that
\begin{equation}\label{A17}
\hat{\phi}-\phi_0 = o_P(1/\sqrt{Th^d})+O_P( \delta_{NT}^2 )+O_P(h^2).
\end{equation}

Next, the above inequality and \eqref{A7} imply that for some $C>0$,
\[ \max_{i\leq N}|\hat{\eta}_i - \eta_{0i}|  \leq C \max_{i\leq N}|  \bar{S}_i^{\eta_i}(\hat{\theta}_i)| +o_P(1/\sqrt{Th^d}) +O_P(h^2).\]
Similar to the proof of Step 2, we can show that
\[  \bar{S}_i^{\eta_i}(\hat{\theta}_i) =\bar{O}_P(T^{-1}h^{-d-1}) +\bar{O}_P(h^2) +\bar{o}_P(1/\sqrt{Th^d})- D_{T}(\theta_{0i}) +( D_{T}(\theta_{0i})  - D_{T}(\hat{\theta}_i) )  +G_{T}(\hat{\theta}_i) -G_{T}(\theta_{0i})\]
where
\[
D_{T}(\theta_i) =  \frac{1}{T}\sum_{t=1}^{T} \Big\{[ \mathbf{1}\{ u_{it}\leq  (\theta_i-\theta_{0i}) ' W_{it}\} -\tau ]K_{it}/h^d
-\Ex\Big[[ \mathbf{1}\{ u_{it}\leq  (\theta_i-\theta_{0i}) ' W_{it}\} -\tau ]K_{it}/h^d   \Big]\Big\},
\]
\begin{multline*}
G_{T}(\theta_i) = \frac{1}{T}\sum_{t=1}^{T} \Big\{  \big[    \mathbf{1}\{ u_{it}\leq  (\theta_i-\theta_{0i}) ' W_{it}\} -  \mathbf{1}\{ Y_{it}\leq \theta_i ' W_{it}\}  \big]K_{it}/h^d   \\-\Ex\Big[  \big[    \mathbf{1}\{ u_{it}\leq  (\theta_i-\theta_{0i}) ' W_{it}\} -  \mathbf{1}\{ Y_{it}\leq \theta_i ' W_{it}\}  \big]K_{it}/h^d \Big] \Big\}.
\end{multline*}
Applying Lemma 2.2.9 and Lemma 2.2.10 of \citet{vanweak} we can show that
\[
\max_{i\leq N}|  D_{T}(\theta_{0i})| = O_P(\sqrt{\log N}/\sqrt{Th^d}).
 \]
Next, following the proof of \citet{kato2012asymptotics}, we can show that:
\[ \max_{i\leq N}| D_{T}(\hat{\theta}_{i}) - D_{T}(\theta_{0i})| =o_P(\sqrt{\log N}/\sqrt{Th^d}), \text{ and }\max_{i\leq N}| G_{T}(\hat{\theta}_{i}) - G_{T}(\theta_{0i})| =o_P(\sqrt{\log N}/\sqrt{Th^d}) \]
It then follows that
\[\max_{i\leq N } |\bar{S}_i^{\eta_i}(\hat{\theta}_i)|=O_P(\sqrt{\log N}/\sqrt{Th^d})  +O_P(h^2), \]
\begin{equation}\label{A18}
 \max_{i\leq N}|\hat{\eta}_i - \eta_{0i}|  = O_P(\sqrt{\log N}/\sqrt{Th^d})  +O_P(h^2),
\end{equation}
\begin{equation}\label{A19}
 \|\hat{\phi} - \phi_{0}\|  = o_P(\sqrt{\log N}/\sqrt{Th^d})  +O_P(h^2).
\end{equation}

\noindent \textbf{Step 4 (Bahadur Representation):}

By \eqref{A18}, \eqref{A19} we have $\delta_{NT} = \gamma_{NT} = \sqrt{\log N}/\sqrt{Th^d}$. It then follows from \eqref{A16} that
\begin{multline}
  \hat{\phi}-\phi_0 - h^2  B^{(1)}
  =-  \frac{\mathcal{C}^{-1}}{\bar{f}_N(0)}\frac{1}{NT}\sum_{i=1}^{N}\sum_{t=1}^{T}[ \mathbf{1}\{ u_{it}\leq  0\} -\tau ] \Bigg[ \frac{X_{it}-x}{h}-\frac{\mathcal{C}_1}{c_0}\Bigg]     \frac{K_{it}}{h^d} + o_P\left(\|\hat{\phi}-\phi_0\| \right) \\+ O_P(1/(Th^{d+1})) + O_P(\log N/Th^d ) +O_P(\xi_{NT})+ o_P(1/\sqrt{NTh^d}) +O_P(h^3)+O_P(\gamma_{NT}h^2).
\end{multline}
Assumption (A6) implies that $ O_P(1/(Th^{d+1}))$, $O_P(\log N/Th^d )$, $O_P(\gamma_{NT}h^2)$ and $O_P(h^3)$ are all $o_P(\sqrt{NTh^d})$. Further, it can be shown that
\[ O_P(\xi_{NT}) =O_P\Bigg(\frac{(\log N)^{1/4}(\log T)^{1/2}}{(Th^d)^{3/4}}\Bigg),\]
which is also $o_P(\sqrt{NTh^d})$ by Assumption (A6). So the desired result for the boundary point follows. The desired result for the interior points can be proved in the same way, by noting that for interior points $c_0=1,\mathcal{C}_1=0$, $\mathcal{C}=\mathcal{K}_1$ and $b=0$. \end{proof}

\noindent{\textbf{Proof of Theorem 1:}}

\begin{proof}
By Lyapunov's CLT, we can show that:
\begin{equation}\label{A21}
\frac{1}{\sqrt{NT}}\sum_{i=1}^{N}\sum_{t=1}^{T}[ \mathbf{1}\{ u_{it}\leq  0\} -\tau ] \Bigg[ \frac{X_{it}-x}{h}-\frac{\mathcal{C}_1}{c_0}\Bigg]     \frac{K_{it}}{ h^{d/2}}
\overset{d}{\rightarrow} \mathcal{N}\left(0, \tau(1-\tau)\Omega^{\ast}  \right),
\end{equation}
where
\[ \Omega^{\ast}= \left( \lim_{N\rightarrow\infty} N^{-1}\sum_{i=1}^{N}f_{X,i}(0)\right) \cdot\int_{\mathcal{B}}(u-\mathcal{C}_1/c_0)(u-\mathcal{C}_1/c_0)' K^2(u)du  \]
then the desired result for the boundary points follows from \eqref{A21} and Lemma 3. The desired result for the interior points follows similarly. \end{proof}

\subsection{Proof of Theorem 2}
We write $\check{\eta}_i,\check{\beta}$ instead of $\check{\eta}_{i,\tau}(x), \check{\beta}_{\tau}(x)$. Moreover, let $ \check{\phi}=h\check{\beta}$, $\check{\theta}_i = (\check{\eta}_i, \check{\phi}')'$, $\check{\theta} = (\check{\eta}_1,\ldots, \check{\eta}_N, \check{\phi}')'$, and define $\varrho_{\tau}(u) = [\tau - G(u/b)] u$,
\[ S_{T,i}^{\ast}(\theta_i) = \frac{1}{T h^d}\sum_{t=1}^{T}\varrho_{\tau}(Y_{it} - \theta_i'W_{it})K_{it}, \quad  \bar{S}_{i}^{\ast}(\theta_i) = \Ex[ \varrho_{\tau}( (\theta_{0i} - \theta_i)'W_{it}+u_{it})K_{it}/h^d],\]
and $S_{NT}^{\ast}(\theta)  = N^{-1}\sum_{i=1}^{N}S_{T,i}^{\ast}(\theta_i) $, $\bS_{N}^{\ast}(\theta) = N^{-1}\sum_{i=1}^{N} \bar{S}_{i}^{\ast}(\theta_i)$.

We only prove the result for the boundary points. The proof for the interior points is almost the same.

\vspace{0.5cm}

\begin{lemma} Under Assumptions B1 to B4, we have $ \|\check{\phi}-\phi_0\|_1 =o_P(1)$ and $\max_{i\leq N}|\check{\eta}_i -\eta_{0i}|=o_P(1)$.
\end{lemma}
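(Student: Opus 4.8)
The plan is to follow the structure of the proof of Lemma~1, but the crucial new difficulty is that, unlike the check function $\rho_\tau$, the smoothed loss $\varrho_\tau(u)=[\tau-G(u/b)]u$ is \emph{not} convex: since (B3) forces $g$ to be a fourth- (or higher-) order kernel, $g$ takes negative values, so $\varrho_\tau''(u)=2g(u/b)/b+ug'(u/b)/b^2$ changes sign, with the non-convex wiggle confined to $|u|\le b$ (recall $g$ is supported on $[-1,1]$, so $\varrho_\tau(u)=\tau u$ for $u>b$ and $\varrho_\tau(u)=(\tau-1)u$ for $u<-b$). The convexity-based ``boundary'' reduction used in Lemma~1, which passes from an arbitrary point outside $B_i(\delta)$ to the point $\tilde\theta_i$ on its boundary, is therefore unavailable. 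Instead I would run a direct argmin-consistency argument based on two ingredients: a uniform-in-$i$ well-separation property of the population criterion $\bar{S}_i^{\ast}$, and a uniform-in-$i$ law of large numbers for $S_{T,i}^{\ast}-\bar{S}_i^{\ast}$.

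First I would establish well-separation. Taking the expectation over $u_{it}$ in $\bar{S}_i^{\ast}(\theta_i)=\Ex[\varrho_\tau((\theta_{0i}-\theta_i)'W_{it}+u_{it})K_{it}/h^d]$ convolves the oscillating kernel $g$ against the smooth density $f_{u,i}$, so the non-convexity of $\varrho_\tau$ is washed out; indeed $\int\varrho_\tau''(u)f_{u,i}(u|x)\,du\to f_{u,i}(0|x)$ as $b\to0$, using $\int g=1$ and $\int v g'(v)\,dv=-1$. A Taylor expansion in $(\theta_i-\theta_{0i})$, together with (B2)--(B3) and the change of variables $u=(X_{it}-x)/h$, then yields
\[ \bar{S}_i^{\ast}(\theta_i)-\bar{S}_i^{\ast}(\theta_{0i}) = 0.5\, f_i(0)\,(\theta_i-\theta_{0i})'\bar{\mathcal{C}}_2(\theta_i-\theta_{0i})\,(1+o(1)) + \bar{O}(b^m)+\bar{O}(h^2), \]
exactly the leading quadratic form that appears in the boundary computation of Lemma~1, with the extra $\bar{O}(b^m)$ coming from the higher-order kernel. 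Since $\bar{\mathcal{C}}_2>0$ and $f_i(0)$ is bounded below by (A6)/(B1), and the bias terms are $o(1)$ under (B4), this gives $\bar{S}_i^{\ast}(\theta_i)-\bar{S}_i^{\ast}(\theta_{0i})\ge C\|\theta_i-\theta_{0i}\|^2(1+o(1))$ on a fixed neighbourhood, uniformly in $i$. Coercivity outside that neighbourhood is immediate from the linear growth of $\varrho_\tau$ at $\pm\infty$, so $S_{T,i}^{\ast}$ grows without bound and its minimiser cannot escape a fixed compact set; this replaces the role played by convexity in confining the estimator.

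Next I would prove the uniform concentration bound, the analogue of \eqref{A1}, namely
\[ \max_{i\le N} P\Big[\sup_{\theta_i\in B_i(\delta)}\big|\,(S_{T,i}^{\ast}-\bar{S}_i^{\ast})(\theta_i)-(S_{T,i}^{\ast}-\bar{S}_i^{\ast})(\theta_{0i})\,\big|>\epsilon\Big]=o(N^{-1}). \]
Because $\varrho_\tau$ is globally Lipschitz with bounded derivative, the increments are controlled exactly as in Lemma~1: I would first absorb the local-linear remainder (the $m(X_{it})$ term) into a $\bar{O}(h^2)$ bias, then cover $B_i(\delta)$ by finitely many points and apply Bernstein's inequality to each, using that $h\|W_{it}\|_1K_{it}$ is uniformly bounded and $\Ex[\|W_{it}\|_1^2K_{it}^2]=\bar{O}(h^d)$. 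The resulting exponential bound is $o(N^{-1})$ because (B4), via \eqref{B4_1}, guarantees $\log N\ll Th^{d+1}$. Combining this with the well-separation bound and a union bound over $i$ gives $\|\check\phi-\phi_0\|_1=o_P(1)$: if $\|\check\phi-\phi_0\|_1>\delta$, then since $\check\theta$ minimises $\sum_i S_{T,i}^{\ast}$ there must be some $i$ with $S_{T,i}^{\ast}(\check\theta_i)\le S_{T,i}^{\ast}(\theta_{0i})$ and $\check\theta_i\notin B_i(\delta)$, which contradicts the two bounds with probability tending to one. The statement for $\max_{i\le N}|\check\eta_i-\eta_{0i}|$ then follows by the same argument applied to the per-$i$ minimisation over $\eta_i$ with $\phi$ fixed at $\check\phi$, handling the extra term $S_{T,i}^{\ast}(\eta_{0i},\check\phi)-S_{T,i}^{\ast}(\theta_{0i})$ as in \eqref{A3}--\eqref{A5} using the just-established $\|\check\phi-\phi_0\|_1=o_P(1)$.

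The main obstacle is precisely the loss of convexity. I must verify that the population criterion retains a genuine quadratic lower bound despite the sign changes of $g$, and that the minimiser stays in a compact region without the convexity crutch. Both are handled by the smoothing-over-$u_{it}$ computation and the linear growth of $\varrho_\tau$ at infinity, with Assumptions (B2)--(B3) ensuring the non-convex perturbation contributes only at the negligible order $\bar{O}(b^m)+\bar{O}(h)=o(1)$.
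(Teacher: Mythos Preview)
Your plan is workable in spirit but takes a noticeably harder road than the paper, and it contains a mismatch that would need fixing. The paper does \emph{not} attack the non-convex smoothed criterion directly. Instead it observes that $\varrho_\tau(u)=\rho_\tau(u)$ whenever $|u|>b$, so the pointwise gap satisfies $\sup_{\theta_i}|S_{T,i}(\theta_i)-S_{T,i}^{\ast}(\theta_i)|\le Cb\cdot T^{-1}\sum_t K_{it}/h^d=\bar{O}_P(b)$. It then runs the \emph{convex} argument of Lemma~1 verbatim on $S_{T,i}$: from $S_{T,i}^{\ast}(\check\theta_i)\le S_{T,i}^{\ast}(\theta_{0i})$ one gets $S_{T,i}(\check\theta_i)\le S_{T,i}(\theta_{0i})+\bar{O}_P(b)$, and the convexity reduction to the boundary point $\tilde\theta_i$ goes through with only an extra $\bar{O}_P(b)=o_P(1)$ term on the right. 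This recycles all of Lemma~1 at essentially no cost, and avoids any compactification or well-separation analysis of the non-convex criterion.

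Your direct argmin-consistency route requires (i) confining $\check\theta_i$ to a fixed compact $K_i$ via sample-level coercivity, uniformly in $i$, and (ii) uniform concentration of $S_{T,i}^{\ast}-\bar S_i^{\ast}$ over that compact. But you only state the concentration bound on $B_i(\delta)$, then use it to draw a conclusion about $\check\theta_i\notin B_i(\delta)$; without the convexity reduction to $\tilde\theta_i\in\partial B_i(\delta)$ this is a non sequitur. You would need the supremum over $K_i\setminus B_i(\delta)$, together with a uniform-in-$i$ lower bound $\inf_{\theta_i\in K_i\setminus B_i(\delta)}[\bar S_i^{\ast}(\theta_i)-\bar S_i^{\ast}(\theta_{0i})]>c(\delta)$. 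Both are obtainable (the same Bernstein/covering scheme works on any fixed compact, and the quadratic lower bound plus linear growth patch together), but the argument is longer and more delicate than the paper's one-line transfer via the Horowitz bound $|\varrho_\tau-\rho_\tau|\le Cb$.
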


\begin{proof}
First consider $\check{\phi}$. By the definition of the estimators, there exists some $i\leq N$ such that $S_{T,i}^{\ast}(\check{\theta}_i) \leq S_{T,i}^{\ast}(\theta_{0i}) $.
For any $\delta>0$, $B_i(\delta)$ is as defined in the proof of Lemma 1. Suppose that $\|\check{\phi}-\phi_0\|_1>\delta$, then $\check{\theta}_i \in B^C_i(\delta) $. Similarly, by the convexity of $S_{T,i}$, we have
\[ S_{T,i}(\tilde{\theta}_i)  - S_{T,i}(\theta_{0i})  \leq \check{r}_i  (S_{T,i}(\check{\theta}_i) -S_{T,i}(\theta_{0i}) ) ,   \]
where $\tilde{\theta}_i  = \check{r}_i \check{\theta}_i+(1-\check{r}_i) \theta_{0i} $  is on the boundary of $B_i(\delta)$ and $ \check{r}_i= \delta/(|\check{\eta}_i-\eta_{0i} |+\|\check{\phi}-\phi_0\|_1)<1$. Adding and subtracting terms gives
\[ \bar{S}_{i}(\tilde{\theta}_i) - \bar{S}_{i}(\theta_{0i}) \leq  S_{T,i}(\theta_{0i}) - \bar{S}_{i}(\theta_{0i}) - [ S_{T,i}(\tilde{\theta}_i) - \bar{S}_{i}(\tilde{\theta}_i)   ]  +\check{r}_i  (S_{T,i}(\check{\theta}_i) -S_{T,i}(\theta_{0i}) -S_{T,i}^{\ast}(\check{\theta}_i) +S_{T,i}^{\ast}(\theta_{0i})) . \]
The last term on the right-hand side of the above inequality is $\bar{O}_P(b)$, because we can show that $\sup_{\theta_i} | S_{T,i}(\theta_i) -S_{T,i}^{\ast}(\theta_i) | \leq  Cb \cdot T^{-1}\sum_{t=1}^{T} K_{it}/h^d$ for some $C<\infty$ (see \citealt{horowitz1998bootstrap}), and it is easy to show that $\sup_{i\leq N}T^{-1}\sum_{t=1}^{T} K_{it}/h^d = O_P(1)+O_P(\sqrt{\log N/Th^d})$. The rest of the proof is similar to the proof of Lemma 1 given that $b\rightarrow0$ as $N,T\rightarrow\infty$.

Now consider $\check{\eta}_i$. By definition of the estimators, we have $S_{T,i}^{\ast}(\check{\eta}_i,\check{\phi}) \leq S_{T,i}^{\ast}(\eta_{0i},\check{\phi})$ for all $i\leq N$. Therefore, if $|\check{\eta}_i -\eta_{0i}|>\delta$ for some $i\leq N$, then we have $\check{\theta}_i \in B_i^C(\delta)$, and
\[  ( S_{T,i}(\tilde{\theta}_i)  - S_{T,i}(\theta_{0i}) )/\check{r}_i  \leq  S_{T,i}(\check{\theta}_i) -S_{T,i}(\theta_{0i}) . \]
where $r_i$ and $\tilde{\theta}$ is as define above. Adding the subtracting terms, we can write
\[ S_{T,i}(\check{\theta}_i) -S_{T,i}(\theta_{0i}) =S_{T,i}^{\ast}(\check{\theta}_i) - S_{T,i}^{\ast}(\eta_{0i},\check{\phi})+S_{T,i}^{\ast}(\eta_{0i},\check{\phi})   -S_{T,i}^{\ast}(\theta_{0i})+  \left[ S_{T,i}(\check{\theta}_i) -S_{T,i}^{\ast}(\check{\theta}_i)  -S_{T,i}(\theta_{0i}) +S_{T,i}^{\ast}(\theta_{0i})\right]. \]
Thus, from $S_{T,i}^{\ast}(\check{\eta}_i,\check{\phi}) \leq S_{T,i}^{\ast}(\eta_{0i},\check{\phi})$ we have
\begin{multline*}
\bar{S}_{i}(\tilde{\theta}_i) - \bar{S}_{i}(\theta_{0i}) \leq  S_{T,i}(\theta_{0i}) - \bar{S}_{i}(\theta_{0i}) - [ S_{T,i}(\tilde{\theta}_i) - \bar{S}_{i}(\tilde{\theta}_i)   ] \\
+ \check{r}_i \left(S_{T,i}^{\ast}(\eta_{0i},\check{\phi})   -S_{T,i}^{\ast}(\theta_{0i})+  \left[ S_{T,i}(\check{\theta}_i) -S_{T,i}^{\ast}(\check{\theta}_i)  -S_{T,i}(\theta_{0i}) +S_{T,i}^{\ast}(\theta_{0i})\right]  \right).
\end{multline*}
Since $\| \check{\phi} - \phi_0\|_1=o_P(1)$, the last term on the RHS of the above inequality is $\bar{o}_P(1)+\bar{O}_P(b)$. The rest of the proof is similar to the proof of Lemma 1. \end{proof}


%
%

\vspace{0.5cm}

\begin{lemma} Let $\mathcal{B}_{i,\delta}$ be a neighbourhood of $\theta_{0i}$, then under Assumptions B1 to B4, we have
\[
\sup_{\theta_i\in \mathcal{B}_{i,\delta} }  \| S_{T,i}^{\ast \phi \phi \phi_j}(\theta_i) \| =\bar{O}_P(1),  \sup_{\theta_i\in \mathcal{B}_{i,\delta} }  \| S_{T,i}^{\ast \eta_i \phi \phi_j}(\theta_i) \| =\bar{O}_P(1),   \sup_{\theta_i\in \mathcal{B}_{i,\delta} }  \| S_{T,i}^{\ast \phi \phi \eta_i}(\theta_i) \| =\bar{O}_P(1),
\]
\[ \sup_{\theta_i\in \mathcal{B}_{i,\delta} }  \| S_{T,i}^{\ast \eta_i \phi \eta_i}(\theta_i) \| =\bar{O}_P(1), \sup_{\theta_i\in \mathcal{B}_{i,\delta} }  \| S_{T,i}^{\ast \phi \eta_i \eta_i}(\theta_i) \| =\bar{O}_P(1),  \sup_{\theta_i\in \mathcal{B}_{i,\delta} }  | S_{T,i}^{\ast \eta_i \eta_i \eta_i}(\theta_i) | =\bar{O}_P(1),  \]
\[
 \sup_{\theta_i\in \mathcal{B}_{i,\delta} }  \| S_{T,i}^{\ast \phi \eta_i \phi_j}(\theta_i) \| =\bar{O}_P(1),  \sup_{\theta_i\in \mathcal{B}_{i,\delta} }  | S_{T,i}^{\ast \eta_i \eta_i \phi_j}(\theta_i) | =\bar{O}_P(1).
\]
\end{lemma}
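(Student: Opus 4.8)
The plan is to compute each of the eight third-order partial derivatives explicitly and to show that, although $\varrho_\tau$ has a third derivative of pointwise size $b^{-2}$, the corresponding averages in $S_{T,i}^{\ast}$ are only of order one, because of the cancellation that occurs once $\varrho_\tau'''$ is integrated against the smooth conditional density of $u_{it}$. Writing $r_{it}(\theta_i)=Y_{it}-\theta_i'W_{it}$ and differentiating $\varrho_\tau(u)=[\tau-G(u/b)]u$ three times (using $G'=-g$, and that $g$ is twice continuously differentiable with support $[-1,1]$ as in the simulation kernel), one gets
\[ \varrho_\tau'''(r)=\frac{3}{b^2}g'(r/b)+\frac{r}{b^3}g''(r/b), \]
so that every quantity listed in the lemma is an instance of the single object
\[ S_{T,i}^{\ast,(abc)}(\theta_i)=-\frac{1}{Th^d}\sum_{t=1}^{T}\varrho_\tau'''\big(r_{it}(\theta_i)\big)\,W_{it}^{(a)}W_{it}^{(b)}W_{it}^{(c)}K_{it}, \]
where $(a,b,c)$ range over the coordinates $\{\eta_i,\phi_1,\dots,\phi_d\}$, with $W_{it}^{(0)}=1$ and $W_{it}^{(j)}=(X_{it,j}-x_j)/h$. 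Since $K$ has bounded support (A5), on $\{K_{it}\neq0\}$ the product $W_{it}^{(a)}W_{it}^{(b)}W_{it}^{(c)}$ is uniformly bounded, so all eight cases are handled by one argument (the matrix norms are controlled entrywise). Note that the naive bound $|\varrho_\tau'''|\lesssim b^{-2}$ only delivers order $b^{-1}$ and is thus too crude.

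I would split $S_{T,i}^{\ast,(abc)}(\theta_i)$ into its mean and a centered fluctuation. The heart of the argument is the mean: I claim $\Ex[\varrho_\tau'''(r_{it}(\theta_i))\mid X_{it}]=O(1)$ uniformly in $i$ and in $\theta_i\in\mathcal{B}_{i,\delta}$. Writing $Y_{it}=\theta_{0i}'W_{it}+m(X_{it})+u_{it}$ as in the proof of Theorem~1, we have $r_{it}(\theta_i)=u_{it}-s_{it}$ with $s_{it}=(\theta_i-\theta_{0i})'W_{it}-m(X_{it})$ bounded on $\mathrm{supp}(K)$. The substitution $u=bw+s_{it}$ gives $\Ex[\varrho_\tau'''(r_{it})\mid X_{it}]=\frac{1}{b}\int[3g'(w)+wg''(w)]f_{u,i}(bw+s_{it}\mid X_{it})\,dw$; integrating by parts twice and using the compact support of $g$ collapses $3g'(w)+wg''(w)$ against the density and produces a factor $b$ that cancels the prefactor, leaving
\[ \Ex[\varrho_\tau'''(r_{it})\mid X_{it}]=\int g(w)\big[-f_{u,i}^{(1)}(bw+s_{it}\mid X_{it})+bw\,f_{u,i}^{(2)}(bw+s_{it}\mid X_{it})\big]\,dw=-f_{u,i}^{(1)}(s_{it}\mid X_{it})+\bar{O}(b), \]
which is $O(1)$ by the bounded-derivative condition (B2). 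Taking the outer expectation against $W^{(a)}W^{(b)}W^{(c)}K_{it}$ contributes a factor $\Ex[K_{it}]=\bar{O}(h^d)$ that cancels the $h^{-d}$ out front, so $\Ex[S_{T,i}^{\ast,(abc)}(\theta_i)]=\bar{O}(1)$ uniformly over $i$ and $\theta_i\in\mathcal{B}_{i,\delta}$.

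For the centered part I would apply an empirical-process maximal inequality (Proposition B.1 of \citet{kato2012asymptotics}, or Lemmas 2.2.9--2.2.10 of \citet{vanweak} as used above) to the VC-subgraph class $\{\theta_i\mapsto h^{-d}\varrho_\tau'''(r_{it}(\theta_i))W_{it}^{(a)}W_{it}^{(b)}W_{it}^{(c)}K_{it}:\theta_i\in\mathcal{B}_{i,\delta}\}$. A direct computation, using $(\varrho_\tau''')^2\lesssim b^{-4}\mathbf{1}\{|r_{it}|\le b\}$, $\Pr(|r_{it}|\le b\mid X_{it})=\bar{O}(b)$ and $\Ex[K_{it}^2]=\bar{O}(h^d)$, gives a per-summand variance of order $(h^db^3)^{-1}$ and an envelope of order $(h^db^2)^{-1}$. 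The maximal inequality then bounds $\sup_{\theta_i\in\mathcal{B}_{i,\delta}}|S_{T,i}^{\ast,(abc)}(\theta_i)-\Ex(\cdot)|$ by a term of order $\{\log N/(Th^db^3)\}^{1/2}\vee\{\log N/(Th^db^2)\}$ after a union bound over $i\le N$, which is $o_P(1)$ since $Th^db^3\to\infty$ under (B4) and $N\asymp Th^d$ makes $\log N$ negligible. Combining the $\bar{O}(1)$ mean with the $\bar{o}_P(1)$ fluctuation yields $\max_{i\le N}\sup_{\theta_i\in\mathcal{B}_{i,\delta}}\|S_{T,i}^{\ast,(abc)}(\theta_i)\|=O_P(1)$, i.e. $\bar{O}_P(1)$. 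The main obstacle is precisely the mean step: $\varrho_\tau'''$ cannot be bounded pointwise, and one must extract the exact cancellation obtained by integrating its $b^{-2}$-sized oscillation against the smooth density $f_{u,i}$, while keeping the resulting bound uniform over the shrinking threshold window $|r_{it}|\le b$, over $\theta_i\in\mathcal{B}_{i,\delta}$, and over $i\le N$.
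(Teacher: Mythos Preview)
Your proposal is correct and follows essentially the same route as the paper: the paper proves only the case $S_{T,i}^{\ast\eta_i\eta_i\eta_i}$ and declares the others similar, computing $\varrho_\tau'''(r)=3g'(r/b)/b^2+rg''(r/b)/b^3$, then splitting into a mean term (reduced via integration by parts against $f_{u,i}$ to $\Ex[f_{u,i}^{(1)}(\cdot)K_{it}/h^d]+\bar O(b^m)+\bar O(h^2)$) and a centered fluctuation controlled by an empirical-process bound of order $o_P(\log N/\sqrt{Th^db^3})$ via Lemma~B.2 of \citet{galvao2016smoothed}. Your unified $S_{T,i}^{\ast,(abc)}$ notation is a clean way to treat all eight cases at once; the only cosmetic differences are that the paper separates the two pieces of $\varrho_\tau'''$ and integrates each by parts individually (arriving at the sharper remainder $\bar O(b^m)$ rather than your $\bar O(b)$), and that the parametric class here is handled through bracketing/Lipschitz arguments rather than being literally VC-subgraph---but the covering-number conclusion and resulting rate are the same.
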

\begin{proof}
To save space, we only prove that $ \sup_{\theta_i\in \mathcal{B}_{i,\delta} }  | S_{T,i}^{\ast \eta_i \eta_i \eta_i}(\theta_i) | =\bar{O}_P(1)$. The proofs of the other results are similar. Define $\varrho_{\tau}^{(j)}(u) = \partial^j \varrho_{\tau}(u)/ \partial u^j  $ and $ g^{(j)} = \partial ^j g(u)/\partial u^j$. Then we can write
\begin{multline*}
S_{T,i}^{\ast \eta_i \eta_i \eta_i}(\theta_i) = -\frac{1}{T}\sum_{t=1}^{T}\varrho_{\tau}^{(3)}(Y_{it} -\theta_i' W_{it}) K_{it}/h^d    \\
= -\frac{3 }{T}\sum_{t=1}^{T}g^{(1)}\left( \frac{Y_{it} -\theta_i' W_{it} }{b}\right) \cdot \frac{1}{b^2} \cdot \frac{K_{it}}{h^d} -\frac{1 }{T}\sum_{t=1}^{T}g^{(2)}\left( \frac{Y_{it} -\theta_i' W_{it} }{b}\right) \cdot \frac{Y_{it} -\theta_i' W_{it}}{b^3} \cdot \frac{K_{it}}{h^d} .
\end{multline*}
Thus, we have
\begin{multline*}
\bar{S}_{T,i}^{\ast \eta_i \eta_i \eta_i}(\theta_i) = \Ex\left[ S_{T,i}^{\ast \eta_i \eta_i \eta_i}(\theta_i)\right] = -3 \Ex\left[ g^{(1)}\left( \frac{Y_{it} -\theta_i' W_{it} }{b}\right) \cdot \frac{1}{b^2} \cdot \frac{K_{it}}{h^d}\right] \\- \Ex\left[ g^{(2)}\left( \frac{Y_{it} -\theta_i' W_{it} }{b}\right) \cdot \frac{Y_{it} -\theta_i' W_{it}}{b^3} \cdot \frac{K_{it}}{h^d}\right].
\end{multline*}
First,
\begin{eqnarray*}
&&\Ex\left[ g^{(1)}\left( \frac{Y_{it} -\theta_i' W_{it} }{b}\right) \cdot \frac{1}{b^2} \cdot \frac{K_{it}}{h^d}\right]  \\
&=& \Ex\left[ \int g^{(1)} \left( \frac{u -(\theta_i-\theta_{0i})' W_{it} + m(X_{it}) }{b}\right) \frac{1}{b^2} f_{u,i}(u|X_{it})du \cdot \frac{K_{it}}{h^d} \right] \\
&= &-  \Ex\left[ \int g(v )f_{u,i}^{(1)} \left(vb + (\theta_i-\theta_{0i})' W_{it} - m(X_{it}) |X_{it} \right)dv\cdot \frac{K_{it}}{h^d} \right]  \\
&= &-  \Ex\left[ f_{u,i}^{(1)} \left( (\theta_i-\theta_{0i})' W_{it} - m(X_{it}) |X_{it} \right)\cdot \frac{K_{it}}{h^d} \right] +\bar{O}(b^m) \\
& = &-  \Ex\left[ f_{u,i}^{(1)} \left( (\theta_i-\theta_{0i})' W_{it}  |X_{it} \right)\cdot \frac{K_{it}}{h^d} \right]  + \bar{O}(b^m)+\bar{O}(h^2) =\bar{O}(1).
\end{eqnarray*}

Second,
\begin{eqnarray*}
&& \Ex\left[ g^{(2)}\left( \frac{Y_{it} -\theta_i' W_{it} }{b}\right) \cdot \frac{Y_{it} -\theta_i' W_{it}}{b^3} \cdot \frac{K_{it}}{h^d}\right]\\
&=& \Ex\left[ \int g^{(2)} \left( \frac{u -(\theta_i-\theta_{0i})' W_{it} + m(X_{it}) }{b}\right) \frac{u -(\theta_i-\theta_{0i})' W_{it} + m(X_{it})}{b^3} f_{u,i}(u|X_{it})du \cdot \frac{K_{it}}{h^d} \right] \\
&= & \Ex\left[ \int g^{(2)} \left( v\right) \frac{v}{b} f_{u ,i}\left(vb + (\theta_i-\theta_{0i})' W_{it} - m(X_{it}) |X_{it} \right)dv \cdot \frac{K_{it}}{h^d} \right]  \\
&=& - \Ex\left[ \int g^{(1)} \left( v\right) v f_{u ,i}^{(1)}\left(vb + (\theta_i-\theta_{0i})' W_{it} - m(X_{it}) |X_{it} \right)dv \cdot \frac{K_{it}}{h^d} \right]\\
&&  -\Ex\left[ \int g^{(1)} \left( v\right) 1/b  f_{u ,i}\left(vb + (\theta_i-\theta_{0i})' W_{it} - m(X_{it}) |X_{it} \right)dv \cdot \frac{K_{it}}{h^d} \right] \\
&=& b\cdot \Ex\left[ \int g\left( v\right) v f_{u ,i}^{(2)}\left(vb + (\theta_i-\theta_{0i})' W_{it} - m(X_{it}) |X_{it} \right)dv \cdot \frac{K_{it}}{h^d} \right]  -2 \Ex\left[ g^{(1)}\left( \frac{Y_{it} -\theta_i' W_{it} }{b}\right) \cdot \frac{1}{b^2} \cdot \frac{K_{it}}{h^d}\right]   \\
&=&\bar{O}(b^m)  -2 \Ex\left[ g^{(1)}\left( \frac{Y_{it} -\theta_i' W_{it} }{b}\right) \cdot \frac{1}{b^2} \cdot \frac{K_{it}}{h^d}\right] .
\end{eqnarray*}
It then follows that
\begin{multline}\label{A22} \bar{S}_{T,i}^{\ast \eta_i \eta_i \eta_i}(\theta_i) = - \Ex\left[ g^{(1)}\left( \frac{Y_{it} -\theta_i' W_{it} }{b}\right) \cdot \frac{1}{b^2} \cdot \frac{K_{it}}{h^d}\right] +\bar{O}(b^m)+\bar{O}(h^2) \\
=\Ex\left[ f_{u,i}^{(1)} \left( (\theta_i-\theta_{0i})' W_{it}  |X_{it} \right)\cdot \frac{K_{it}}{h^d} \right]  + \bar{O}(b^m)+\bar{O}(h^2) =\bar{O}(1).
\end{multline}

Third, consider
\[S_{T,i}^{\ast \eta_i \eta_i \eta_i}(\theta_i) - \bar{S}_{T,i}^{\ast \eta_i \eta_i \eta_i}(\theta_i)  =-\frac{1}{T}\sum_{t=1}^{T} \left\{ \varrho_{\tau}^{(3)}(Y_{it} -\theta_i' W_{it}) K_{it}/h^d  - \Ex\left[\varrho_{\tau}^{(3)}(Y_{it} -\theta_i' W_{it}) K_{it}/h^d\right] \right\}.\]
Similar to Lemma B.2 of \citet{galvao2016smoothed}, we can show that
\begin{equation}\label{A23}
    \max_{1\leq i\leq N}\sup_{\theta_i \in\mathcal{B}_{i,\delta} }\left\| S_{T,i}^{\ast \eta_i \eta_i \eta_i}(\theta_i) - \bar{S}_{T,i}^{\ast \eta_i \eta_i \eta_i}(\theta_i)  \right\| = o_P\left( \frac{\log N}{ \sqrt{Tb^3 h^d}}\right)
\end{equation}

Finally, the desired result follows from \eqref{A22}, \eqref{A23} and $\log N/\sqrt{Tb^3 h^d}\rightarrow 0$.
\end{proof}

\vspace{0.5cm}

\begin{lemma} Under Assumptions B1 to B4, we have
\[
S_{T,i}^{\ast \eta_i  \eta_i} =c_0 f_i(0)+\bar{o}_P(\log N/\sqrt{Th^d b}) +\bar{O}_P(h+b^m),\quad S_{T,i}^{\ast \phi  \eta_i} =\mathcal{C}_1 f_i(0)+\bar{o}_P(\log N/\sqrt{Th^d b}) +\bar{O}_P(h+b^m)\]
\[\quad S_{T,i}^{\ast \phi \phi}=\mathcal{C}_2 f_i(0)+\bar{o}_P(\log N/\sqrt{Th^d b}) +\bar{O}_P(h+b^m) \]
\end{lemma}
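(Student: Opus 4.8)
The plan is to treat the three displayed objects as the three blocks of the Hessian of $S_{T,i}^{\ast}$ evaluated at $\theta_{0i}$, and to analyze each by splitting it into its expectation and a centered remainder. Writing $\varrho_{\tau}''(u) = 2g(u/b)/b + u\,g'(u/b)/b^{2}$ and recalling $Y_{it}-\theta_{0i}'W_{it}=m(X_{it})+u_{it}$, each block is a sample average of $\varrho_{\tau}''(m(X_{it})+u_{it})$ weighted by $K_{it}/h^{d}$ and by one of $1$, $(X_{it}-x)/h$, or $(X_{it}-x)(X_{it}-x)'/h^{2}$. First I would establish the pointwise (conditional on $X_{it}$) expansion $\Ex[\varrho_{\tau}''(m(X_{it})+u_{it})\mid X_{it}] = f_{u,i}(0\mid X_{it}) + \bar{O}(h^{2}+b^{m})$, which is the common engine behind all three statements.

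To obtain that expansion I would change variables to $v=(m(X_{it})+u)/b$ and treat the two pieces of $\varrho_{\tau}''$ separately. The $2g(v)/b$ piece integrates to $2\int g(v)\,f_{u,i}(vb-m\mid X_{it})\,dv$, while the singular piece $u\,g'(u/b)/b^{2}$ becomes $\int v\,g'(v)\,f_{u,i}(vb-m\mid X_{it})\,dv$, which I would integrate by parts using $g(\pm 1)=0$ to turn it into $-\int g(v)f_{u,i}(vb-m\mid X_{it})\,dv - b\int g(v)\,v\,f_{u,i}^{(1)}(vb-m\mid X_{it})\,dv$. Summing the two pieces cancels one copy of the density and leaves $\int g(v)\,f_{u,i}(vb-m\mid X_{it})\,dv$ plus a term that is $\bar{O}(b^{m})$. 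A Taylor expansion of $f_{u,i}(\cdot\mid X_{it})$ in $vb$, together with the symmetry and vanishing-moment conditions of (B3) and the smoothness bound (B2), collapses the integral to $f_{u,i}(-m\mid X_{it})+\bar{O}(b^{m})$, and finally $|m(X_{it})|=\bar{O}(h^{2})$ on $\mathrm{supp}(K)$ yields $f_{u,i}(0\mid X_{it})+\bar{O}(h^{2}+b^{m})$. Substituting this back and changing variables $X_{it}=x+hu$ over the boundary region $\mathcal{B}$, a one-term Taylor expansion of $f_{u,i}(0\mid x+hu)f_{X,i}(x+hu)$ produces the leading constants $c_0 f_i(0)$, $\mathcal{C}_1 f_i(0)$, $\mathcal{C}_2 f_i(0)$, using $\int_{\mathcal{B}}K=c_0$, $\int_{\mathcal{B}}uK=\mathcal{C}_1$, $\int_{\mathcal{B}}uu'K=\mathcal{C}_2$, with an $\bar{O}(h)$ error.

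For the centered remainder I would show $\max_{i\le N}\|S_{T,i}^{\ast\phi\phi}-\bar{S}_{T,i}^{\ast\phi\phi}\| = \bar{o}_P(\log N/\sqrt{Th^{d}b})$, and analogously for the other two blocks, by a maximal inequality of exactly the type already invoked in Lemma 5 (cf. Lemma B.2 of \citet{galvao2016smoothed} and Proposition B.1 of \citet{kato2012asymptotics}). The summands $\varrho_{\tau}''(\cdot)K_{it}/h^{d}$ times the bounded weight have envelope of order $(bh^{d})^{-1}$, since $g,g'$ are supported on $[-1,1]$ and $(X_{it}-x)/h=\bar{O}(1)$ on $\mathrm{supp}(K)$, and their second moment is of order $(bh^{d})^{-1}$ because $\varrho_{\tau}''$ is of order $1/b$ on a $u$-set of width of order $b$; this gives the standard-deviation rate $(Tbh^{d})^{-1/2}$ and, after controlling the supremum over the VC class and the union over $i$, the extra $\log N$ factor. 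Combining the bias and the remainder parts delivers the three claimed expansions.

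The step I expect to be the main obstacle is the bias computation for the singular $b^{-2}$ term: one must verify that the integration by parts and the higher-order-kernel cancellations under (B3) leave an error no worse than $\bar{O}(b^{m})$, rather than $\bar{O}(b)$ or $\bar{O}(b^{2})$, since a cruder bound there would fail to be negligible against the $N\asymp Th^{d}$ rate used later in Theorem 2. The remainder-control step is routine given the framework already set up in Lemma 5, but matching the precise $\log N/\sqrt{Th^{d}b}$ rate requires the sharper empirical-process maximal inequality rather than a plain Bernstein argument.
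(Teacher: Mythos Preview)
Your proposal is correct and follows essentially the same route as the paper: split each Hessian block into its expectation and a centered remainder, handle the expectation via the change of variables $v=(m+u)/b$, integration by parts on the $v g'(v)$ piece, and the higher-order kernel moment cancellations of (B3), then control the remainder by the same Galvao--Kato type maximal inequality invoked in Lemma~5. The paper's proof is simply ``similar to the previous lemma, therefore omitted,'' and your sketch is exactly the second-derivative analogue of that Lemma~5 computation, with the correct $b$ (rather than $b^{3}$) appearing in the stochastic rate.
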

\begin{proof}
The proof is similar to the previous lemma, therefore it is omitted.
\end{proof}

\vspace{0.5cm}

\begin{lemma} Under Assumptions B1 to B4, we have
\[
\| \check{\phi} -\phi_0\|= O_P(1/\sqrt{NTh^d}) +O_P(h^2+b^m)  +o_P\left(  \max_{i\leq N} |\check{\eta}_i -\eta_{0i}| \right).
\]
\end{lemma}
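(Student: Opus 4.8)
The plan is to exploit the smoothness of the LLSQR criterion, which—unlike the check-function case of Lemma 3—lets me work entirely with genuine Taylor expansions of the first-order conditions together with the derivative bounds of Lemmas 5 and 6, rather than empirical-process arguments. Since $S_{T,i}^{\ast}$ is three times differentiable in $\theta_i$, the estimator satisfies the exact first-order conditions $S_{T,i}^{\ast \eta_i}(\check{\theta}_i)=0$ for every $i$ and $N^{-1}\sum_{i=1}^{N}S_{T,i}^{\ast \phi}(\check{\theta}_i)=0$. Writing $S_{T,i}^{\ast \eta_i}$ and $S_{T,i}^{\ast \phi}$ for the gradients evaluated at $\theta_{0i}$, I first apply a first-order Taylor expansion of each condition about $\theta_{0i}$ with the Hessian taken at an intermediate point. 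By Lemma 5 the third derivatives are uniformly $\bar{O}_P(1)$ on a neighbourhood of $\theta_{0i}$, so the intermediate Hessian differs from the one at $\theta_{0i}$ by $O_P(1)\cdot\|\check{\theta}_i-\theta_{0i}\|$, and the whole second-order remainder of each expansion is $O_P(\|\check{\theta}_i-\theta_{0i}\|^2)$ uniformly in $i$.

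Next I would profile out the individual effects. Solving the $\eta_i$-equation for $\check{\eta}_i-\eta_{0i}$ and inserting the expansions of Lemma 6, namely $S_{T,i}^{\ast \eta_i \eta_i}=c_0 f_i(0)+\text{err}$, $S_{T,i}^{\ast \phi \eta_i}=\mathcal{C}_1 f_i(0)+\text{err}$ and $S_{T,i}^{\ast \phi \phi}=\mathcal{C}_2 f_i(0)+\text{err}$ (with $\text{err}=\bar{o}_P(\log N/\sqrt{Th^d b})+\bar{O}_P(h+b^m)$) into the $\phi$-equation, the coefficient of $\check{\phi}-\phi_0$ collapses to the profiled Hessian $\bar{f}_N(0)(\mathcal{C}_2-\mathcal{C}_1\mathcal{C}_1'/c_0)=\bar{f}_N(0)\mathcal{C}$, mirroring the derivation of \eqref{A6}. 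This produces
\[ \bar{f}_N(0)\,\mathcal{C}\,(\check{\phi}-\phi_0)=-\frac{1}{N}\sum_{i=1}^{N}\Big[S_{T,i}^{\ast \phi}-\tfrac{\mathcal{C}_1}{c_0}S_{T,i}^{\ast \eta_i}\Big]+(\text{remainders}), \]
where $\mathcal{C}>0$ by (B1) and $c_1<\bar{f}_N(0)<c_2$ by (A3), so the left-hand operator is invertible.

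The orders then follow from three bounds. For the leading score I use $S_{T,i}^{\ast \phi}=-(Th^d)^{-1}\sum_t \varrho_{\tau}^{(1)}(u_{it}+m(X_{it}))\frac{X_{it}-x}{h}K_{it}$ and the analogous expression for $S_{T,i}^{\ast \eta_i}$, so the combination is an average of $NT$ terms independent across $(i,t)$ by (A2). Its centred part has variance $O(1/(NTh^d))$, since $\varrho_{\tau}^{(1)}$ is bounded and $\Ex[K_{it}^2/h^{2d}]=O(h^{-d})$, giving $O_P(1/\sqrt{NTh^d})$; its mean is $O(h^2+b^m)$, the $h^2$ coming from $m(X_{it})=O(h^2)$ on $\mathrm{supp}(K)$ and the $b^m$ from an integration-by-parts computation (as in Lemma 5) in which the $m$-th order kernel $g$ annihilates every smoothing term below order $b^m$. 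For the remainders, the Taylor remainder $O_P(\|\check{\theta}_i-\theta_{0i}\|^2)$ and the products of the Lemma 6 error terms with $(\check{\eta}_i-\eta_{0i})$ and $(\check{\phi}-\phi_0)$ are all, by the consistency in Lemma 4 and the bandwidth relation $\log N/\sqrt{Th^d b}\to0$, of order $o_P(\max_{i\leq N}|\check{\eta}_i-\eta_{0i}|)+o_P(\|\check{\phi}-\phi_0\|)$. Absorbing the $o_P(\|\check{\phi}-\phi_0\|)$ term into the left-hand side and inverting $\bar{f}_N(0)\mathcal{C}$ yields the claim.

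I expect the main obstacle to be the careful bookkeeping of the residual coupling to the incidental parameters. The exact $\mathcal{C}_1/c_0$ cancellation removes the leading dependence on $\check{\eta}_i-\eta_{0i}$, but the gap between $S_{T,i}^{\ast \phi \eta_i}/S_{T,i}^{\ast \eta_i \eta_i}$ and its limit $\mathcal{C}_1/c_0$—controlled only up to $\bar{o}_P(\log N/\sqrt{Th^d b})$ uniformly in $i$ by Lemma 6—together with the squared deviations $\max_{i}|\check{\eta}_i-\eta_{0i}|^2$ is precisely what forces the irreducible $o_P(\max_{i\leq N}|\check{\eta}_i-\eta_{0i}|)$ term rather than a sharper bound. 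Verifying that this residual is genuinely $o_P$ (and not merely $O_P$) of $\max_{i}|\check{\eta}_i-\eta_{0i}|$, and that the slower term $\max_i|\check{\eta}_i-\eta_{0i}|^2=o_P(\max_i|\check{\eta}_i-\eta_{0i}|)$ is not dominated by the $O_P(1/\sqrt{NTh^d})$ score fluctuation, is exactly where the bandwidth restrictions in (B4) are used; resolving $\max_i|\check{\eta}_i-\eta_{0i}|$ itself is deferred to a subsequent lemma.
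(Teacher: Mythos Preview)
Your proposal is correct and follows essentially the same route as the paper: Taylor-expand the smooth first-order conditions, use Lemmas 5 and 6 to replace the sample Hessian blocks by $c_0 f_i(0)$, $\mathcal{C}_1 f_i(0)$, $\mathcal{C}_2 f_i(0)$ plus negligible errors, profile out $\check{\eta}_i-\eta_{0i}$ to arrive at the equation $\bar{f}_N(0)\,\mathcal{C}\,(\check{\phi}-\phi_0)=-[S_{NT}^{\ast\phi}-(\mathcal{C}_1/c_0)N^{-1}\sum_i S_{T,i}^{\ast\eta_i}]+\text{remainders}$, and then bound the profiled score by $O_P(1/\sqrt{NTh^d})+O_P(h^2+b^m)$ via the decomposition into $\varrho_\tau^{(1)}(u_{it})$ and the $m(X_{it})$-correction. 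The only cosmetic difference is that the paper writes a second-order expansion with third derivatives at an intermediate point (its equations \eqref{A24}--\eqref{A25}), whereas you use a first-order expansion with the Hessian at an intermediate point; by Lemma 5 these yield identical remainder orders.
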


\begin{proof}
Expanding the first order conditions we have
\begin{multline}\label{A24}
0=S_{NT}^{\ast \phi}(\check{\theta})=S_{NT}^{\ast \phi}+S_{NT}^{\ast \phi \phi} \cdot (\check{\phi} - \phi_0) + N^{-1}\sum_{i=1}^{N} S_{T,i}^{\ast \phi \eta_i}\cdot(\check{\eta}_i - \eta_{0i}) +0.5 \sum_{j=1}^d S_{NT}^{\ast \phi \phi \phi_j}(\bar{\theta})(\check{\phi} - \phi_0)(\check{\phi}_{j} - \phi_{0j}) +\\ 0.5  N^{-1}\sum_{i=1}^{N}S_{T,i}^{\ast \phi \phi \eta_i}(\bar{\theta}_i)(\check{\phi} - \phi_0)(\check{\eta}_{i} - \eta_{0i})  + 0.5  N^{-1}\sum_{i=1}^{N}S_{T,i}^{\ast \phi \eta_i \eta_i}(\bar{\theta}_i)(\check{\eta}_{i} - \eta_{0i})^2+ \\0.5N^{-1} \sum_{j=1}^d \sum_{i=1}^{N}S_{T,i}^{\ast \phi \eta_i \phi_j}(\bar{\theta}_i)(\check{\eta}_i - \eta_{0i})(\check{\phi}_{j} - \phi_{0j}),
\end{multline}
\begin{multline}\label{A25}
0=S_{T,i}^{\ast \eta_i}(\check{\theta}_i)=S_{T,i}^{\ast  \eta_i}+S_{T,i}^{\ast \eta_i \eta_i} \cdot (\check{\eta}_i - \eta_{0i}) + S_{T,i}^{\ast \eta_i \phi}\cdot(\check{\phi} - \phi_0) +0.5 \sum_{j=1}^d S_{T,i}^{\ast \eta_i \phi \phi_j}(\bar{\theta}_i)(\check{\phi} - \phi_0)(\check{\phi}_{j} - \phi_{0j}) +\\ 0.5  S_{T,i}^{\ast \eta_i \phi \eta_i}(\bar{\theta}_i)(\check{\phi} - \phi_0)(\check{\eta}_{i} - \eta_{0i})   + 0.5  S_{T,i}^{\ast \eta_i \eta_i \eta_i}(\bar{\theta}_i)(\check{\eta}_{i} - \eta_{0i})^2+0.5 \sum_{j=1}^d S_{T,i}^{\ast \eta_i \eta_i \phi_j}(\bar{\theta}_i)(\check{\eta}_i - \eta_{0i})(\check{\phi}_{j} - \phi_{0j}),
\end{multline}
where $\bar{\theta}_i$ is between $\theta_{0i}$ and $\check{\theta}_i$, and $\bar{\theta} = (\bar{\theta}_1,\ldots,\bar{\theta}_N)$.
It then follows from Lemma 4 to Lemma 6 and \eqref{A24}, \eqref{A25} that
\begin{equation}\label{A26}
\mathcal{C}_2 \bar{f}_N(0)\cdot (\check{\phi} - \phi_0)  =-S_{NT}^{\ast \phi} -\frac{\mathcal{C}_1}{N}\sum_{i=1}^{N}f_i(0)(\check{\eta}_i - \eta_{0i}) +o_P(\|\check{\phi} -\phi_0 \|) +o_P\left(  \max_{i\leq N}|\check{\eta}_i - \eta_{0i}|\right).
\end{equation}
\begin{equation}\label{A27}
c_0f_i(0) \cdot (\check{\eta}_i - \eta_{0i})  =-S_{T,i}^{\ast \eta_i} - f_i(0)\mathcal{C}_1'(\check{\phi} - \phi_{0}) +o_P(\|\check{\phi} -\phi_0 \|) +o_P\left(  \max_{i\leq N}|\check{\eta}_i - \eta_{0i}|\right).
\end{equation}
Plugging \eqref{A26} into \eqref{A27} gives
\begin{equation}\label{A28}
\mathcal{C} \bar{f}_N(0)\cdot (\check{\phi} - \phi_0)  = -\Bigg[ S_{NT}^{\ast \phi} -  \frac{\mathcal{C}_1}{c_0} \frac{1}{N}\sum_{i=1}^{N}S_{T,i}^{\ast \eta_i }\Bigg]+o_P(\|\check{\phi} -\phi_0 \|) +o_P\left(  \max_{i\leq N}|\check{\eta}_i - \eta_{0i}|\right).
\end{equation}
Write $Z_{it}^{\ast}= ((X_{it}-x)/h - \mathcal{C}_1/c_0 )\cdot K_{it}/h^d$, we have
\[S_{NT}^{\ast \phi} -  \frac{\mathcal{C}_1}{c_0} \frac{1}{N}\sum_{i=1}^{N}S_{T,i}^{\ast \eta_i }
=-  \frac{1}{NT} \sum_{i=1}^{N}\sum_{t=1}^{T}\varrho^{(1)}_{\tau}(Y_{it} -\theta_{0i}'W_{it}) Z_{it}^{\ast},
\]
where $\varrho^{(1)}_{\tau}(u)= \tau -G(u/b) + g(u/b)u/b$. We can write
\[S_{NT}^{\ast \phi} -  \frac{\mathcal{C}_1}{c_0} \frac{1}{N}\sum_{i=1}^{N}S_{T,i}^{\ast \eta_i } = -\frac{1}{NT} \sum_{i=1}^{N}\sum_{t=1}^{T}{\varrho}^{(1)}_{\tau}(u_{it}) Z_{it}^{\ast}
- \frac{1}{NT} \sum_{i=1}^{N}\sum_{t=1}^{T}[{\varrho}^{(1)}_{\tau}(Y_{it} -\theta_{0i}'W_{it})-{\varrho}^{(1)}_{\tau}(u_{it}) ]  Z_{it}^{\ast} .\]
The first term on the right-hand side of the above equation is $O_P(1/\sqrt{NTh^d}) +O(b^m)$ by the proof of the next lemma. Next, we focus on the second term on the RHS of the above equation, which can be written as:
\begin{multline}\label{A29}
-\frac{1}{NT} \sum_{i=1}^{N}\sum_{t=1}^{T}{\varrho}^{(2)}_{\tau}(u_{it} )  \left(0.5 (X_{it}-x)' \ddot{q}_{\tau}(x)(X_{it}-x)  + R_{\tau}(x,X_{it}) \right)Z_{it}^{\ast} \\
-\frac{1}{NT} \sum_{i=1}^{N}\sum_{t=1}^{T}{\varrho}^{(3)}_{\tau}[ u_{it} +c0.5 (X_{it}-x)' \ddot{q}_{\tau}(x)(X_{it}-x)  + cR_{\tau}(x,X_{it})]  \left[0.5 (X_{it}-x)' \ddot{q}_{\tau}(x)(X_{it}-x)  + R_{\tau}(x,X_{it})\right]^2Z_{it}^{\ast},
\end{multline}
where $c\in[0,1]$ and we have used the identity: $Y_{it} =   \theta_{0i}'W_{it} + 0.5 (X_{it}-x)' \ddot{q}_{\tau}(x)(X_{it}-x)  + R_{\tau}(x,X_{it})+u_{it}.$

First,
\begin{multline}\label{A30}
\frac{ 1}{NT} \sum_{i=1}^{N}\sum_{t=1}^{T}{\varrho}^{(2)}_{\tau}(u_{it} )  (X_{it}-x)' \ddot{q}_{\tau}(x)(X_{it}-x)  Z_{it}^{\ast} =
\frac{ 1}{N} \sum_{i=1}^{N}\Ex[ {\varrho}^{(2)}_{\tau}(u_{it} )  (X_{it}-x)' \ddot{q}_{\tau}(x)(X_{it}-x)  Z_{it}^{\ast}] + \\
\frac{ 1}{NT} \sum_{i=1}^{N}\sum_{t=1}^{T}\Big[ {\varrho}^{(2)}_{\tau}(u_{it} )  (X_{it}-x)' \ddot{q}_{\tau}(x)(X_{it}-x)  Z_{it}^{\ast}-\Ex[ {\varrho}^{(2)}_{\tau}(u_{it} )  (X_{it}-x)' \ddot{q}_{\tau}(x)(X_{it}-x)  Z_{it}^{\ast}]  \Big].
\end{multline}
For the first term on the RHS of \eqref{A30} we have:
\begin{multline}\label{A31} \Ex \big[ {\varrho}^{(2)}_{\tau}(u_{it} )  (X_{it}-x)' \ddot{q}_{\tau}(x)(X_{it}-x) [(X_{it}-x)/h-\mathcal{C}_1'/c_0] K_{it}/h^d\big]
\\=h^2 f_i(0) \int_{\mathcal{B}} u'\ddot{q}_{\tau}(0)u(u-\mathcal{C}_1/c_0)K(u)du +\bar{O}_P(h^3) = \bar{O}_P(h^2),
\end{multline}
and the second term can be shown to be $O_P(h^2/\sqrt{NTh^db})=o_P(1/\sqrt{NTh^d})$.

Second, we can show that
\[ \frac{1}{NT} \sum_{i=1}^{N}\sum_{t=1}^{T}{\varrho}^{(2)}_{\tau}(u_{it} )  R_{\tau}(x,X_{it})Z_{it}^{\ast} =O_P(h^3)+o_P(1/\sqrt{NTh^d}). \]

Third, we can show in a similar way that the second term of \eqref{A29} is $o_P(h^3)+O_P(h^4/\sqrt{NTh^db^3}) = o_P(h^3) +o_P(1/\sqrt{NTh^d}).$

Combining the results above, we have
\[\frac{1}{NT} \sum_{i=1}^{N}\sum_{t=1}^{T}[{\varrho}^{(1)}_{\tau}(Y_{it} -\theta_{0i}'W_{it})-{\varrho}^{(1)}_{\tau}(u_{it}) ]  Z_{it}^{\ast}  =O_P(h^2)+o_P(1/\sqrt{NTh^d})  \]
and
\[ S_{NT}^{\ast \phi} -  \frac{\mathcal{C}_1}{c_0} \frac{1}{N}\sum_{i=1}^{N}S_{T,i}^{\ast \eta_i }  =O_P(h^2+b^m)+O_P(1/\sqrt{NTh^d}) .\]
Then the desired result follows from \eqref{A28}.\end{proof}


\vspace{0.5cm}

\begin{lemma} Under Assumptions B1 to B4, we have
\[
\max_{i\leq N} |\check{\eta}_i -\eta_{0i} |= O_P(\sqrt{\log N}/ \sqrt{Th^d}).
\]
\end{lemma}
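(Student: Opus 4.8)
The plan is to reduce the claim to a maximal inequality for the individual smoothed score $S_{T,i}^{\ast \eta_i}$ and then close the argument through the first-order expansion already obtained in the proof of Lemma 6. I would start from the per-individual expansion \eqref{A27},
\[ c_0 f_i(0)\,(\check{\eta}_i-\eta_{0i}) = -S_{T,i}^{\ast \eta_i} - f_i(0)\mathcal{C}_1'(\check{\phi}-\phi_0) + o_P\big(\|\check{\phi}-\phi_0\|\big) + o_P\big(\textstyle\max_{i\leq N}|\check{\eta}_i-\eta_{0i}|\big), \]
which is an algebraic consequence of the FOC expansion together with Lemmas 4--6 and does not presuppose any rate. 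Since $c_0>0$ by (A6) and $f_i(0)=f_{u,i}(0|x)f_{X,i}(x)\geq c_1^2>0$ uniformly in $i$ by (A3), the prefactor $c_0 f_i(0)$ is bounded away from zero, so taking maxima over $i$ yields
\[ \max_{i\leq N}|\check{\eta}_i-\eta_{0i}| \lesssim \max_{i\leq N}\big|S_{T,i}^{\ast \eta_i}\big| + \|\check{\phi}-\phi_0\| + o_P\big(\|\check{\phi}-\phi_0\|\big)+o_P\big(\textstyle\max_{i\leq N}|\check{\eta}_i-\eta_{0i}|\big). \]
Lemma 6 already gives $\|\check{\phi}-\phi_0\| = O_P((NTh^d)^{-1/2}) + O_P(h^2+b^m) + o_P(\max_{i}|\check{\eta}_i-\eta_{0i}|)$, so the only genuinely new ingredient is the uniform bound $\max_{i\leq N}|S_{T,i}^{\ast \eta_i}| = O_P(\sqrt{\log N}/\sqrt{Th^d})$.

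To obtain that bound I would write $S_{T,i}^{\ast \eta_i} = -\frac{1}{Th^d}\sum_{t=1}^T \varrho_\tau^{(1)}(u_{it}+m(X_{it}))K_{it}$, with $m(X_{it})=0.5(X_{it}-x)'\ddot q_\tau(x)(X_{it}-x)+R_\tau(x,X_{it})=\bar{O}(h^2)$ on the support of $K_{it}$, and split it into its mean and a centered part. For the mean, integration by parts combined with the quantile restriction $\p[u_{it}\leq 0\mid X_{it}]=\tau$ and the higher-order kernel conditions on $g$ (Assumption B3 with $m\geq 4$, so all moments of order $1,\dots,m-1$ vanish) shows that $\Ex[\varrho_\tau^{(1)}(u_{it}+m(X_{it}))K_{it}/h^d]=\bar{O}(h^2+b^m)$ uniformly in $i$: the smoothing error from replacing $G(\cdot/b)$ by an indicator is $\bar{O}(b^m)$, while the residual $F(-m)-\tau$ contributes $\bar{O}(h^2)$ because $m=\bar{O}(h^2)$. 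This parallels, but is simpler than, the bias computation for $\bar{S}_{T,i}^{\ast \eta_i\eta_i\eta_i}$ in Lemma 4.

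For the centered part I would apply Bernstein's inequality followed by a union bound over $i$. Because $\varrho_\tau^{(1)}(v)=\tau-G(v/b)+(v/b)g(v/b)$ is uniformly bounded (the term $(v/b)g(v/b)$ is bounded since $g$ is supported on $[-1,1]$) and $K_{it}$ is bounded, each summand is bounded by a constant and has variance $\bar{O}(h^d)$, using $\Ex[K_{it}^2]=\bar{O}(h^d)$. Here the smoothness of $\varrho_\tau^{(1)}$ lets me work at the fixed point $\theta_{0i}$ and apply Bernstein directly, avoiding the empirical-process maximal inequalities needed in Step 3 of the proof of Theorem 1. By (A2) the summands are i.i.d.\ across $t$ for fixed $i$; Bernstein at threshold $M\sqrt{\log N}/\sqrt{Th^d}$ gives a per-individual tail of order $N^{-M^2/C'}$, and a union bound over $i\leq N$ delivers $\max_{i\leq N}|S_{T,i}^{\ast \eta_i}-\Ex S_{T,i}^{\ast \eta_i}| = O_P(\sqrt{\log N}/\sqrt{Th^d})$; the key input is that (B4) forces $\log N\ll Th^d$, so that the variance term dominates in the Bernstein exponent and a large $M$ absorbs the factor $N$. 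Combining with the $\bar{O}(h^2+b^m)$ mean gives the desired bound on $\max_{i\leq N}|S_{T,i}^{\ast \eta_i}|$.

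Finally I would assemble the pieces. Under $N\asymp Th^d$ (B4) one has $(NTh^d)^{-1/2}\asymp (Th^d)^{-1}=o(\sqrt{\log N}/\sqrt{Th^d})$, while $Th^{d+4}\to 0$ and $Th^d b^m\to 0$ give $h^2=o(\sqrt{\log N}/\sqrt{Th^d})$ and $b^m=o(\sqrt{\log N}/\sqrt{Th^d})$; hence Lemma 6 yields $\|\check{\phi}-\phi_0\|=o_P(\sqrt{\log N}/\sqrt{Th^d})+o_P(\max_i|\check{\eta}_i-\eta_{0i}|)$. Substituting these into the displayed inequality leaves $\max_{i\leq N}|\check{\eta}_i-\eta_{0i}| = O_P(\sqrt{\log N}/\sqrt{Th^d}) + o_P(\max_{i\leq N}|\check{\eta}_i-\eta_{0i}|)$, and absorbing the $o_P$ term into the left-hand side gives the claim. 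I expect the uniform maximal inequality for $S_{T,i}^{\ast \eta_i}$ to be the main obstacle: one must verify the boundedness and $\bar{O}(h^d)$ variance of the smoothed score carefully and confirm that $\log N\ll Th^d$ holds under (B4), so that the union bound over the $N$ incidental parameters remains affordable.
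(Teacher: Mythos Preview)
Your proposal is correct and follows essentially the same approach as the paper: reduce via the expansion \eqref{A27} and the rate for $\check{\phi}-\phi_0$ (Lemma~7 in the paper's numbering) to a uniform bound on $\max_{i\leq N}|S_{T,i}^{\ast\eta_i}|$, then split the score into its mean (handled by integration by parts and the higher-order kernel property, giving $\bar O(h^2+b^m)$) and its centered part. The only minor difference is that the paper invokes the Orlicz-norm maximal inequalities of van der Vaart--Wellner (Lemmas~2.2.9--2.2.10) for the centered part, whereas you use Bernstein's inequality plus a union bound over $i$; both yield the same $O_P(\sqrt{\log N}/\sqrt{Th^d})$ rate here since the summands are bounded with variance $\bar O(h^d)$ and $\log N\ll Th^d$ under (B4).
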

\begin{proof}
From Lemma 7 and \eqref{A27}
\begin{equation*}
c_0f_i(0) (\check{\eta}_i -\eta_{0i}) =-S_{T,i}^{\ast \eta_i }+ \bar{O}_P(1/\sqrt{NTh^d}) +\bar{O}_P(h^2+b^m)  +\bar{o}_P\left(  \max_{i\leq N} |\check{\eta}_i -\eta_{0i}| \right),
\end{equation*}
it then suffices to show that
\begin{equation}\label{A32}
\max_{i\leq N} | S_{T,i}^{\ast \eta_i }|  =\max_{i\leq N} \Bigg| \frac{1}{T} \sum_{t=1}^{T}{\varrho}^{(1)}_{\tau}(Y_{it} -\theta_{0i}'W_{it}) K_{it}/h^d \Bigg|  =O_P(\sqrt{\log N}/ \sqrt{Th^d})+O_P(h^2+b^m).
\end{equation}
Write
\begin{multline*}
 \frac{1}{T} \sum_{t=1}^{T}{\varrho}^{(1)}_{\tau}(Y_{it} -\theta_{0i}'W_{it}) K_{it}/h^d =
\frac{1}{T} \sum_{t=1}^{T} \left\{ {\varrho}^{(1)}_{\tau}(Y_{it} -\theta_{0i}'W_{it}) K_{it}/h^d -\Ex\left[{\varrho}^{(1)}_{\tau}(Y_{it} -\theta_{0i}'W_{it}) K_{it}/h^d \right] \right\}  \\+ \left\{ \Ex\left[ {\varrho}^{(1)}_{\tau}(Y_{it} -\theta_{0i}'W_{it}) K_{it}/h^d \right] - \Ex\left[ {\varrho}^{(1)}_{\tau}(u_{it}) K_{it}/h^d \right]  \right\}+\Ex\left[ {\varrho}^{(1)}_{\tau}(u_{it}) K_{it}/h^d \right]
\end{multline*}
From Lemma 2.2.9 and Lemma 2.2.10 of \citet{vanweak} it can shown that the first term on the RHS of the above equation is $\bar{O}_P(\sqrt{\log N}/ \sqrt{Th^d})$, and similar to the proof of Lemma 3 we can show that the second term is $\bar{O}_P(h^2)$. For the last term on the RHS of the above equation, we have
\begin{eqnarray*}
&& \Ex\left[ {\varrho}^{(1)}_{\tau}(u_{it}) K_{it}/h^d \right] \\
&=& \tau  \Ex\left[ K_{it}/h^d \right] - \Ex\left[ G(u_{it}/b) K_{it}/h^d \right] + \Ex\left[ g(u_{it}/b)u_{it}/b \cdot K_{it}/h^d \right] \\
&=&\tau  \Ex\left[ K_{it}/h^d \right]  - \Ex \left[ \int G(u/b) f_{u,i}(u|X_{it}) du\cdot K_{it}/h^d \right] + \Ex \left[ \int g(u/b)u/b\cdot f_{u,i}(u|X_{it}) du\cdot K_{it}/h^d \right]\\
&= &\tau  \Ex\left[ K_{it}/h^d \right]  - \Ex \left[ \int g(v) F_{u,i}(vb|X_{it}) dv\cdot K_{it}/h^d \right]+ b\Ex \left[ \int g(v)v\cdot f_{u,i}(vb|X_{it}) dv\cdot K_{it}/h^d \right] \\
&= &\tau  \Ex\left[ K_{it}/h^d \right]-\tau  \Ex\left[ K_{it}/h^d \right] +\bar{O}(b^m) =\bar{O}(b^m).
\end{eqnarray*}
Then the desired result follows. \end{proof}

\vspace{0.5cm}

\begin{lemma} Under Assumptions B1 to B4, we have
\begin{multline*}
\bar{f}_N(0) \mathcal{C}\cdot (\check{\phi} - \phi_0) = \frac{1}{NT} \sum_{i=1}^{N}\sum_{t=1}^{T}{\varrho}^{(1)}_{\tau}(u_{it})\bigg[  \frac{X_{it}-x}{h}
 - \mathcal{C}_1/c_0\bigg] \frac{K_{it}}{h^d} +h^2 B^{(1)} +O(h^3) \\+ N^{-1}\sum_{i=1}^{N} [ S_{T,i}^{\ast \phi \eta_i} - \mathcal{C}_1 S_{T,i}^{\ast \eta_i  \eta_i} /c_0]S_{T,i}^{\ast \eta_i }/(c_0f_i(0))
+o_P(\|\check{\phi} - \phi_0\|)+o_P(1/\sqrt{NTh^d}).
\end{multline*}
\end{lemma}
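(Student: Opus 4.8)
The plan is to sharpen the expansion already carried out in the proof of Lemma 7, retaining one term that was previously swept into the error. I start from the two second-order Taylor expansions of the first-order conditions, \eqref{A24} and \eqref{A25}; these are legitimate because Lemma 5 guarantees that every third-order derivative $S_{T,i}^{\ast\bullet\bullet\bullet}(\bar\theta)$ is $\bar O_P(1)$ uniformly over a neighbourhood of $\theta_{0i}$. Lemma 7 and Lemma 8 then supply the rates $\|\check\phi-\phi_0\|=O_P(1/\sqrt{NTh^d})+O_P(h^2+b^m)+o_P(\max_i|\check\eta_i-\eta_{0i}|)$ and $\max_i|\check\eta_i-\eta_{0i}|=O_P(\sqrt{\log N}/\sqrt{Th^d})$. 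Feeding these into the quadratic terms of \eqref{A24} and \eqref{A25} shows that every summand carrying two factors of $(\check\phi-\phi_0)$ or $(\check\eta_i-\eta_{0i})$ is $o_P(\|\check\phi-\phi_0\|)+o_P(1/\sqrt{NTh^d})$, since $N\asymp Th^d$ and the bandwidth window in (B4) forces $\log N/(Th^d)$ and the curvature cross-rates to vanish faster than $1/\sqrt{NTh^d}$.

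Next I solve the individual equation \eqref{A25}, writing $\check\eta_i-\eta_{0i}=-(S_{T,i}^{\ast\eta_i\eta_i})^{-1}\big[S_{T,i}^{\ast\eta_i}+S_{T,i}^{\ast\eta_i\phi}(\check\phi-\phi_0)\big]$ up to negligible remainder, and substitute into \eqref{A24}. The decisive step, which is exactly what distinguishes this representation from \eqref{A16} for the LLQR estimator, is that I do \emph{not} replace $S_{T,i}^{\ast\phi\eta_i}/S_{T,i}^{\ast\eta_i\eta_i}$ by its probability limit $\mathcal{C}_1/c_0$. Instead I use
\[ \frac{S_{T,i}^{\ast\phi\eta_i}}{S_{T,i}^{\ast\eta_i\eta_i}}-\frac{\mathcal{C}_1}{c_0}=\frac{c_0 S_{T,i}^{\ast\phi\eta_i}-\mathcal{C}_1 S_{T,i}^{\ast\eta_i\eta_i}}{c_0\,S_{T,i}^{\ast\eta_i\eta_i}}, \]
and replace the denominator $S_{T,i}^{\ast\eta_i\eta_i}$ by $c_0 f_i(0)$ via Lemma 6. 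Multiplying by $S_{T,i}^{\ast\eta_i}$ and averaging over $i$ produces precisely the term $N^{-1}\sum_i\big[S_{T,i}^{\ast\phi\eta_i}-\mathcal{C}_1 S_{T,i}^{\ast\eta_i\eta_i}/c_0\big]S_{T,i}^{\ast\eta_i}/(c_0 f_i(0))$ in the statement, while the left-hand matrix collapses, again by Lemma 6, to $S_{NT}^{\ast\phi\phi}-N^{-1}\sum_i S_{T,i}^{\ast\phi\eta_i}(S_{T,i}^{\ast\eta_i\phi})'/S_{T,i}^{\ast\eta_i\eta_i}=\bar f_N(0)(\mathcal{C}_2-\mathcal{C}_1\mathcal{C}_1'/c_0)+o_P(1)=\bar f_N(0)\mathcal{C}+o_P(1)$.

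It then remains to evaluate the ``oracle'' combination $-\big[S_{NT}^{\ast\phi}-(\mathcal{C}_1/c_0)N^{-1}\sum_i S_{T,i}^{\ast\eta_i}\big]$, which is exactly the object already analysed in Lemma 7. Reusing that analysis, I split it into the linear term $\tfrac{1}{NT}\sum_{i,t}\varrho_\tau^{(1)}(u_{it})Z_{it}^{\ast}$ plus the deterministic local-linear bias coming from \eqref{A30}--\eqref{A31}, whose leading order is $0.5\,h^2\bar f_N(0)\int_{\mathcal{B}}u'\ddot q_\tau(0)u\,(u-\mathcal{C}_1/c_0)K(u)du=\bar f_N(0)\mathcal{C}\cdot h^2 B^{(1)}$, with the fluctuation about this mean being $O_P(h^2/\sqrt{NTh^d b})=o_P(1/\sqrt{NTh^d})$ and the leftover curvature being $O(h^3)$. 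Substituting $Z_{it}^{\ast}=[(X_{it}-x)/h-\mathcal{C}_1/c_0]K_{it}/h^d$ recovers the first term of the displayed identity (the $h^2B^{(1)}$ in the statement being read with the $\bar f_N(0)\mathcal{C}$ normalization of the left-hand side, consistent with Lemma 3 and with Theorem 2), and collecting the dropped quadratics, the curvature fluctuation, and the two ratio-replacement errors yields the stated $o_P(\|\check\phi-\phi_0\|)+o_P(1/\sqrt{NTh^d})$ remainder.

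The main obstacle is the bookkeeping in the second paragraph: I must argue that retaining the single difference $S_{T,i}^{\ast\phi\eta_i}-\mathcal{C}_1 S_{T,i}^{\ast\eta_i\eta_i}/c_0$ is both necessary and sufficient. It is necessary because its average against $S_{T,i}^{\ast\eta_i}$ has expectation of order $(Th^d)^{-1}$, which under $N\asymp Th^d$ survives the scaling $\sqrt{NTh^d}$ as a constant (this is the source of the $\kappa B^{(2)}$ bias of Theorem 2). It is sufficient because every competing deviation — of $S_{NT}^{\ast\phi\phi}$ from $\mathcal{C}_2\bar f_N(0)$, of the denominator $S_{T,i}^{\ast\eta_i\eta_i}$ from $c_0 f_i(0)$ in the split above, and the second-order error of the split itself — contributes at strictly smaller order. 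Establishing this hierarchy is exactly where the uniform bounds of Lemma 5 and the precise remainder orders $\bar o_P(\log N/\sqrt{Th^d b})+\bar O_P(h+b^m)$ of Lemma 6, together with the bandwidth constraints \eqref{B4_1}--\eqref{B4_2}, are all brought to bear.
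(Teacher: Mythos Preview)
Your overall strategy is the same as the paper's, but there is a genuine gap in your first paragraph. You assert that every summand in \eqref{A24} and \eqref{A25} carrying two factors of $(\check\phi-\phi_0)$ or $(\check\eta_i-\eta_{0i})$ is $o_P(\|\check\phi-\phi_0\|)+o_P(1/\sqrt{NTh^d})$, justifying this by the claim that ``$\log N/(Th^d)$ vanishes faster than $1/\sqrt{NTh^d}$.'' Under Assumption (B4), $N\asymp Th^d$, so $1/\sqrt{NTh^d}\asymp 1/(Th^d)$, and hence $\log N/(Th^d)$ is \emph{larger} than $1/\sqrt{NTh^d}$ by a factor $\log N$. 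Consequently, the term
\[
0.5\,N^{-1}\sum_{i=1}^{N}S_{T,i}^{\ast\phi\eta_i\eta_i}(\bar\theta_i)(\check\eta_i-\eta_{0i})^2
\]
in \eqref{A24}, bounded naively by $\bar O_P(1)\cdot O_P(\log N/(Th^d))$ via Lemma 5 and Lemma 8, is \emph{not} $o_P(1/\sqrt{NTh^d})$. The same applies to the quadratic remainder you discard when writing $\check\eta_i-\eta_{0i}=-(S_{T,i}^{\ast\eta_i\eta_i})^{-1}[S_{T,i}^{\ast\eta_i}+S_{T,i}^{\ast\eta_i\phi}(\check\phi-\phi_0)]$ ``up to negligible remainder'': that remainder, multiplied by $S_{T,i}^{\ast\phi\eta_i}=\bar O_P(1)$ and averaged, is again of order $\log N/(Th^d)$.

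What rescues the argument is not a rate condition but a \emph{cancellation of leading-order third derivatives}. The paper first subtracts $\mathcal{C}_1/c_0$ times the averaged \eqref{A25} from \eqref{A24}, producing \eqref{A33}, in which the quadratic-in-$(\check\eta_i-\eta_{0i})$ coefficient becomes the combination $S_{T,i}^{\ast\phi\eta_i\eta_i}(\bar\theta_i)-(\mathcal{C}_1/c_0)S_{T,i}^{\ast\eta_i\eta_i\eta_i}(\bar\theta_i)$. From the expansions in the proof of Lemma 5 one has $S_{T,i}^{\ast\phi\eta_i\eta_i}(\bar\theta_i)=f_{u,i}^{(1)}(0|0)f_{X,i}(0)\mathcal{C}_1+\bar o_P(1)$ and $S_{T,i}^{\ast\eta_i\eta_i\eta_i}(\bar\theta_i)=f_{u,i}^{(1)}(0|0)f_{X,i}(0)c_0+\bar o_P(1)$, so the leading terms cancel exactly and the combination is $\bar O_P(\|\bar\theta_i-\theta_{0i}\|)+\bar O_P(h+b^m)+\bar o_P(\log N/\sqrt{Tb^3h^d})$. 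Only after this cancellation does multiplication by $(\check\eta_i-\eta_{0i})^2$ yield $o_P(1/\sqrt{NTh^d})$. Your proposal never invokes this cancellation, so the claimed remainder order is unsupported; you need to form the linear combination \emph{before} bounding the quadratic terms, not after.
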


\begin{proof}
Plugging \eqref{A25} into \eqref{A24} we get:
\begin{multline} \label{A33}
\bar{f}_N(0) \mathcal{C} \cdot (\check{\phi} - \phi_0)
= -\Bigg[ S_{NT}^{\ast \phi} -  \frac{\mathcal{C}_1}{c_0 N}\sum_{i=1}^{N}S_{T,i}^{\ast \eta_i }\Bigg]  - N^{-1}\sum_{i=1}^{N} [ S_{T,i}^{\ast \phi \eta_i} -\mathcal{C}_1 S_{T,i}^{\ast \eta_i  \eta_i} /c_0](\check{\eta}_i - \eta_{0i})   \\
-0.5  N^{-1}\sum_{i=1}^{N}\Bigg[ S_{T,i}^{\ast \phi \eta_i \eta_i}(\bar{\theta}_i)-\mathcal{C}_1 S_{T,i}^{\ast \eta_i \eta_i \eta_i}(\bar{\theta}_i) /c_0\Bigg](\check{\eta}_{i} - \eta_{0i})^2 + o_P(\|\check{\phi} - \phi_0\|)
\end{multline}

First, from the proof of Lemma 7 we have
\begin{multline}\label{A34}
S_{NT}^{\ast \phi} -  \frac{\mathcal{C}_1}{c_0 N}\sum_{i=1}^{N}S_{T,i}^{\ast \eta_i }
= -\frac{1}{NT} \sum_{i=1}^{N}\sum_{t=1}^{T}{\varrho}^{(1)}_{\tau}(u_{it})\bigg[  \frac{X_{it}-x}{h} - \mathcal{C}_1/c_0\bigg] \frac{K_{it}}{h^d} \\
-0.5h^2 \bar{f}_N(0) \int_{\mathcal{B}} u'\ddot{q}_{\tau}(0)u(u-\mathcal{C}_1/c_0)K(u)du +\bar{O}_P(h^3)+o_P(1/\sqrt{NTh^d}).
\end{multline}

Second, from the proof of Lemma 5 we have
\[
S_{T,i}^{\ast \phi \eta_i \eta_i}(\bar{\theta}_i) = f_{u,i}^{(1)}(0|0) f_{X,i}(0) \cdot \mathcal{C}_1 + \bar{O}_P\left( \|\bar{\theta}_i -\theta_{0i} \|\right)
+\bar{O}_P(h+b^m) +\bar{o}_P\left( \frac{\log N}{ \sqrt{Tb^3 h^d}}\right),
\]
\[
S_{T,i}^{\ast\eta_i  \eta_i \eta_i}(\bar{\theta}_i) = f_{u,i}^{(1)}(0|0) f_{X,i}(0) \cdot c_0 + \bar{O}_P\left( \|\bar{\theta}_i -\theta_{0i} \|\right)
+\bar{O}_P(h+b^m) +\bar{o}_P\left( \frac{\log N}{ \sqrt{Tb^3 h^d}}\right).
\]
It then follows that
\begin{equation*}
S_{T,i}^{\ast \phi \eta_i \eta_i}(\bar{\theta}_i)-\mathcal{C}_1 S_{T,i}^{\ast \eta_i \eta_i \eta_i}(\bar{\theta}_i) /c_0
=\bar{O}_P\left( \|\bar{\theta}_i -\theta_{0i} \|\right)
+\bar{O}_P(h+b^m) +\bar{o}_P\left( \frac{\log N}{ \sqrt{Tb^3 h^d}}\right) .
\end{equation*}
The above equation and Lemma 8 imply that
\begin{multline}
N^{-1}\sum_{i=1}^{N}\Bigg[ S_{T,i}^{\ast \phi \eta_i \eta_i}(\bar{\theta}_i)-\mathcal{C}_1 S_{T,i}^{\ast \eta_i \eta_i \eta_i}(\bar{\theta}_i) /c_0\Bigg](\check{\eta}_{i} - \eta_{0i})^2 = O_P\left( \max_{i\leq N} |\check{\eta}_i-\eta_{0i}|^3 \right) \\+O_P\left( \max_{i\leq N} |\check{\eta}_i-\eta_{0i}|^2 \right)\cdot O_P\left( h\right)
+ O_P\left( \max_{i\leq N} |\check{\eta}_i-\eta_{0i}|^2 \right) \cdot o_P\left( \frac{\log N}{ \sqrt{Tb^3 h^d}}\right)
\\ = O_P\left( \frac{(\log N)^{3/2}}{ (Th^d)^{3/2}}  \right) + O_P\left( \frac{h \log N }{ Th^d}  \right)  + o_P\left( \frac{\log N}{ Th^d} \cdot \frac{\log N}{ \sqrt{Th^d b^3}}  \right)
=o_P(1/\sqrt{NTh^d}).
\end{multline}

Third, by \eqref{A25}, Lemma 6 and Lemma 8 we have
\begin{multline*}
 c_0 f_i(0)(\check{\eta}_i - \eta_{0i} )= -S_{T,i}^{\ast \eta_i } - ( S_{T,i}^{\ast \eta_i \eta_i }-c_0 f_i(0))(\check{\eta}_i - \eta_{0i} )  +\bar{O}_P(\check{\phi}-\phi_0)+ \bar{O}_P\big(\log N / (Th^d)\big) \\
 = -S_{T,i}^{\ast \eta_i }+\bar{O}_P(\check{\phi}-\phi_0)+\bar{o}_P\left( \frac{ (\log N)^{3/2} }{Th^d\sqrt{b} }\right) +\bar{O}_P\left( \frac{ h\sqrt{\log N} }{\sqrt{Th^d} }\right)
\end{multline*}
Thus, it follows from the above result and Lemma 6 that
\begin{multline}\label{A36}
N^{-1}\sum_{i=1}^{N} [ S_{T,i}^{\ast \phi \eta_i} -\mathcal{C}_1 S_{T,i}^{\ast \eta_i  \eta_i} /c_0](\check{\eta}_i - \eta_{0i})
= - N^{-1}\sum_{i=1}^{N} [ S_{T,i}^{\ast \phi \eta_i} -\mathcal{C}_1 S_{T,i}^{\ast \eta_i  \eta_i} /c_0]S_{T,i}^{\ast \eta_i }/(c_0f_i(0)) \\
+o_P(\| \check{\phi}-\phi_0\|)+o_P\left( \frac{ (\log N)^{5/2} }{(Th^d)^{3/2}b }\right)+o_P\left( \frac{ h (\log N)^{3/2} }{Th^d \sqrt{b} }\right)
+O_P\left( \frac{ h^2\sqrt{\log N} }{\sqrt{Th^d} }\right)\\
=  - N^{-1}\sum_{i=1}^{N} [ S_{T,i}^{\ast \phi \eta_i} -\mathcal{C}_1 S_{T,i}^{\ast \eta_i  \eta_i} /c_0]S_{T,i}^{\ast \eta_i }/(c_0f_i(0))
+o_P(\| \check{\phi}-\phi_0\|) + o_P(\sqrt{NTh^d}).
\end{multline}

Finally, the desired result follows from \eqref{A33} to \eqref{A36}.\end{proof}

\vspace{0.5cm}

\begin{lemma} We have
\[N^{-1}\sum_{i=1}^{N} [ S_{T,i}^{\ast \phi \eta_i} - \mathcal{C}_1 S_{T,i}^{\ast \eta_i  \eta_i} /c_0]S_{T,i}^{\ast \eta_i }/(c_0f_i(0))  =- \frac{\tau-1/2}{Th^d} (\mathcal{D}_1/c_0 -\mathcal{C}_1 d_0/c_0^2 )+o_P((Th^d)^{-1}).\]
\end{lemma}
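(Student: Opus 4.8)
The plan is to show that the left-hand side concentrates on its mean and then to compute that mean exactly, isolating the $(Th^d)^{-1}$ contribution. First I would record that, evaluated at $\theta_{0i}$, the three objects are $T$-averages of explicit integrands; using $Z_{it}^{\ast}=((X_{it}-x)/h-\mathcal{C}_1/c_0)K_{it}/h^d$ from the proof of Lemma 7,
\[ S_{T,i}^{\ast \phi \eta_i} - \frac{\mathcal{C}_1}{c_0}S_{T,i}^{\ast \eta_i \eta_i} = \frac{1}{T}\sum_{t=1}^{T}\varrho_{\tau}^{(2)}(Y_{it}-\theta_{0i}'W_{it})Z_{it}^{\ast}, \qquad S_{T,i}^{\ast \eta_i}=-\frac{1}{T}\sum_{t=1}^{T}\varrho_{\tau}^{(1)}(Y_{it}-\theta_{0i}'W_{it})\frac{K_{it}}{h^d}. \]
The crucial structural point, guaranteed by Lemma 6, is that the leading (mean) terms $\mathcal{C}_1 f_i(0)$ of $S_{T,i}^{\ast \phi \eta_i}$ and of $(\mathcal{C}_1/c_0)S_{T,i}^{\ast \eta_i \eta_i}$ cancel, so the bracket has mean only $\bar{O}(h+b^m)$; this is what reduces the whole quantity to a product of two nearly mean-zero averages. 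Writing the $i$-th summand as $H_i=(c_0 f_i(0))^{-1}[S_{T,i}^{\ast \phi \eta_i}-\mathcal{C}_1 S_{T,i}^{\ast \eta_i \eta_i}/c_0]S_{T,i}^{\ast \eta_i}$, the $H_i$ are independent across $i$ by (A2), so $\mathrm{Var}(N^{-1}\sum_i H_i)\le N^{-1}\max_i\mathbb{E}[H_i^2]$. Using $\mathbb{E}[(S_{T,i}^{\ast \eta_i})^2]=O((Th^d)^{-1})$ and $\mathbb{E}[(\mathrm{bracket})^2]=O((Th^db)^{-1})$ up to smaller terms (the extra $b^{-1}$ coming from $\varrho_{\tau}^{(2)}=O(b^{-1})$), one gets $\mathrm{Var}=O((T^3h^{3d}b)^{-1})=o((Th^d)^{-2})$ since $Th^db\to\infty$ by \eqref{B4_1}. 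Hence it suffices to evaluate $N^{-1}\sum_i\mathbb{E}[H_i]$.

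For the mean I would expand the product of the two $T$-averages into diagonal ($t=s$) and off-diagonal ($t\neq s$) parts. By independence and stationarity across $t$, the off-diagonal part factorizes into $\tfrac{T-1}{T}$ times the product of the two marginal means; since $\mathbb{E}[\varrho_{\tau}^{(2)}(u_{it}+m(X_{it}))Z_{it}^{\ast}]=\bar{O}(h+b^m)$ and $\mathbb{E}[\varrho_{\tau}^{(1)}(u_{it}+m(X_{it}))K_{it}/h^d]=\bar{O}(h^2+b^m)$ (the last display in the proof of Lemma 8, plus the $O(h^2)$ from the shift $m(X_{it})$), the off-diagonal contribution is $\bar{O}(h^3)=o((Th^d)^{-1})$ using $Th^{d+3}\to0$. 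The diagonal part leaves
\[ \mathbb{E}[H_i]=-\frac{1}{c_0 f_i(0)}\cdot\frac{1}{T}\,\mathbb{E}\!\left[\varrho_{\tau}^{(2)}(u_{it}+m(X_{it}))\,\varrho_{\tau}^{(1)}(u_{it}+m(X_{it}))\Big(\tfrac{X_{it}-x}{h}-\tfrac{\mathcal{C}_1}{c_0}\Big)\frac{K_{it}^2}{h^{2d}}\right]+o\big((Th^d)^{-1}\big). \]

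The remaining task is to evaluate this expectation by the change of variables $u=vb$ and $(X_{it}-x)/h=w$, with $w$ ranging over $\mathcal{B}$ because $x=ch$ is a boundary point. With $v=u/b$ one has $\varrho_{\tau}^{(1)}(vb)=\tau+P(v)$ where $P(v)=-G(v)+g(v)v$, and $\varrho_{\tau}^{(2)}(vb)=Q(v)/b$ where $Q(v)=2g(v)+v\,g^{(1)}(v)$; thus $\varrho_{\tau}^{(2)}$ is supported on $|u|\le b$ and its $b^{-1}$ cancels the Jacobian $b$ of $\mathrm{d}u=b\,\mathrm{d}v$. The expectation then factorizes, to leading order $f_i(0)h^{-d}$, into a $v$-integral and a $w$-integral; the shift $m(X_{it})=O(h^2)$ and the Taylor expansions of $f_{u,i}(\cdot|x+hw)$ and $f_{X,i}(x+hw)$ contribute only relative errors $O(h^2/b)$ and $O(h+b)$, which are negligible because $h^2\ll b$ by \eqref{B4_1}. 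The key identity is that $Q(v)=\mathrm{d}P(v)/\mathrm{d}v$, whence
\[ \int_{-1}^{1}Q(v)\big(\tau+P(v)\big)\,\mathrm{d}v=\tau\int_{-1}^{1}Q(v)\,\mathrm{d}v+\tfrac12\big[P(v)^2\big]_{-1}^{1}=\tau\cdot 1+\tfrac12\big(P(1)^2-P(-1)^2\big)=\tau-\tfrac12, \]
using $\int_{-1}^1 Q=2-1=1$, $P(1)=-G(1)=0$ and $P(-1)=-G(-1)=-1$, while the $w$-integral is $\int_{\mathcal{B}}(w-\mathcal{C}_1/c_0)K^2(w)\,\mathrm{d}w=\mathcal{D}_1-\mathcal{C}_1 d_0/c_0$. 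Collecting terms, the diagonal expectation equals $-T^{-1}h^{-d}f_i(0)(\tau-1/2)(\mathcal{D}_1-\mathcal{C}_1 d_0/c_0)(1+o(1))$; dividing by $c_0 f_i(0)$ cancels $f_i(0)$, and averaging over $i$ yields exactly $-\frac{\tau-1/2}{Th^d}(\mathcal{D}_1/c_0-\mathcal{C}_1 d_0/c_0^2)$, as claimed. The main obstacle is the bookkeeping that certifies every error term — off-diagonal, density Taylor corrections, and the $m(X_{it})$ shift inside the smoothing derivatives $\varrho_\tau^{(j)}$ — is $o((Th^d)^{-1})$ under (B4); the roles of $h^2\ll b$, $Th^db\to\infty$ and $Th^{d+3}\to0$ are precisely what let the single clean $(\tau-1/2)$ identity survive as the leading bias.
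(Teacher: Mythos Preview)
Your proof is correct and follows the same two-step structure as the paper: compute the mean via a diagonal/off-diagonal decomposition (the off-diagonal being $O(h^3)=o((Th^d)^{-1})$ thanks to the cancellation in the bracket), then bound the variance by $o((Th^d)^{-2})$. Your explicit derivation of the $(\tau-1/2)$ constant via the identity $Q(v)=P'(v)$ fills in a step the paper only asserts; your variance argument is slightly informal (treating $\mathbb{E}[A^2B^2]$ as a product of second moments), but the paper's more careful centered/uncentered split in its Step~2 confirms the same order.
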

\begin{proof}

\textbf{Step 1:}

First, write
\begin{eqnarray*}
\Ex[ S_{T,i}^{\ast \phi \eta_i} S_{T,i}^{\ast \eta_i }  ]  &=&
 - \Ex\Bigg[ \Bigg( \frac{1}{T}\sum_{t=1}^{T} {\varrho}^{(2)}_{\tau}(Y_{it}-\theta_{0i}'W_{it})Z_{it} \Bigg)  \Bigg(\frac{1}{T}\sum_{t=1}^{T} {\varrho}^{(1)}_{\tau}(Y_{it}-\theta_{0i}'W_{it})K_{it}/h^d  \Bigg) \Bigg]  \\
 &=& - \frac{1}{Th^d}\Ex \Big[{\varrho}^{(2)}_{\tau}(Y_{it}-\theta_{0i}'W_{it})  {\varrho}^{(1)}_{\tau}(Y_{it}-\theta_{0i}'W_{it})[(X_{it}-x)/h] K_{it}^2/h^d \Big] \\
 &&-  \frac{T-1}{T} \Ex[{\varrho}^{(2)}_{\tau}(Y_{it}-\theta_{0i}'W_{it})Z_{it}  ]\cdot \Ex[{\varrho}^{(1)}_{\tau}(Y_{it}-\theta_{0i}'W_{it})K_{it}/h^d ].
\end{eqnarray*}

Second, it can be shown that
\[ \Ex \Big[{\varrho}^{(2)}_{\tau}(Y_{it}-\theta_{0i}'W_{it})  {\varrho}^{(1)}_{\tau}(Y_{it}-\theta_{0i}'W_{it})[(X_{it}-x)/h] K_{it}^2/h^d \Big]
=(\tau-1/2)f_i(0)\int_{\mathcal{B}} u K^2(u)du +\bar{o}(1).
\]
It then follows that
\begin{multline}\label{A37}
 \Ex[ S_{T,i}^{\ast \phi \eta_i} S_{T,i}^{\ast \eta_i }  ] = -\frac{1}{Th^d} (\tau-1/2)f_i(0)\int_{\mathcal{B}} u K^2(u)du \\-  \frac{T-1}{T} \Ex[{\varrho}^{(2)}_{\tau}(Y_{it}-\theta_{0i}'W_{it})Z_{it}  ]\cdot \Ex[{\varrho}^{(1)}_{\tau}(Y_{it}-\theta_{0i}'W_{it})K_{it}/h^d ]+\bar{o}( (Th^d)^{-1}).
\end{multline}
Similarly, we can show that
\begin{multline}\label{A38}
 \Ex[ S_{T,i}^{\ast \eta_i \eta_i} S_{T,i}^{\ast \eta_i }  ] = -\frac{1}{Th^d} (\tau-1/2)f_i(0)\int_{\mathcal{B}}  K^2(u)du\\-  \frac{T-1}{T} \Ex[{\varrho}^{(2)}_{\tau}(Y_{it}-\theta_{0i}'W_{it})K_{it}/h^d]\cdot \Ex[{\varrho}^{(1)}_{\tau}(Y_{it}-\theta_{0i}'W_{it})K_{it}/h^d ]+\bar{o}((Th^d)^{-1}).
\end{multline}

Third, it follows from \eqref{A37} and \eqref{A38} that
\begin{multline}\label{A39}
\Ex\Bigg[ N^{-1}\sum_{i=1}^{N} [ S_{T,i}^{\ast \phi \eta_i} - \mathcal{C}_1 S_{T,i}^{\ast \eta_i  \eta_i} /c_0]S_{T,i}^{\ast \eta_i }/(c_0f_i(0)) \Bigg]= - \frac{\tau-1/2}{Th^d} (\mathcal{D}_1/c_0 -\mathcal{C}_1 d_0/c_0^2 )+o((Th^d)^{-1}) \\
-  \frac{T-1}{T}   \frac{1}{N}\sum_{i=1}^{N}\Big[ \Ex[{\varrho}^{(2)}_{\tau}(Y_{it}-\theta_{0i}'W_{it})Z_{it}  ] -\Ex[{\varrho}^{(2)}_{\tau}(Y_{it}-\theta_{0i}'W_{it})K_{it}/h^d  ] \mathcal{C}_1 /c_0 \Big]  \cdot \Ex[{\varrho}^{(1)}_{\tau}(Y_{it}-\theta_{0i}'W_{it})K_{it}/h^d ]/(c_0f_i(0)),
\end{multline}
and we can show that $\Ex[{\varrho}^{(2)}_{\tau}(Y_{it}-\theta_{0i}'W_{it})Z_{it}  ] =\mathcal{C}_1f_i(0)+\bar{O}(h)$, $\Ex[{\varrho}^{(2)}_{\tau}(Y_{it}-\theta_{0i}'W_{it})K_{it}/h^d] =c_0f_i(0) +\bar{O}(h)$ and $\Ex[{\varrho}^{(1)}_{\tau}(Y_{it}-\theta_{0i}'W_{it})K_{it}/h^d ] = \bar{O}(h^2)$. It then follows that
\begin{equation}\label{A40} \Big[ \Ex[{\varrho}^{(2)}_{\tau}(Y_{it}-\theta_{0i}'W_{it})Z_{it}  ] -\Ex[{\varrho}^{(2)}_{\tau}(Y_{it}-\theta_{0i}'W_{it})K_{it}/h^d  ] \mathcal{C}_1 /c_0 \Big]  \cdot \Ex[{\varrho}^{(1)}_{\tau}(Y_{it}-\theta_{0i}'W_{it})K_{it}/h^d ] = \bar{O}(h^3) =\bar{o}((Th^d)^{-1}).
\end{equation}

Finally, it follows from \eqref{A39} and \eqref{A40} that
\begin{equation}\label{A41}
\Ex\Bigg[ N^{-1}\sum_{i=1}^{N} [ S_{T,i}^{\ast \phi \eta_i} - \mathcal{C}_1 S_{T,i}^{\ast \eta_i  \eta_i} /c_0]S_{T,i}^{\ast \eta_i }/(c_0f_i(0)) \Bigg]= - \frac{\tau-1/2}{Th^d} (\mathcal{D}_1/c_0 -\mathcal{C}_1 d_0/c_0^2 )+o((Th^d)^{-1}).
\end{equation}

\textbf{Step 2:}
Now we will show that
\begin{equation}\label{A42}
\text{Var}\Bigg[ N^{-1}\sum_{i=1}^{N} [ S_{T,i}^{\ast \phi \eta_i} - \mathcal{C}_1 S_{T,i}^{\ast \eta_i  \eta_i} /c_0]S_{T,i}^{\ast \eta_i }/(c_0f_i(0)) \Bigg]= o(1/(Th^d)^2).
\end{equation}
Then the desired result follows from \eqref{A41} and \eqref{A42}.

Define $\tilde{S}_{T,i}^{\ast \phi \eta_i} =S_{T,i}^{\ast \phi \eta_i} - \Ex[S_{T,i}^{\ast \phi \eta_i}]$, and $\tilde{S}_{T,i}^{\ast \eta_i \eta_i} =S_{T,i}^{\ast \eta_i \eta_i} - \Ex[S_{T,i}^{\ast \eta_i \eta_i}]$,
then we can write
\begin{multline*}
N^{-1}\sum_{i=1}^{N} [ S_{T,i}^{\ast \phi \eta_i} - \mathcal{C}_1 S_{T,i}^{\ast \eta_i  \eta_i} /c_0]S_{T,i}^{\ast \eta_i }/(c_0f_i(0))
= N^{-1}\sum_{i=1}^{N} [ \tilde{S}_{T,i}^{\ast \phi \eta_i} - \mathcal{C}_1 \tilde{S}_{T,i}^{\ast \eta_i  \eta_i} /c_0]S_{T,i}^{\ast \eta_i }/(c_0f_i(0)) \\+ N^{-1}\sum_{i=1}^{N} [ \bar{S}_{T,i}^{\ast \phi \eta_i} - \mathcal{C}_1 \bar{S}_{T,i}^{\ast \eta_i  \eta_i} /c_0]S_{T,i}^{\ast \eta_i }/(c_0f_i(0)),
 \end{multline*}
and it follows that for any $\omega\in\mathbb{R}^d$,
\begin{multline*} \text{Var}\Bigg[ N^{-1}\sum_{i=1}^{N}\omega' [ S_{T,i}^{\ast \phi \eta_i} - \mathcal{C}_1 S_{T,i}^{\ast \eta_i  \eta_i} /c_0]S_{T,i}^{\ast \eta_i }/(c_0f_i(0)) \Bigg]
\leq 2  \text{Var}\Bigg[ N^{-1}\sum_{i=1}^{N}\omega' [ \tilde{S}_{T,i}^{\ast \phi \eta_i} - \mathcal{C}_1 \tilde{S}_{T,i}^{\ast \eta_i  \eta_i} /c_0]S_{T,i}^{\ast \eta_i }/(c_0f_i(0)) \Bigg]  \\
+ 2\text{Var}\Bigg[ N^{-1}\sum_{i=1}^{N} \omega'[ \bar{S}_{T,i}^{\ast \phi \eta_i} - \mathcal{C}_1 \bar{S}_{T,i}^{\ast \eta_i  \eta_i} /c_0]S_{T,i}^{\ast \eta_i }/(c_0f_i(0)) \Bigg] .
\end{multline*}
First, we have
\begin{eqnarray*}
&& \text{Var}\Bigg[ N^{-1}\sum_{i=1}^{N} \omega'[ \tilde{S}_{T,i}^{\ast \phi \eta_i} - \mathcal{C}_1 \tilde{S}_{T,i}^{\ast \eta_i  \eta_i} /c_0]S_{T,i}^{\ast \eta_i }/(c_0f_i(0)) \Bigg]  \\
&=& N^{-2}\sum_{i=1}^{N}\text{Var} \bigg\{      \omega'[ \tilde{S}_{T,i}^{\ast \phi \eta_i} - \mathcal{C}_1 \tilde{S}_{T,i}^{\ast \eta_i  \eta_i} /c_0    ]S_{T,i}^{\ast \eta_i }  \bigg\}/(c_0f_i(0))^2 \\
&\leq & N^{-2}\sum_{i=1}^{N}\Ex \bigg\{      \omega'[ \tilde{S}_{T,i}^{\ast \phi \eta_i} - \mathcal{C}_1 \tilde{S}_{T,i}^{\ast \eta_i  \eta_i} /c_0    ]S_{T,i}^{\ast \eta_i }  \bigg\}^2/(c_0f_i(0))^2.
\end{eqnarray*}
Note that
\begin{eqnarray*}
&&  \Ex \bigg\{      \omega'[ \tilde{S}_{T,i}^{\ast \phi \eta_i} - \mathcal{C}_1 \tilde{S}_{T,i}^{\ast \eta_i  \eta_i} /c_0    ]S_{T,i}^{\ast \eta_i }  \bigg\}^2  \\
&=&  \Ex  \Bigg\{    \Bigg( \frac{1}{Th^d}\sum_{t=1}^{T}a_{it}  \Bigg)^2     \Bigg(\frac{1}{Th^d}\sum_{t=1}^{T} b_{it}\Bigg)^2    \Bigg\} = \frac{1}{(Th^d)^4 } \sum_{t=1}^{T}\sum_{s=1}^{T}\sum_{p=1}^{T}\sum_{h=1}^{T} \Ex[a_{it}a_{is}b_{ip}b_{ih}] \\
&=&\frac{1}{(Th^d)^4 } \sum_{t=1}^{T}\sum_{p=1}^{T}\sum_{h=1}^{T} \Ex[a_{it}^2b_{ip}b_{ih}]
+\frac{1}{(Th^d)^4 } \sum_{t=1}^{T}\sum_{h=1}^{T} \Ex[a_{it}^2 b_{ih}^2 ]
+\frac{1}{(Th^d)^4 } \sum_{t=1}^{T}\sum_{h=1}^{T} \Ex[a_{it}^2b_{it}b_{ih}] \\
&& +\frac{1}{(Th^d)^4 } \sum_{t=1}^{T}\Ex[a_{it}^2b_{it}^2]
+\frac{1}{(Th^d)^4 } \sum_{t=1}^{T}\sum_{s=1}^{T} \Ex[a_{it}a_{is}b_{it}b_{is}],
\end{eqnarray*}
where $ a_{it} = \omega' {\varrho}^{(2)}_{\tau}(Y_{it}-\theta_{0i}'W_{it})[(X_{it}-x)/h -\mathcal{C}_1/c_0] K_{it} - \Ex[ \omega' {\varrho}^{(2)}_{\tau}(Y_{it}-\theta_{0i}'W_{it})[(X_{it}-x)/h -\mathcal{C}_1/c_0] K_{it}]$, and $b_{it} ={\varrho}^{(1)}_{\tau}(Y_{it}-\theta_{0i}'W_{it})K_{it} $.
It can be shown that
\[ \frac{1}{(Th^d)^4 } \sum_{t=1}^{T}\sum_{p=1}^{T}\sum_{h=1}^{T} \Ex[a_{it}^2b_{ip}b_{ih}] =\bar{O}(h^4/(Th^d b)), \quad  \frac{1}{(Th^d)^4 } \sum_{t=1}^{T}\sum_{h=1}^{T} \Ex[a_{it}^2 b_{ih}^2 ] = \bar{O}(1/[(Th^d)^2b]),  \]
\[\frac{1}{(Th^d)^4 } \sum_{t=1}^{T}\sum_{h=1}^{T} \Ex[a_{it}^2b_{it}b_{ih}] = \bar{O}(h^2/[(Th^d)^2 b]) ,\]
\[ \frac{1}{(Th^d)^4 } \sum_{t=1}^{T}\Ex[a_{it}^2b_{it}^2] =\bar{O}(1/[(Th^d)^3b]) ,\quad  \frac{1}{(Th^d)^4 } \sum_{t=1}^{T}\sum_{s=1}^{T} \Ex[a_{it}a_{is}b_{it}b_{is}] =\bar{O}(1/[(Th^d)^2]) .\]
Thus, we have
\begin{equation}\label{A43}
 \text{Var}\Bigg[ N^{-1}\sum_{i=1}^{N} \omega'[ \tilde{S}_{T,i}^{\ast \phi \eta_i} - \mathcal{C}_1 \tilde{S}_{T,i}^{\ast \eta_i  \eta_i} /c_0]S_{T,i}^{\ast \eta_i }/(c_0f_i(0)) \Bigg]  =o(1/(NTh^d)).\end{equation}

Second, define $\zeta_i= \omega'[ \bar{S}_{T,i}^{\ast \phi \eta_i} - \mathcal{C}_1 \bar{S}_{T,i}^{\ast \eta_i  \eta_i} /c_0]/(c_0f_i(0))$,
then we can write
\[\text{Var}\Bigg[ N^{-1}\sum_{i=1}^{N} \omega'[ \bar{S}_{T,i}^{\ast \phi \eta_i} - \mathcal{C}_1 \bar{S}_{T,i}^{\ast \eta_i  \eta_i} /c_0]S_{T,i}^{\ast \eta_i }/(c_0f_i(0)) \Bigg] =
\frac{1}{(NTh^d)^2} \sum_{i=1}^{N}\sum_{t=1}^{T} \text{Var}[{\varrho}^{(1)}_{\tau}(Y_{it}-\theta_{0i}'W_{it})K_{it} ]\zeta_i^2.
\]
Since $\zeta_i=\bar{o}(1)$ and $\text{Var}[{\varrho}^{(1)}_{\tau}(Y_{it}-\theta_{0i}'W_{it})K_{it} ] = \bar{O}(h^d)$, it follows that
\begin{equation}\label{A44}
\text{Var}\Bigg[ N^{-1}\sum_{i=1}^{N} \omega'[ \bar{S}_{T,i}^{\ast \phi \eta_i} - \mathcal{C}_1 \bar{S}_{T,i}^{\ast \eta_i  \eta_i} /c_0]S_{T,i}^{\ast \eta_i }/(c_0f_i(0)) \Bigg]  =o(1/(NTh^d)).
\end{equation}
Finally, \eqref{A42} follows from \eqref{A43} and \eqref{A44}, and this concludes the proof.  \end{proof}

\vspace{0.5cm}
\noindent{\textbf{Proof of Theorem 2:}}
\begin{proof}
It follows from Lemma 10 and Lemma 11 that
\begin{multline*}
\check{\phi} - \phi_0 = \frac{\mathcal{C}^{-1}}{\bar{f}_N(0)NT} \sum_{i=1}^{N}\sum_{t=1}^{T}{\varrho}^{(1)}_{\tau}(u_{it})\bigg[  \frac{X_{it}-x}{h}
 - \mathcal{C}_1/c_0\bigg] \frac{K_{it}}{h^d}  +h^2 B^{(1)} \\
  + \frac{1}{ Th^d} B^{(2)}+o_P(\|\check{\phi} - \phi_0\|)+o_P(1/(Th^d)) +O(h^3).
\end{multline*}
It then suffices to show that
\begin{equation}\label{A45}
\frac{\mathcal{C}^{-1}}{\bar{f}_N(0) \sqrt{NT}} \sum_{i=1}^{N}\sum_{t=1}^{T}{\varrho}^{(1)}_{\tau}(u_{it})\bigg[  \frac{X_{it}-x}{h}
 - \mathcal{C}_1/c_0\bigg] \frac{K_{it}}{\sqrt{h^d}} \overset{d}{\rightarrow} \mathcal{N}(0, \tau(1-\tau)\sigma(0)\Omega ) .
\end{equation}
First, we can show that
\[ \Ex \Bigg[   {\varrho}^{(1)}_{\tau}(u_{it})\bigg[  \frac{X_{it}-x}{h}
 - \mathcal{C}_1/c_0\bigg] \frac{K_{it}}{\sqrt{h^d}} \Bigg] =\bar{O} (  h^{d/2}b^m )  =\bar{o} ( 1/ \sqrt{NT} ) . \]
Second, we can show that
\[ \text{Var}\Bigg[    {\varrho}^{(1)}_{\tau}(u_{it})\bigg[  \frac{X_{it}-x}{h}
 - \mathcal{C}_1/c_0\bigg] \frac{K_{it}}{\sqrt{h^d}}    \Bigg]
 =\tau(1-\tau) f_{X,i}(0) \int_{\mathcal{B}}(u-\mathcal{C}_1/c_0)(u-\mathcal{C}_1/c_0)' K^2(u)du + \bar{o} ( 1 ) ,
 \]
then \eqref{A45} follows from Lyapunov's CLT.
\end{proof}

\subsection{Tables}
\begin{table}[htp]
\small
\begin{center}
\begin{threeparttable}
\caption{Biases and MSEs of the Estimators at $\tau=0.25$ with Gaussian Errors.}
\begin{tabular}{cc|cccc|cccc}
\hline
  &  & \multicolumn{4}{c|}{Bias} &  \multicolumn{4}{c}{MSE} \\
  $x$ &   $\beta_{\tau}(x)$  & $\hat{\beta}_{\tau}$ &  $\hat{\beta}_{\tau}^{bc}$  & $\check{\beta}_{\tau}$ &  $\check{\beta}_{\tau}^{bc}$ & $\hat{\beta}_{\tau}$ &  $\hat{\beta}_{\tau}^{bc}$  & $\check{\beta}_{\tau}$ &  $\check{\beta}^{bc}_{\tau}$ \\
\hline
 -2.0&  1.603 & -0.762   &0.156 & -0.747&-0.092&0.934  &0.727&0.866&0.485 \\
 -1.6&  1.572&   -0.179 &  -0.015& -0.189& -0.023& 0.082 & 0.084&0.088& 0.098\\
  -1.2&   1.518&  -0.082  & -0.031& -0.084&-0.028& 0.021 &0.023&0.022& 0.026\\
 -0.8&  1.421 &   -0.077 & -0.052& -0.071 &-0.048& 0.013 & 0.013&0.011& 0.013\\
  -0.4&   1.251&   0.061 & -0.052& -0.061 &-0.051& 0.007 &0.008&0.008& 0.009\\
  0.0&  1.000 &   0.001 & 0.000& -0.005& -0.005&  0.003&0.004&0.003& 0.005\\
  0.4&   0.750&    0.060& 0.049 & 0.055&0.046 & 0.007 &0.007&0.007& 0.008\\
  0.8&  0.579&    0.076& 0.051& 0.074& 0.053& 0.012 &0.012&0.013& 0.015\\
  1.2&   0.482&    0.091& 0.040& 0.085&0.031& 0.022 &0.022&0.022& 0.028\\
 1.6&  0.428&  0.188  & 0.021& 0.183&0.021& 0.090 & 0.087&0.093& 0.113\\
  2.0&   0.397&   0.736 &-0.175 & 0.730&0.086& 0.872 &0.679&0.814& 0.464\\
  \hline
\end{tabular}

\begin{tablenotes}
      \small
      \item Note: The DGP considered in this table is: $Y_{it} = \beta X_{it} + \alpha_i + \sqrt{ 1+X_{it}^2} \cdot \epsilon_{it}$, $X_{it}\sim i.i.d \text{ } \mathcal{N}(0,1)\cdot \mathbf{1}\{|X_{it}|\leq 2\}$, $\alpha_i,\epsilon_{it}\sim i.i.d \text{ } \mathcal{N}(0,1)$, so $\beta_{\tau}(x) = 1+ \Phi^{-1}(\tau)\cdot x/\sqrt{1+x^2}$.
    \end{tablenotes}
\end{threeparttable}

\end{center}
\label{default}
\end{table}%

\begin{table}[htp]
\small
\begin{center}
\begin{threeparttable}
\caption{Biases and MSEs of the Estimators at $\tau=0.25$ with $T(3)$ Errors.}
\begin{tabular}{cc|cccc|cccc}
\hline
  &  & \multicolumn{4}{c|}{Bias} &  \multicolumn{4}{c}{MSE} \\
  $x$ &   $\beta_{\tau}(x)$  & $\hat{\beta}_{\tau}$ &  $\hat{\beta}_{\tau}^{bc}$  & $\check{\beta}_{\tau}$ &  $\check{\beta}_{\tau}^{bc}$ & $\hat{\beta}_{\tau}$ &  $\hat{\beta}_{\tau}^{bc}$  & $\check{\beta}_{\tau}$ &  $\check{\beta}^{bc}_{\tau}$ \\
\hline
 -2.0&  1.684 & -0.981   &0.150 & -0.915&-0.198&1.428  &0.896&1.308&0.715 \\
 -1.6&  1.649&   -0.241 &  -0.026& -0.230& -0.007& 0.130 & 0.125&0.129& 0.144\\
  -1.2&   1.588&  -0.112  & -0.041& -0.117&-0.046& 0.032 &0.033&0.037& 0.041\\
 -0.8&  1.478 &   -0.090 & -0.057& -0.091 &-0.058& 0.018 & 0.018&0.018& 0.019\\
  -0.4&   1.284&   -0.069 & -0.055& -0.069 &-0.057& 0.010 &0.011&0.010& 0.012\\
  0.0&  1.000 &   0.001 & 0.007& -0.004& -0.006&  0.005&0.007&0.005& 0.008\\
  0.4&   0.716&    0.071& 0.059 & 0.070&0.059 & 0.010 &0.011&0.010& 0.012\\
  0.8&  0.522&    0.095& 0.059& 0.097& 0.063& 0.019 &0.018&0.020& 0.021\\
  1.2&   0.412&    0.116& 0.046& 0.115&0.039& 0.033 &0.035&0.036& 0.042\\
 1.6&  0.351&  0.218  & -0.001& 0.219&0.004& 0.125 & 0.128&0.129& 0.151\\
  2.0&   0.316&   0.895 &-0.238 & 0.911&0.220& 1.253 &0.949&1.266& 0.701\\
  \hline
\end{tabular}

\begin{tablenotes}
      \small
      \item Note: The DGP considered in this table is: $Y_{it} = \beta X_{it} + \alpha_i + \sqrt{ 1+X_{it}^2} \cdot \epsilon_{it}$, $X_{it}\sim i.i.d \text{ } \mathcal{N}(0,1)\cdot \mathbf{1}\{|X_{it}|\leq 2\}$, $\alpha_i\sim i.i.d \text{ } \mathcal{N}(0,1)$, $\epsilon_{it}\sim T(3)$, so $\beta_{0.25}(x) = 1-0.765\cdot x/\sqrt{1+x^2}$.
    \end{tablenotes}
\end{threeparttable}

\end{center}
\label{default}
\end{table}%

\begin{table}[htp]
\small
\begin{center}
\begin{threeparttable}
\caption{Biases and MSEs of the Estimators at $\tau=0.5$ with Gaussian Errors.}
\begin{tabular}{cc|cccc|cccc}
\hline
  &  & \multicolumn{4}{c|}{Bias} &  \multicolumn{4}{c}{MSE} \\
  $x$ &   $\beta_{\tau}(x)$  & $\hat{\beta}_{\tau}$ &  $\hat{\beta}_{\tau}^{bc}$  & $\check{\beta}_{\tau}$ &  $\check{\beta}_{\tau}^{bc}$ & $\hat{\beta}_{\tau}$ &  $\hat{\beta}_{\tau}^{bc}$  & $\check{\beta}_{\tau}$ &  $\check{\beta}^{bc}_{\tau}$ \\
\hline
 -2.0&  1.000 & -0.002   &0.007 & -0.014&-0.009&0.236  &0.455&0.259&0.493 \\
 -1.6&  1.000&   0.005 &  0.015& 0.007& -0.020& 0.041 & 0.070&0.043& 0.079\\
  -1.2&   1.000&  0.007  & 0.006& 0.005&-0.006& 0.012 &0.018&0.013& 0.022\\
 -0.8&  1.000 &   -0.003 & -0.003& -0.005 &-0.006& 0.005 & 0.007&0.005& 0.008\\
  -0.4&   1.000&   -0.004 & -0.003& -0.006 &-0.007& 0.003 &0.005&0.003& 0.005\\
  0.0&  1.000 &   -0.005 & -0.006& -0.007& -0.009&  0.003&0.004&0.003& 0.004\\
  0.4&   1.000&    -0.003& -0.004 & -0.003&-0.004 & 0.003 &0.004&0.003& 0.004\\
  0.8&  1.000&    0.003& 0.005& 0.002& 0.003& 0.006 &0.008&0.006& 0.009\\
  1.2&   1.000&    0.002& 0.002& 0.001&0.003& 0.012 &0.017&0.012& 0.020\\
 1.6&  1.000&  -0.003  & -0.001& -0.006&-0.012& 0.047 & 0.077&0.047& 0.081\\
  2.0&   1.000&   0.002 & 0.039 & 0.001&0.032& 0.256 &0.447&0.273& 0.472\\
  \hline
\end{tabular}

\begin{tablenotes}
      \small
      \item Note: The DGP considered in this table is: $Y_{it} = \beta X_{it} + \alpha_i + \sqrt{ 1+X_{it}^2} \cdot \epsilon_{it}$, $X_{it}\sim i.i.d \text{ } \mathcal{N}(0,1)\cdot \mathbf{1}\{|X_{it}|\leq 2\}$, $\alpha_i,\epsilon_{it}\sim i.i.d \text{ } \mathcal{N}(0,1)$, so $\beta_{\tau}(x) = 1+ \Phi^{-1}(\tau)\cdot x/\sqrt{1+x^2}$.
    \end{tablenotes}
\end{threeparttable}

\end{center}
\label{default}
\end{table}%

\begin{table}[htp]
\small
\begin{center}
\begin{threeparttable}
\caption{Biases and MSEs of the Estimators at $\tau=0.5$ with $T(3)$ Errors.}
\begin{tabular}{cc|cccc|cccc}
\hline
  &  & \multicolumn{4}{c|}{Bias} &  \multicolumn{4}{c}{MSE} \\
  $x$ &   $\beta_{\tau}(x)$  & $\hat{\beta}_{\tau}$ &  $\hat{\beta}_{\tau}^{bc}$  & $\check{\beta}_{\tau}$ &  $\check{\beta}_{\tau}^{bc}$ & $\hat{\beta}_{\tau}$ &  $\hat{\beta}_{\tau}^{bc}$  & $\check{\beta}_{\tau}$ &  $\check{\beta}^{bc}_{\tau}$ \\
\hline
 -2.0&  1.000 & -0.041   &-0.032 & 0.007&-0.012&0.369  &0.587&0.337&0.561 \\
 -1.6&  1.000&   -0.008 &  -0.007& 0.011& 0.015& 0.050 & 0.079&0.052& 0.088\\
  -1.2&   1.000&  -0.005  & -0.004& 0.004&0.005& 0.012 &0.018&0.017& 0.027\\
 -0.8&  1.000 &   -0.004 & -0.002& 0.001 &0.002& 0.006 & 0.009&0.007& 0.010\\
  -0.4&   1.000&   -0.003 & -0.003& -0.000 &0.000& 0.004 &0.005&0.004& 0.006\\
  0.0&  1.000 &   0.002 & 0.002& -0.002& -0.003&  0.003&0.004&0.003& 0.005\\
  0.4&   1.000&    0.002& 0.003 & 0.002&-0.003 & 0.004 &0.005&0.004& 0.006\\
  0.8&  1.000&    -0.005& -0.006& 0.008& 0.007& 0.007 &0.010&0.007& 0.011\\
  1.2&   1.000&    -0.032& -0.002& 0.001&0.002& 0.012 &0.018&0.017& 0.026\\
 1.6&  1.000&  -0.001  & -0.009& -0.012&-0.022& 0.048 & 0.075&0.057& 0.094\\
  2.0&   1.000&   0.016 & -0.007 & -0.012&-0.009& 0.342 &0.565&0.354& 0.580\\
  \hline
\end{tabular}

\begin{tablenotes}
      \small
      \item Note: The DGP considered in this table is: $Y_{it} = \beta X_{it} + \alpha_i + \sqrt{ 1+X_{it}^2} \cdot \epsilon_{it}$, $X_{it}\sim i.i.d \text{ } \mathcal{N}(0,1)\cdot \mathbf{1}\{|X_{it}|\leq 2\}$, $\alpha_i \sim i.i.d \text{ } \mathcal{N}(0,1)$, $\epsilon_{it}\sim T(3)$, so $\beta_{0.5}(x) = 1$.
    \end{tablenotes}
\end{threeparttable}

\end{center}
\label{default}
\end{table}%

\begin{table}[htp]
\small
\begin{center}
\begin{threeparttable}
\caption{Biases and MSEs of the Estimators at $\tau=0.75$ with Gaussian Errors.}
\begin{tabular}{cc|cccc|cccc}
\hline
  &  & \multicolumn{4}{c|}{Bias} &  \multicolumn{4}{c}{MSE} \\
  $x$ &   $\beta_{\tau}(x)$  & $\hat{\beta}_{\tau}$ &  $\hat{\beta}_{\tau}^{bc}$  & $\check{\beta}_{\tau}$ &  $\check{\beta}_{\tau}^{bc}$ & $\hat{\beta}_{\tau}$ &  $\hat{\beta}_{\tau}^{bc}$  & $\check{\beta}_{\tau}$ &  $\check{\beta}^{bc}_{\tau}$ \\
\hline
 -2.0&  0.397 & 0.744   & -0.147 & 0.751& 0.112&0.828  &0.599&0.863&0.518 \\
 -1.6&  0.428&   0.194 &  -0.034& 0.194& 0.027& 0.084 & 0.077&0.088& 0.096\\
  -1.2&   0.482&  0.087  & 0.033& 0.089& 0.041& 0.022 &0.024&0.022& 0.028\\
 -0.8&  0.579 &   0.068 & 0.043&  0.069 & 0.045& 0.011 & 0.011&0.011& 0.014\\
  -0.4&   0.750&   0.054 & 0.042& 0.054 & 0.044& 0.006 &0.007&0.007& 0.008\\
  0.0&  1.000 &   -0.001 & -0.001& -0.001& -0.001&  0.003&0.004&0.003& 0.005\\
  0.4&   1.251&    -0.058& -0.049 & -0.058& -0.048 & 0.007 &0.008&0.008& 0.009\\
  0.8&  1.421&    -0.072&  -0.048& -0.071& -0.046& 0.011 &0.011&0.011& 0.012\\
  1.2&   1.518&    -0.087& -0.035& -0.090& -0.041& 0.022 &0.022&0.023& 0.027\\
 1.6&  1.572&  -0.195  & -0.030& -0.197& -0.024& 0.083 & 0.071&0.087& 0.085\\
  2.0&   1.603&   -0.737 &0.217 & -0.737& -0.053& 0.868 &0.696&0.875& 0.497\\
  \hline
\end{tabular}
\begin{tablenotes}
      \small
      \item Note: The DGP considered in this table is: $Y_{it} = \beta X_{it} + \alpha_i + \sqrt{ 1+X_{it}^2} \cdot \epsilon_{it}$, $X_{it}\sim i.i.d \text{ } \mathcal{N}(0,1)\cdot \mathbf{1}\{|X_{it}|\leq 2\}$, $\alpha_i,\epsilon_{it}\sim i.i.d \text{ } \mathcal{N}(0,1)$, so $\beta_{\tau}(x) = 1+ \Phi^{-1}(\tau)\cdot x/\sqrt{1+x^2}$.
    \end{tablenotes}
\end{threeparttable}
\end{center}
\label{default}
\end{table}%

\begin{table}[htp]
\small
\begin{center}
\begin{threeparttable}
\caption{Biases and MSEs of the Estimators at $\tau=0.75$ with $T(3)$ Errors.}
\begin{tabular}{cc|cccc|cccc}
\hline
  &  & \multicolumn{4}{c|}{Bias} &  \multicolumn{4}{c}{MSE} \\
  $x$ &   $\beta_{\tau}(x)$  & $\hat{\beta}_{\tau}$ &  $\hat{\beta}_{\tau}^{bc}$  & $\check{\beta}_{\tau}$ &  $\check{\beta}_{\tau}^{bc}$ & $\hat{\beta}_{\tau}$ &  $\hat{\beta}_{\tau}^{bc}$  & $\check{\beta}_{\tau}$ &  $\check{\beta}^{bc}_{\tau}$ \\
\hline
 -2.0&  0.316 & 0.936   & -0.222 & 0.960& 0.264&1.311  &0.932&1.408&0.832 \\
 -1.6&  0.351&   0.242 &  0.019& 0.258& 0.058& 0.130 & 0.117&0.139& 0.133\\
  -1.2&   0.412&  0.110  & 0.037& 0.120& 0.044& 0.032 &0.033&0.036& 0.040\\
 -0.8&  0.522 &   0.086 & 0.051&  0.088 & 0.048& 0.017 & 0.016&0.017& 0.018\\
  -0.4&   0.716&   0.065 & 0.049& 0.066 & 0.050& 0.009 &0.010&0.010& 0.012\\
  0.0&  1.000 &   -0.003 & -0.006& -0.000& -0.001&  0.005&0.007&0.005& 0.008\\
  0.4&   1.294&    -0.070& -0.057 & -0.067& -0.055 & 0.010 &0.011&0.010& 0.012\\
  0.8&  1.478&    -0.087&  -0.052& -0.083& -0.047& 0.016 &0.015&0.016& 0.017\\
  1.2&   1.588&    -0.105& -0.034& -0.104& -0.038& 0.032 &0.033&0.034& 0.041\\
 1.6&  1.649&  -0.257  & -0.059& -0.242& -0.030& 0.144 & 0.135&0.136& 0.147\\
  2.0&   1.684&   -0.980 &0.155 & -0.985& -0.302& 1.434 &0.918&1.475& 0.796\\
  \hline
\end{tabular}
\begin{tablenotes}
      \small
      \item Note: The DGP considered in this table is: $Y_{it} = \beta X_{it} + \alpha_i + \sqrt{ 1+X_{it}^2} \cdot \epsilon_{it}$, $X_{it}\sim i.i.d \text{ } \mathcal{N}(0,1)\cdot \mathbf{1}\{|X_{it}|\leq 2\}$, $\alpha_i\sim i.i.d \text{ } \mathcal{N}(0,1)$, $\epsilon_{it}\sim T(3)$, so $\beta_{0.75}(x) = 1+ 0.765\cdot x/\sqrt{1+x^2}$.
    \end{tablenotes}
\end{threeparttable}
\end{center}
\label{default}
\end{table}%


\newpage
\nocite{*}
\begin{spacing}{1.0}
\bibliographystyle{chicago}
\bibliography{llsqr}
\end{spacing}

\end{document}